\documentclass[aps,prd,onecolumn,nofootinbib]{revtex4-2}

\usepackage[utf8]{inputenc}
\usepackage[T1]{fontenc}
\usepackage{lmodern}
\usepackage{amsmath,amssymb}
\usepackage{physics}
\usepackage{graphicx}
\usepackage{hyperref}
\usepackage{xcolor}
\usepackage{bm}
\usepackage{microtype}
\usepackage{comment}

\usepackage[normalem]{ulem}

\usepackage{amsmath}
\usepackage{amsfonts}
\usepackage{amssymb}
\usepackage{graphicx}%
\usepackage{orcidlink}

\providecommand{\U}[1]{\protect\rule{.1in}{.1in}}
\newtheorem{theorem}{Theorem}

\newtheorem{claim}[theorem]{Claim}

\newtheorem{definition}[theorem]{Definition}

\newtheorem{lemma}[theorem]{Lemma}

\newenvironment{proof}[1][Proof]{\noindent\textbf{#1.} }{\ \rule{0.5em}{0.5em}}

\begin{document}

\title{A  strongly hyperbolic viscous relativistic hydrodynamics theory with first-order charge current}

% --- Autor 1 ---
\author{Federico Schianchi\protect\orcidlink{0000-0001-7646-5988}}
\thanks{Email: \href{mailto:federico.schianchi@uib.cat}{federico.schianchi@uib.cat}}

\affiliation{Departament de Física \& Institute of
Applied Computing and Community Code (IAC3), Universitat de les Illes Balears, Palma de Mallorca, Baleares E-07122, Spain}
%
% --- Autor 2 ---
\author{Fernando Abalos\protect\orcidlink{0000-0001-7863-3711}}
\thanks{Email: \href{mailto:j.abalos@uib.es}{j.abalos@uib.es}} % -> \dagger

\affiliation{Departament de Física \& Institute of
Applied Computing and Community Code (IAC3), Universitat de les Illes Balears, Palma de Mallorca, Baleares E-07122, Spain}

\date{\today}

\begin{abstract}
We extend the first order dissipative relativistic hydrodynamics model of Bemfica-Disconzi-Noronha-Kovtun (BDNK)  in order to include the charge number current in full first order expansion with out-of-equilibrium contribution proportional to the evolution equation of the ideal fluid. We obtain a fully second order system of partial differential equation (PDE) that can be casted in a fully conservative way. We analyze the hyperbolicity of this model coupled to Einstein field equations using a newly developed technique that allows for hyperbolicity studies without explicit first order reduction. Furthermore, we identify a frame choice where our formulation is causal, stable and with positive entropy generation for a wide range of equations of state (EoS). Our analysis shows that the inclusion of an out-of-equilibrium correction to the charge current, plays an important role in guaranteeing the strong hyperbolicity and, therefore, the well-posedness of the system. 
\end{abstract}

\maketitle

\section{Introduction}
Relativistic hydrodynamics plays a fundamental role in several branches of physics. However, most of the work produced so far has been limited to the ideal fluid case, neglecting any dissipative effect. Dissipative effects have been suggested to play a role in different applications, from nuclear physics (heavy ion collision \cite{Gale:2013da,Romatschke:2017ejr}), to astrophysics (compact objects accretion \cite{Abramowicz:2011xu,MoeenMoghaddas:2016dng}, supernovae, binary neutron star mergers \cite{Chabanov:2023abq,Alford:2017rxf,Most:2021zvc,Shibata:2017jyf,Chabanov:2023blf}) and cosmology. Attempts to go beyond the ideal fluid started in the last century by Eckart \cite{Eckart:1940te} and Landau and Lifshitz \cite{landau1987fluid}. However, the problem remained still open, as both these theories resulted to be acausal and unstable. This was due to the parabolic nature of the equations of motions they were giving rise to.
The issue was later fixed by the M\"{u}ller-Israel-Stewart (MIS) formulation \cite{Muller:1967zza,Israel:1976tn,Israel:1979wp}. In this theory, the dissipative quantities (bulk viscosity and heat conduction, etc.) are promoted to additional degrees of freedom, controlled by second order evolution equations that make them relax to the values predicted by the theories of Eckart and Landau-Lifshiz. Relaxation is then controlled by specific time parameters introduced ad hoc. The result is a strongly hyperbolic (and therefore well-posed), causal and stable theory. Despite being a very phenomenological model, MIS constituted the state of the art in dissipative relativistic hydrodynamics simulations for decades.
Unfortunately, in the 1990s, the MIS formulation was proven to encounter difficulties in presence of strong shocks \cite{Olson:1990pnz,Geroch:1991}. Moreover, its need for extra evolved variables and stiff relaxation equations introduces significant difficulties in numerical simulations. 

In the last years, a new effective-field-theory (EFT) for relativistic dissipative hydrodynamics close to equilibrium has been proposed by Bemfica, Disconzi, Noronha and Kovtun (BDNK) \cite{Bemfica:2017wps,Kovtun:2019hdm,Bemfica:2019knx,Bemfica:2020zjp}. This theory is still based on the pure conservation of stress-energy tensor and charge current as only evolution equations, without the need of adding extra evolved variables. Dissipative effects are rather included as first order corrections in the fundamental variable gradients to the ideal conserved tensors, giving rise to second order evolution equations. If we consider the most general first-order gradient expansion, we can see that both Eckart and Landau-Lifshitz theories can be obtained from BDNK as a particular choice of expansion parameters, also called choice of frame. 

In the BDNK theory the stress-energy tensor and charge current are expressed as
\begin{align}
    T^{\mu\nu} &= (\epsilon+\mathcal{A}) u^\mu u^\nu + (P+\Pi)\Delta^{\mu\nu} + u^\mu Q^\nu + u^\nu Q^\mu + \mathcal{T}^{\mu\nu} \label{eq:T_1}\\
    J^\mu &= (n+N) u^\mu + \mathcal{J}^\mu \label{eq:J_1}
\end{align}
where $u^\mu$ is the normalized four-velocity of the fluid, satisfying $u^\mu u_\mu = -1$, and $Q^\mu u_\mu = \mathcal{J}^\mu u_\mu = \mathcal{T}^{\mu\nu}u_\mu=0$. The tensor $\Delta^{\mu\nu} \equiv g^{\mu\nu} + u^\mu u^\nu$ acts as the spatial projector orthogonal to $u^\mu$, and $\epsilon$, $n$, and $P$ denote the energy density, charge number density, and pressure of the fluid at local thermal equilibrium, respectively, i.e. the ones of an ideal fluid.

The non-ideal terms $\mathcal{A}$, $\Pi$, $Q^{\mu}$, $\mathcal{T}^{\mu\nu}$, $N$ and $\mathcal{J^{\mu}}$ are expressed as first order expansions in the hydrodynamics gradients in the most general way by the  constitutive relations  (see \cite{Kovtun:2012rj} for a detailed exposition):

\begin{align}
    \mathcal{A} &= \epsilon_1 \frac{u^\alpha \nabla_\alpha T}{T} + \epsilon_2 \nabla_\alpha u^\alpha + \epsilon_3 u^\alpha \nabla_\alpha \left( \frac{\mu}{T} \right) + \mathcal{O}(\nabla^2) \label{eq:constitutive_relation_1}\\
    \Pi &= \pi_1 \frac{u^\alpha \nabla_\alpha T}{T} + \pi_2 \nabla_\alpha u^\alpha + \pi_3 u^\alpha \nabla_\alpha \left( \frac{\mu}{T} \right) + \mathcal{O}(\nabla^2) \label{eq:constitutive_relation_2} \\
    Q^\mu &= \theta_1 \frac{\Delta^{\mu\alpha} \nabla_\alpha T}{T} + \theta_2 u^\alpha \nabla_\alpha u^\mu + \theta_3 \Delta^{\mu\alpha} \nabla_\alpha \left( \frac{\mu}{T} \right) + \mathcal{O}(\nabla^2) \label{eq:constitutive_relation_3}\\
    \mathcal{T}^{\mu\nu} &= -2\eta \sigma^{\mu\nu} + \mathcal{O}(\nabla^2) \label{eq:constitutive_relation_4} \\
    N &= \nu_1 \frac{u^\alpha \nabla_\alpha T}{T} + \nu_2 \nabla_\alpha u^\alpha + \nu_3 u^\alpha \nabla_\alpha \left( \frac{\mu}{T} \right) + \mathcal{O}(\nabla^2) \label{eq:constitutive_relation_5} \\
    \mathcal{J}^\mu &= \gamma_1 \frac{\Delta^{\mu\alpha} \nabla_\alpha T}{T} + \gamma_2 u^\alpha \nabla_\alpha u^\mu + \gamma_3 \Delta^{\mu\alpha} \nabla_\alpha \left( \frac{\mu}{T} \right) + \mathcal{O}(\nabla^2) \label{eq:constitutive_relation_6}
\end{align}
where $\epsilon_i$, $\pi_i$, $\theta_i$, $\nu_i$ and $\gamma_i$ are called constitutive parameters, $\eta$ is the shear viscosity, $T$ the fluid`s temperature and $\mu$ the charge chemical potential. In general, the constitutive parameters, as well as $\epsilon$, $n$ and $P$, are functions of the fundamental variables $T$, $\mu$. Moreover, we define the shear tensor as
\begin{equation}
    \sigma^{\mu \nu} \equiv \left( \Delta^{\mu (\alpha} \Delta^{\beta)\nu} - \frac{1}{3} \Delta^{\mu\nu} \Delta^{\alpha \beta} \right) \nabla_\alpha u_\beta \label{Eq:sigma_ab}
\end{equation}
With this ansatz and the constitutive relations Eqs. \eqref{eq:constitutive_relation_1}-\eqref{eq:constitutive_relation_6}, Eqs. \eqref{eq:T_1} and \eqref{eq:J_1} provide the most general expression of stress-energy tensor and charge current at first order expansion in gradients.

When the fluid is out of equilibrium, the quantities $u^{\mu}$, $T$, and $\mu$ are no longer determined by microscopic definitions. Instead, they should be regarded as auxiliary hydrodynamic variables used to parameterize the physical observables $T^{\mu\nu}$ and $J^{\mu}$.
In this situation, there is no unique or preferred choice for the set of variables $v \equiv (T, \mu, u^{\mu})$. Given one such set, an alternative definition 
$v' \equiv (T', \mu', u'^{\mu})$ 
can be introduced as
\begin{align}
T' &= T + a_1 \frac{u^\lambda \nabla_\lambda T}{T} + a_2 \nabla_{\lambda} u^{\lambda} + a_3 u^\lambda \nabla_{\lambda} \left( \frac{\mu}{T} \right) + O(\nabla^2), \label{eq:transform_T}\\
u'^{\mu} &= u^{\mu} + b_1\frac{\Delta^{\mu\nu} \nabla_{\nu} T}{T} +  b_2 u^\lambda \nabla_\lambda u^{\mu} + b_3 \Delta^{\mu\nu} \nabla_{\nu} \left( \frac{\mu}{T} \right) + O(\nabla^2), \label{eq:transform_u} \\
\mu' &= \mu + c_1 \frac{u^\lambda \nabla_\lambda T}{T} + c_2 \nabla_{\lambda} u^{\lambda} + c_3 u^\lambda \nabla_{\lambda} \left(  \frac{\mu}{T} \right) + O(\nabla^2), \label{eq:transform_mu}
\end{align}
where $a_i$, $b_i$, and $c_i$ are functions of the hydrodynamic variables $v$. 
This represents the most general covariant redefinition of the variables up to first order in derivatives.

At this point we impose that the physical observables $T^{\mu\nu}$ and $J^\mu$ remain unchanged by the transformation up to first order in the gradients:
\begin{align}
T'^{\mu\nu}(T', \mu', u'^{\mu}) &= T^{\mu\nu}(T, \mu, u^{\mu}) + O(\nabla^2), \\
J'^{\mu}(T', \mu', u'^{\mu}) &= J^{\mu}(T, \mu, u^{\mu}) + O(\nabla^2).
\end{align}
One can show that, in order to satisfy these conditions, we have to perform a transformation on the constitutive parameters defined by:
\begin{align}
\epsilon_i' (T, \mu) &= \epsilon_i (T, \mu) - \frac{\partial \epsilon}{\partial T} a_i - \frac{\partial \epsilon}{\partial \mu} c_i, \label{eq:frame_transformation_1}\\
\pi_i' (T, \mu) &= \pi_i (T, \mu) - \frac{\partial P}{\partial T} a_i - \frac{\partial P}{\partial \mu} c_i, \label{eq:frame_transformation_2}\\
\nu_i' (T, \mu) &= \nu_i (T, \mu) - \frac{\partial n}{\partial T} a_i - \frac{\partial n}{\partial \mu} c_i, \label{eq:frame_transformation_3}\\
\theta_i'(T, \mu) &= \theta_i (T, \mu) - (\epsilon + P) b_i, \label{eq:frame_transformation_4}\\
\gamma_i'(T, \mu) &= \gamma_i(T, \mu)- n b_i, \label{eq:frame_transformation_5}\\
\eta'(T, \mu)  &= \eta (T, \mu), \label{eq:frame_transformation_6}
\end{align}
Where we omitted the dependence of $n$, $\epsilon$, $P$, $a_i$, $b_i$ and $c_i$  on $T$ and $\mu$ for notational simplicity.
This shows that the choice of $(T, \mu, u^{\mu})$ is merely a convenient parametrization of the physical observables, valid up to second order in derivatives (see \cite{Kovtun:2012rj}).
The parameters $(a_i, b_i, c_i)$ characterize the frame transformation. Since their choice modifies the constitutive parameters $(\epsilon_i', \pi_i', \nu_i', \theta_i', \gamma_i')$, which in turn determine the quantities $\mathcal{A}$, $\Pi$, $Q^{\mu}$, $\sigma^{\mu\nu}$, $N$, and $\mathcal{J}^{\mu}$ (see Eqs. \eqref{eq:constitutive_relation_1}-\eqref{eq:constitutive_relation_6}), we identify the frame choice with the specification of the constitutive parameters. 
As an example, the Landau frame corresponds to the choice where the heat flux in the local rest frame vanishes ($\theta_i'=0$), this corresponds to a particular choice of fluid's velocity that can always be chosen with a transformation of $u^\mu$ defined by a tetrad of parameters $b_1$, $b_2$, $b_3$.

On the other hand, certain combinations of the constitutive parameters are frame invariant, meaning that they remain unchanged under any frame transformation. They are:
\begin{align}
    f_i &= \pi_i - \left.\frac{ \partial P}{ \partial \epsilon} \right|_n \epsilon_i - \left. \frac{\partial P}{ \partial n} \right|_\epsilon \nu_i, \label{eq:frame_inv_1}\\
    \ell_i &= \gamma_i - \frac{n}{\epsilon+P} \theta_i. \label{eq:frame_inv_2}
\end{align}
These invariants will be used to identify physical quantities, and we will return to them later.

In \cite{Kovtun:2019hdm,Hoult:2020eho}, the BDNK theory has been proven to be causal and stable for a suitable choice of the constitutive parameters. Later in \cite{Bemfica:2017wps,Bemfica:2020zjp}, the model is written using $\epsilon$ and $n$ as dynamic variables instead of $T$ and $\mu$. This choice, even if less direct from the thermodynamical point of view, makes the formulation more suitable for simulation purposes in astrophysics and high energy physics contexts. Moreover it has been proven to be strongly hyperbolic, causal, stable and with positive entropy generation, always under certain conditions on the constitutive parameters. In a dissipative fluid theory, strong hyperbolicity is essential to guarantee a well-posed initial-value problem (and hence the predictibility power of the theory, together with robust numerical evolutions, see \cite{KreLor89,GusKreOli95,SarTig12,Hilditch:2013sba,Aba17} for more details), causality is required to ensure finite signal speeds consistent with relativity, and stability together with positive entropy production provides basic physical consistency by ruling out unphysical growing modes and enforcing the second law of thermodynamics.

The first numerical tests have been performed, showing the theory is solvable with standard numerical techniques as explicit time integration with finite volumes/differences \cite{Pandya:2021ief,Pandya:2022sff,Bea:2023rru,Bhambure:2024axa,Fantini:2025gnm,Keeble:2025bkc,Shum:2025jnl,Bea:2025eov}, and the constraints coming from first order reduction are stable and propagated correctly \cite{Fantini:2025gnm}. Recently, a fully first order flux-conservative reduction has been developed \cite{Clarisse:2025lli}, providing the mathematical tools for correctly capturing discontinuous solutions . However, all the tests performed so far employed a simplified equation of state (EoS) that does not depend on the charge number, with the exception of \cite{Pandya:2022sff}, which employed an ideas gas EoS, but only in one-dimensional simulation domains.
Additionally, it has been proven that this theory can be derived from perturbative kinetic theory employing the Boltzmann equation \cite{Rocha:2022ind}. In \cite{Mullins:2023ott,Gavassino:2024vyu} it is shown that including fluctuations theory to first order hydrodynamics inevitably leads to the introduction of non-local constraints. However, this does not necessarily indicate a fundamental issue at the level of physical observables. Rather, in \cite{Dore:2021xqq,Sampaio:2025jtp}, it is show that such constraints provide guidance for ensuring compatibility between the hydrodynamic description and equilibrium statistical mechanics.

In this work we will focus on a formulation of BDNK in terms of $\epsilon$, $n$ and $u^\mu$, taking the model of \cite{Bemfica:2020zjp} as a starting point. 
%%%%%%%%%%%%%%%%%%%%%%%%%%%%%%%%%%%%%%%%5

In Ref.~\cite{Bemfica:2020zjp}, the evolution system has mixed differential
order. The equations for $\epsilon$ and $u^\mu$ are second order, while the
charge sector is described by the first-order conservation law 
\begin{equation}
    \nabla_\nu(nu^\nu)=0 . \label{current_1}
\end{equation}

The hyperbolicity analysis is then carried out after reducing the second-order
equations to first order with a constraint-adapted reduction, while leaving the charge equation in its original
first-order form. The constraint-adapted order reduction is designed to impose $\nabla_\mu J^\mu=0$, allowing to eliminate $u^\mu \nabla_\mu n$ from the system, eliminating one degree of freedom.
Furthermore, the normalization condition $u^\mu u_\mu=-1$

is used algebraically in the hyperbolicity calculation for determining the dimension of one of the eigenspaces, despite treting $u^\mu$ as four degrees of freedom. This is a delicate
point, because strong hyperbolicity is a property of the chosen evolution
system. One may choose an evolution system that contains constraint additions,
or one may construct a reduced system in which some constraints have already
been solved. However, once the evolution equations have been fixed, the
principal symbol of that system must be analyzed without imposing additional
constraints algebraically. In \cite{Bemfica:2020zjp} this procedure is justified by by arguing that the constraint $u^\mu u_\mu=-1$ is preserved by evolution due to the existence of an augmented fully second order system which admits analytic solutions. 
Although this argument seems mathematically valid on a continuous level, it is not the standard approach for proving well-posedness; moreover, violations of the constraints are inevitable at the discrete numerical level. Therefore, the modes
associated with the sector that violates the constraints must be included in the
hyperbolicity analysis.
 
Thus, the hyperbolicity result of
Ref.~\cite{Bemfica:2020zjp} should be revisited at the level of the full
unconstrained evolution system in order to improve the numerical stability. A related interesting open question in the analysis is whether the normalization condition
$u^\mu u_\mu=-1$ is preserved by the proposed evolution system. For this reason, in our model, we enforce the normalization of $u^\mu$ algebraically and only write evolution equations for its three actual degrees of freedom.

%%%%%%%%%%%%%%%%%%%%%%%%%%%%%%%%%%%%%%%%%%%%%%%5

A related point concerns the causality analysis of \cite{Bemfica:2020zjp} where the first order equation for  $n$, Eq. \eqref{current_1}, is promoted to second order by taking an additional
derivative of the charge current conservation law: 
\begin{equation}
    u^\mu\nabla_\mu \left[ \nabla_\nu (nu^\nu) \right]=0. \label{eq:charge_conservation_Bemfica}
\end{equation}

%%%%%%%%%%%%%%%%%%%%%%%%%%%%%%%%%%%%%%%%%%%%%

Analyzing the system with the new method of \cite{Abalos:2026kim}, developed for second order covariant equations, we find that this multiplicity does not match the dimension of the corresponding kernel, making the system only weakly hyperbolic for the all-derivatives reduction, unless an additional condition on the constitutive parameters is added (See Section \ref{subsection:hyperbolicity_and_causality}). 
Even if this additional condition is imposed, Eq.~\eqref{eq:charge_conservation_Bemfica} remains different in structure from the other evolution equations in \cite{Bemfica:2020zjp}, since it is not written in conservative form.

This means that it is not possible to define a unique weak formulation in the presence of discontinuities, limiting their applicability in real physical scenarios.

The previous  considerations  motivate us to investigate a more general frame including first order terms in the charge current, in order to have a set of naturally second order evolution equations expressed in a conservation form. Stability and causality of fully first order charged fluids has been already investigated in \cite{AbboudSperanzaNoronha2024}, however, such article does not present an  hyperbolicity analysis. Indeed, in order to preserve the normalization $u^\mu u_\mu=-1$, an extra evolution equation not in balance form is added. When this extra equation is included in the principal symbol an extra zero degenerate characteristic mode appears. Its presence can make  the system weakly hyperbolic, so a detailed analysis should be made.  Moreover, the conditions found in \cite{AbboudSperanzaNoronha2024} are expressed in terms of completely generic transport coefficients, while in this work we restrict to the on-shell EFT constitutive-relations, namely choices that make the non-equilibrium terms proportional to the evolution equation of the ideal fluid, as done in \cite{Bemfica:2020zjp}. Our BDNK formulation is presented in the following section, where all conditions for causality, stability and entropy production are discussed.

One could be tempted to just keep a first-order charge advection equation as in the ideal fluid scenario, and just evolve a mixed system of first and second order equations. Despite this being mathematically possible, it would introduce two challenges. First, when solving the system numerically, we could not apply a conservative-to-primitive scheme, as the one used in \cite{Shum:2025jnl}, since the linear algebraic system defining the variable mapping is underdetermined for this formulation. Second, we could not directly study its hyperbolicity using the definition in \cite{Abalos:2026kim}. Instead, one must first reduce the system to first order, via differential or pseudo-differential reduction techniques, and then apply a different but equivalent definition (see \cite{Reu04,NagOrtReu04,GunGar05,Abalos:2026kim}). Such reductions increase the algebraic complexity and, since it is a different system to the study in this paper, may lead to different restrictions on the constitutive parameters.
This motivates our use of the definition of \cite{Abalos:2026kim}, developed within the framework of matrix pencil theory, which is particularly useful for our hydrodynamic formulation. It enables the analysis of hyperbolicity without an explicit first-order reduction, greatly simplifying the calculations, as we will show below. More generally, matrix pencil techniques also play an important role in the study of hyperbolicity for PDEs with differential constraints, for further details we refer the reader to \cite{AbaReu18,Aba22,AbaReuHil24}. One advantage of our system is there is no zero characteristic velocities, all characterisitc modes propagate as waves. The zero velocities in general can bring numerical problems at the boundaries in numerical simulations.

It is important to clarify that, despite appearances, our formulation is still not fully conservative. Gradient-based constitutive relations, indeed, lead to derivatives of the fluid variables in the source (non-principal) terms. A fully conservative formulation would require a first order reduction in both space and time as done in \cite{Shum:2025jnl,Clarisse:2025lli}. However, such reduction preserves the hyperbolicity properties, see  \cite{GunGar05} for more details. In addition, the reduction to first order introduces differential constraints that should be preserved by evolution, this preservation is guaranteed as it is explained in \cite{GunGar05}. In this article we do not treat this issue, but constraint advection and damping techniques can be employed to enhance the accuracy in numerical simulations, as done in \cite{Clarisse:2025lli}. The so called trace-fixed particle frame model presented in \cite{Salazar:2024wly} also provides a strongly hyperbolic, fully first order, conservative model for charged fluids. However this comes at the cost of losing the frame invariance. In other words, this model constitute a very specific frame choice of the constitutive relations \eqref{eq:constitutive_relation_1}-\eqref{eq:constitutive_relation_6}. The model we are proposing in this article, on the contrary, keeps the freedom of specifying several frame parameters.

This paper is structured as follows. In section \ref{section:fluid_model} we introduce our model, identical to \cite{Bemfica:2020zjp} in the stress-energy tensor, but different in the charge current. We study its causality, hyperbolicity, stability and entropy production. Finally, we obtain a frame where all the desired properties are satisfied. In Section \ref{section:coupling_with_gravity} we show how the hyperbolicity properties are preserved also when the system is coupled to Einstein field equations. In Appendix \ref{appendix:non_normalized_u_hyperbolicity} we show how the hyperbolicity of our model is preserved also without assuming the normalization of the fluid's 4-velocity $u^\mu$.  In Appendix \ref{appendix:betaniszero} we investigate the case $\beta_n=0$, which makes the $T^{\mu\nu}$ and $J^\mu$ sectors of the hydrodynamics evolution equations weakly coupled, and where the strong hyperbolicity discussion is simplified.  We will use natural units throughout this article, namely, $c=G=M_\odot=1$.

\section{New  Hydrodynamic formulation, hyperbolicity, causality, entropy and stability} \label{section:fluid_model}

In this section, we introduce the family of first-order corrections to the stress--energy tensor and the charge current considered in this work. 
We explain the motivation for this choice and state the conditions ensuring \textit{strong hyperbolicity} and \textit{causality} (Lemma~\ref{Lemma:hip:cau}), \textit{non negative entropy production} (Lemma~\ref{Lemma:entropy}), and finally \textit{linear stability} (Lemma~\ref{stability}). 
Since these conditions involve a large number of inequalities, their analytical factorization is extremely challenging, if not impossible. 
For this reason, in Subsection~\ref{subsection:choice_of_frame}, we introduce a physically motivated frame parametrization and present Claim~\ref{claim}. 
This claim provides numerical ranges for the parameters that satisfy all the conditions in Lemmas~\ref{Lemma:hip:cau}, \ref{Lemma:entropy}, and \ref{stability}, showing the existence of solutions fulfilling all requirements and providing a basis for a possible numerical implementation of our formulation.

\subsection{Setting and evolution equations}

We describe the system in terms of $\epsilon$, $n$ and $u^\mu$ as fundamental variables. Our approach for $J^\mu$ and $T^{\mu\nu}$ include the following particular choice for the first-order terms in the gradient expansion 
\begin{align}
    \mathcal{A} &= \tau_\varepsilon \left[ u^\lambda\nabla_\lambda\epsilon + (\epsilon+P)\nabla_\lambda u^\lambda \right] + \mathcal{O}(\nabla^2) \label{eq:A}\\
    \Pi &= -\zeta \nabla_\lambda u^\lambda + \tau_P \left[ u^\lambda \nabla_\lambda\epsilon + (\epsilon+P)\nabla_\lambda u^\lambda \right] + \mathcal{O}(\nabla^2) \label{eq:Pi} \\
    Q^\mu &= \tau_Q(\epsilon+P) u^\lambda \nabla_\lambda u^\mu + \beta_\epsilon \Delta^{\mu\lambda} \nabla_\lambda \epsilon + \beta_n \Delta^{\mu\lambda} \nabla_\lambda n + \mathcal{O}(\nabla^2) \label{eq:Q}\\
    \mathcal{T}^{\mu\nu} &= -2\eta \sigma^{\mu\nu} + \mathcal{O}(\nabla^2)  \\
    N &= \tau_n (n \nabla_\mu u^\mu + u^\mu\nabla_\mu n) + \mathcal{O}(\nabla^2)  \label{eq:N}\\
    \mathcal{J}^\mu &= \tau_\mathcal{J} n u^\lambda \nabla_\lambda u^\mu + \lambda_\epsilon \Delta^{\mu \nu} \nabla_\nu \epsilon + \lambda_n \Delta^{\mu \nu} \nabla_\nu n + \mathcal{O}(\nabla^2) \label{eq:J}
\end{align}
where $\mathcal{A}$, $\Pi$ and $Q^\mu$ are the same as in \cite{Bemfica:2020zjp}. Here $\zeta$ is the bulk viscosity, and we have a set of parameters $\tau_\epsilon, \tau_Q, \tau_P, \tau_n, \tau_\mathcal{J}, \beta_\epsilon, \beta_n,  \lambda_\epsilon, \lambda_n$ that can be freely chosen. They are in general function of the main variables of the system. We restrict ourselves to $\lambda_{n,\epsilon}$ and $\beta_{n,\epsilon}$ defined as 
\begin{align}
    \lambda_\epsilon &\equiv -\sigma_0 T \left. \frac{\partial(\mu/T)}{\partial \epsilon} \right|_n + \tau_\mathcal{J} \frac{n}{\epsilon+P}\left. \frac{\partial P}{\partial \epsilon} \right|_n \label{lambda_1} \\
    \lambda_n &\equiv -\sigma_0 T \left. \frac{\partial(\mu/T)}{\partial n} \right|_\epsilon + \tau_\mathcal{J}  \frac{n}{\epsilon+P}\left. \frac{\partial P}{\partial n} \right|_\epsilon \label{lambda_2} \\
    \beta_\epsilon &\equiv \sigma T \frac{\epsilon+P}{n} \left. \frac{\partial(\mu/T)}{\partial \epsilon} \right|_n + \tau_Q \left. \frac{\partial P}{\partial \epsilon} \right|_n \label{beta_1}\\
    \beta_n &\equiv \sigma T \frac{\epsilon+P}{n} \left. \frac{\partial(\mu/T)}{\partial n} \right|_\epsilon + \tau_Q \left. \frac{\partial P}{\partial n} \right|_\epsilon \label{beta_2}
\end{align}
where $\sigma_0$ and $\sigma$ are the charge number diffusivity and the heat conductivity, respectively. Here the derivatives with respect to $n$ and $\epsilon$ are taken at thermodynamical equilibrium and they are determined by the equation of state of the fluid. 
This choice of $\lambda_{n,\epsilon}$ and $\beta_{n,\epsilon}$ is motivated by the fact we want $Q^\mu$ and $\mathcal{J}^\mu$ to satisfy the following expressions 
\begin{align}
    \mathcal{J}^\mu &= -\sigma_0 T \Delta^{\mu \lambda}\nabla_\lambda\left( \frac{\mu}{T}\right) + \tau_\mathcal{J}  \frac{n}{\epsilon+P}[(\epsilon+P) u^\lambda \nabla_\lambda u^\mu + \Delta^{\mu\lambda}\nabla_\lambda P] + \mathcal{O} (\nabla^2)\\
    Q^{\mu} &= \sigma T \frac{(\epsilon + P)}{n} 
\Delta^{\mu\lambda} \nabla_{\lambda} \left( \frac{\mu}{T} \right)
+ \tau_{Q} \left[ (\epsilon + P) u^{\lambda} \nabla_{\lambda} u^{\mu}
+ \Delta^{\mu\lambda} \nabla_{\lambda} P \right] + \mathcal{O} (\nabla^2)  \label{Eq:Q_1}  
\end{align}
There is one key difference between the $Q^\mu$ of this formulation and the one of \cite{Bemfica:2020zjp}. In this formulation $Q^\mu$ represents the total flux of energy normal to $u^\mu$, hence, it contains both the energy diffused by pure heat conduction (which follows the Eckart law), and the energy advected by particle's diffusion (proportional to $\mathcal{J}^\mu$). Since both terms are proportional to $\nabla_\lambda(\frac{\mu}{T})$, they can be factorized into a single coefficient $\sigma=\sigma_E-\sigma_0$, with $\sigma_E>0$ being the heat conductivity of the Eckart law. Notice that in this model, it is possible to have $\sigma<0$ when $Q^\mu$ is dominated by $\sigma_0$, i.e. the energy advected by particles.

The terms $\nabla_\lambda\left( \frac{\mu}{T}\right)$ and $\nabla_\lambda P$ can be rewritten as derivatives of the fundamental variables $\epsilon$ and $n$ if needed. To completely determine the evolution equations, an equation of state should be prescribed i.e. the expression for $P$, $\mu$ and $T$ as a function of the fundamental variables $(\epsilon, n)$.  Then the first order corrections terms $\mathcal{A}, \Pi, Q^\mu, \mathcal{T}^{\mu\nu}, N, \mathcal{J}^\mu$  are described by a set of nine parameters, three of them are physical $\eta, \zeta, \sigma + \sigma_0$ i.e. their value comes from some physical measurement at equilibrium, and five of them $\tau_\epsilon, \tau_Q, \tau_P, \tau_n, \tau_\mathcal{J}$ are 
 frame parameters, since they define the choice of frame.
 One can verify that $\eta$, $\zeta$, and $\sigma+\sigma_0$ are physical quantities, since they can be written in terms of the frame invariants \eqref{eq:frame_inv_1}-\eqref{eq:frame_inv_2} and the main variables, and therefore their functional form is invariant under frame transformations. In particular, for $\sigma+\sigma_0$ one finds
\begin{equation}
    \sigma+\sigma_0= \frac{n}{\epsilon+P}\ell_1 - \frac{1}{T} \ell_3.
\end{equation}
On the other hand, we notice that the most general first order corrections \eqref{eq:T_1}-\eqref{eq:constitutive_relation_6} include 16 parameters, so we  are considering a subfamily of this general one.

Finally, the set of evolution equations follow from the  conservation equations 
\begin{align}
    \nabla_\mu T^{\mu\nu} &= 0, \label{Eq:D_T} \\
    \nabla_\mu J^{\mu} &=0. \label{Eq:D_J}
\end{align}
They are a set of five second order evolution equations for our five main variables $(\epsilon, n, u^\mu)$, assuming $u^\mu$ normalized. They can be written as 
\begin{align}
    \nabla_\mu J^\mu &= n \nabla_\mu u^\mu + u^\mu \nabla_\mu n + N \nabla_\mu u^\mu + u^\mu \nabla_\mu N + \nabla_\mu \mathcal{J}^\mu = 0, \label{eq:evolution1} \\
    -u_\nu \nabla_\mu T^{\mu\nu} &= u^\lambda \nabla_\lambda \epsilon + (\epsilon+\mathcal{A}+P+\Pi)\nabla_\lambda u^\lambda + u^\lambda \nabla_\lambda \mathcal{A} + \nabla_\lambda Q^\lambda + Q^\nu u^\lambda \nabla_\lambda u_\nu + 2\eta \sigma^{\mu\nu}\sigma_{\mu\nu} = 0, \label{eq:evolution2} \\
    \Delta^\beta_\nu \nabla_\mu T^{\mu\nu} &= (\epsilon+\mathcal{A}+P+\Pi)u^\lambda\nabla_\lambda u^\beta + \Delta^{\beta\lambda}\nabla_\lambda(P+\Pi)  - \Delta^\beta_\nu\nabla_\lambda(2\eta\sigma^{\lambda\nu}) \nonumber \\ 
    & + \Delta^\beta_\nu u^\lambda\nabla_\lambda Q^\nu + Q^\beta \nabla_\lambda u^\lambda + Q^\lambda\nabla_\lambda u^\beta = 0, \label{eq:evolution3} 
\end{align}
where Eq. \eqref{Eq:D_T} was projected along and orthogonal to  $u^\mu$. 

On the other hand, following  \cite{Bemfica:2020zjp} one can consider these equations, as 6 independent second-order evolution equation for the 6 independent  variables $(\epsilon, n, u^\mu)$, without assuming the normalization of $u^\mu$; and notice that when $u^\mu$ is normalized, equation \eqref{eq:evolution3} reduce to  only 3 independent equations, since it identically vanish when is  contracted with $u^\mu$. A discussion on the hyperbolicity of this enlarged system is presented in Appendix~\ref{appendix:non_normalized_u_hyperbolicity}.

Equations \eqref{Eq:D_T}, \eqref{Eq:D_J}, or equivalently \eqref{eq:evolution1}–\eqref{eq:evolution3}, can be interpreted as the ideal evolution equations plus small second-order corrections that vanish in the ideal limit:
\begin{align}           
   u^\lambda\nabla_\lambda\epsilon + (\epsilon+P)\nabla_\lambda u^\lambda + \mathcal{O}(\nabla^2) &= 0 \label{eq:IF1} \\
\Delta^{\mu\lambda}\nabla_\lambda P +(\epsilon+P) u^\lambda \nabla_\lambda u^\mu + \mathcal{O}(\nabla^2) &= 0 \label{eq:IF2} \\
    n \nabla_\mu u^\mu + u^\mu\nabla_\mu n + \mathcal{O}(\nabla^2) &= 0 \label{eq:IF3} 
\end{align}
This means that, for some initial data, the solution of these equations correspond to the ideal (zero order) dynamics plus perturbative corrections.
Equivalently, the ideal evolution equations hold modulo $\mathcal{O}(\nabla^2)$ terms, i.e., deviations from the ideal behavior appear only through second-order corrections. This last argument clarifies the motivation for proposing the first-order perturbations \eqref{eq:A}–\eqref{Eq:Q_1} to be proportional to the ideal-fluid equations, since some of these terms can now be regarded as second order in derivatives. This idea will be extremely useful in the calculation of the entropy production.

%%%%%%%%%%%%%%%%%%%%%%%%%%%%%%%%%%%%%%%%%%%%%%%%%%%%%%%%%%%%%%%%%%%%%%%%%%%%%%%%%%%%%%%%%%%%%%%%%%%%%%%%%%%%%%%
\subsection{Hyperbolicity and Causality}
\label{subsection:hyperbolicity_and_causality}
In this section, we follow \cite{Abalos:2026kim}  for the notation and for the definition of strong hyperbolicity, and we derive the conditions under which our system is strongly hyperbolic and causal.

The previous system
of equations can be written as:

\begin{equation}
\mathfrak{N}^{\mu\nu}\left(  U\right)  \partial_{\mu}\partial_{\nu}U+...=0
\label{eq_K_1}%
\end{equation}
where $U\equiv\left(  \epsilon,n,v^{\mu}\right)  $, $\mathfrak{N}^{\mu\nu}$
denote $5\times5$ matrices and the ellipsis indicates terms that do not
contain second derivatives of $U$ (i.e. lower order terms). Here, $v^{\mu}$
represents the three independent components of $u^{\mu}$ after normalization.

Consider a local foliation of the spacetime $M=\cup_{t\in(-T,T)}\Sigma_{t}$\ \ by
spacelike hypersurfaces $\Sigma_{t}$ with unit normal $n_{a}$ orthogonal to
$\Sigma_{t}$, Eq. \eqref{eq_K_1} describes a complete set of evolution equations
whenever $\det\left(  \mathfrak{N}^{\mu\nu}n_{\mu}n_{\nu}\right)  \neq0$. This
condition ensures that the second time derivatives can be isolated from the
remaining derivatives for all dynamical variables. In our case, we take
$n^{\mu}=u^{\mu},$ meaning that time derivatives correspond to derivatives
along the fluid four velocity $u^{\mu}$. We recall that if this condition is
satisfied for $n^{\mu}=u^{\mu}$, it will also hold for others $n^{\mu}$ in a
neighbourhood of $u^{\mu}$. Within our formulation $\det\left(  \mathfrak{N}%
^{\mu\nu}u_{\mu}u_{\nu}\right)  =-\left(  \rho\tau_{Q}\right)  ^{3}%
\tau_{\epsilon}\tau_{n}$ which is nonvanishing provided that $\rho\equiv \epsilon+P,\tau
_{Q},\tau_{\epsilon}$ and $\tau_{n}$ are nonzero; an assumption we adopt
throughout our analysis.

We include here an adapted definition of
strong hyperbolicity with direct application to our formulation. This definition is equivalent to applying the first-order definition to the PDE system obtained via a first-order reduction, in which all derivatives are introduced as independent variables.

\begin{definition}
We say that Eq. \eqref{eq_K_1} are strongly hyperbolic if the following
conditions are satisfied:

1) Real characteristic velocity condition: Defining $l_{\mu}\equiv-\lambda
u_{\mu}+k_{\mu}$  with $k_{\mu}$ orthogonal to $u_{\mu}$ and normalized
$\left\vert k\right\vert =1$ (these two assumptions on $k_{\mu}$ hold
throughout the paper), the characteristic velocities, equivalently, the roots
of the polynomial $p\left(  \lambda\right)  \equiv\det\left(  \mathfrak{N}%
^{\mu\nu}l_{\mu}l_{\nu}\right)  $ are real for all $k_{\mu}$.

2) Multiplicity condition: For all $k_{\mu}$, if $\lambda_{i}$ is a root of
$p\left(  \lambda\right)  $ with degeneracy $q_{i}$ then $q_{i}=\dim\left(
\ker\left(  \mathfrak{N}^{\mu\nu}\left.  l_{\mu}\right\vert _{\lambda_{i}%
}\left.  l_{\nu}\right\vert _{\lambda_{i}}\right)  \right)  $.

3) Uniformity condition: The solutions $\delta U$ of the characteristic
equation $\mathfrak{N}^{\mu\nu}\left(  U_{0}\right)  l_{\mu}l_{\nu}\delta
U=0$, where $U_{0}$ denotes a smooth background solution, depend continuously
on $k_{\mu}$ for all $k_{\mu}$.
\end{definition}

We notice that $S\left(  \lambda\right)  \equiv\mathfrak{N}^{\mu\nu}l_{\mu
}l_{\nu}$ is called the principal symbol, or equivalently, the second order matrix
pencil of the system, since it is a matrix quadratic in $\lambda$.

The characteristic velocities, given by the roots of the polynomial $p\left(
\lambda\right)  $, represent the propagation speeds of high frequency
perturbations. To ensure a physically meaningful theory, one must impose the
causality condition, which requires all characteristic velocities to remain
below the speed of light; in our units, smaller than one.

We now introduce some definitions that will be used throughout the paper:%
\begin{align}
A  &  \equiv-\rho\tau_{\epsilon}\tau_{n}\tau_{Q}\\
B  &  \equiv V\tau_{\epsilon}\tau_{n}+(n\beta_{n}+\rho\beta_{\epsilon}%
)\tau_{\epsilon}\tau_{n}+\rho\tau_{Q}(\tau_{n}\tau_{P}-\lambda_{n}%
\tau_{\epsilon})+n\beta_{n}\tau_{\mathcal{J}}\tau_{\epsilon}\\
C  &  \equiv V(\beta_{\epsilon}\tau_{n}+\lambda_{n}\tau_{\epsilon}%
)-(n\beta_{n}+\rho\beta_{\epsilon})\tau_{n}\tau_{P}+\rho\lambda_{n}\tau
_{P}\tau_{Q}+\rho(\beta_{\epsilon}\lambda_{n}-\beta_{n}\lambda_{\epsilon}%
)\tau_{\epsilon}-n\beta_{n}\tau_{\mathcal{J}}\tau_{P}\\
D  &  \equiv(\rho\tau_{P}-V)(\beta_{n}\lambda_{\epsilon}-\beta_{\epsilon
}\lambda_{n}) \label{eq:D}
\end{align}

where $V\equiv\zeta+\frac{4}{3}\eta$ and%

\[
\Delta\equiv18ABCD+B^{2}C^{2}-27A^{2}D^{2}-4AC^{3}-4B^{3}D.
\]

For simplicity, we assume that 
\[
0<\tau_{\epsilon}%
,\tau_{n},\tau_{Q},\tau_\mathcal{J},\tau_P \ \ \text{which implies} \ \ A<0.
\]

In the following lemma, we state sufficient conditions for ensuring strong
hyperbolicity and causality of our BDNK formulation.

\begin{lemma} \label{Lemma:hip:cau}
If the following inequalities are satisfied,
\begin{align}
0  &  <\frac{\eta}{\rho\tau_{Q}}\label{eq_con_1}\\
0  &  <\Delta\label{eq_con_2}\\
0  &  <B\label{eq_con_3}\\
C  &  <0\label{eq_con_4}\\
0  &  <D \label{eq_con_5}%
\end{align}
then Eqs. \eqref{eq:evolution1}-\eqref{eq:evolution3}, with dissipative terms expressed as Eqs. \eqref{eq:A}-\eqref{eq:J}, are
strongly hyperbolic. Moreover, if the additional conditions
\begin{align}
\frac{\eta}{\rho\tau_{Q}}  &  <1\label{eq_con_6}\\
3A+2B+C  &  <0\label{eq_con_7}\\
3A+B  &  <0\label{eq_con_8}\\
A+B+C+D  &  <0 \label{eq_con_9}%
\end{align}
are also satisfied, then the system is causal.
\end{lemma}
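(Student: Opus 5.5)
The plan is to compute the principal symbol $S(\lambda)=\mathfrak{N}^{\mu\nu}l_\mu l_\nu$ explicitly, with $l_\mu=-\lambda u_\mu+k_\mu$, in an orthonormal frame adapted to the background: $u^\mu$, $k^\mu$, and two unit vectors $e_A^\mu$ ($A=1,2$) orthogonal to both. The perturbation $\delta U=(\delta\epsilon,\delta n,\delta v)$ is split as $\delta v=\delta v_\parallel k+\delta v_A e_A$. Here $\delta v$ is the perturbation of the spatial (normalized) velocity, so $u_\mu\delta u^\mu=0$.

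In \eqref{eq:evolution1}--\eqref{eq:evolution3} I keep only the second-derivative terms, which come from $\mathcal{A},\Pi,Q^\mu,\mathcal{T}^{\mu\nu},N,\mathcal{J}^\mu$. Every second-derivative term is a contraction of $l_\mu l_\nu$ with $u$, $k$ or $\Delta$. By rotational invariance about $k$, the symbol is therefore block diagonal.

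\emph{Transverse (shear) block.} The equations \eqref{eq:evolution3} projected on $e_A$ act only on $\delta v_A$. Their symbol is the $2\times2$ scalar multiple of the identity
\[
\rho\tau_Q\lambda^2-\eta .
\]
The $\eta$ contribution comes from $\Delta^\beta_\nu\nabla_\lambda(2\eta\sigma^{\lambda\nu})$ with $k_\lambda k^\lambda=1$. The $\tau_Q$ contribution comes from $u^\lambda\nabla_\lambda Q^\nu$.

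\emph{Longitudinal block.} The remaining equations, namely \eqref{eq:evolution1}, \eqref{eq:evolution2} and the $k$-projection of \eqref{eq:evolution3}, act on $(\delta\epsilon,\delta n,\delta v_\parallel)$. They give a $3\times3$ matrix, quadratic in $\lambda$. I will check that, after multiplying the $\delta v_\parallel$ column by $\lambda$ (a parity rescaling), its determinant is a polynomial in $x=\lambda^2$. I then expect, up to a nonzero factor, the identity
\[
p(\lambda)\propto (\rho\tau_Q\lambda^2-\eta)^2\,\bigl(Ax^3+Bx^2+Cx+D\bigr).
\]
The leading coefficient should match $\det(\mathfrak{N}^{\mu\nu}u_\mu u_\nu)=-(\rho\tau_Q)^3\tau_\epsilon\tau_n$, and this gives a consistency check on $A$. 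Verifying this factorization, with exactly the stated $B,C,D$, is the main computational obstacle. It requires substituting \eqref{lambda_1}--\eqref{beta_2} and organizing the terms by powers of $\lambda$. I would do it with computer algebra and then confirm the constant term $D=(\rho\tau_P-V)(\beta_n\lambda_\epsilon-\beta_\epsilon\lambda_n)$ by hand at $\lambda=0$.

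\emph{Strong hyperbolicity.} Condition \eqref{eq_con_1} gives two real shear speeds $\pm\sqrt{\eta/(\rho\tau_Q)}$. Each has algebraic multiplicity $2$ and kernel $\mathrm{span}\{e_1,e_2\}$ of dimension $2$, so the multiplicity condition holds for them.

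For the cubic $c(x)=Ax^3+Bx^2+Cx+D$, the condition $\Delta>0$ gives three distinct real roots. With $A<0$, $B>0$, $C<0$, $D>0$, the coefficients of $c(x)$ show three sign changes. The coefficients of $c(-x)$ are $-A,B,-C,D$, all positive, so there are none. By Descartes' rule all three roots are therefore strictly positive, $0<x_1<x_2<x_3$. This yields six distinct real simple roots $\lambda=\pm\sqrt{x_i}$ of the longitudinal block. A simple root has a one-dimensional kernel, because the adjugate of the block is nonzero there.

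If some $x_i$ coincides with the shear root, the block structure makes the kernel dimensions add ($2+1=3$), matching the total multiplicity. So condition 2 of the definition holds in every case.

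For uniformity, the block entries depend on $k$ only through the frame $\{k,e_A\}$. The kernels are $\mathrm{span}\{e_A\}$, and a vector in $(\delta\epsilon,\delta n,\delta v_\parallel k)$ whose coefficients are independent of $k$ because the roots are simple. Both vary continuously with $k$ on the unit sphere orthogonal to $u$. The transverse kernel is the continuous plane $k^\perp\cap u^\perp$, so no choice of $e_A$ is needed.

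\emph{Causality.} I need every characteristic speed measured in the rest frame of $u$ to be below one. For the shear modes this is exactly \eqref{eq_con_6}. For the cubic, substitute $x=1+y$:
\[
c(1+y)=Ay^3+(3A+B)y^2+(3A+2B+C)y+(A+B+C+D).
\]
By \eqref{eq_con_7}--\eqref{eq_con_9} and $A<0$, all four coefficients are negative. Descartes' rule then gives no positive root in $y$, so every $x_i<1$.

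Hence all $|\lambda|<1$. Since $l_\mu$ ranges over all covectors with $k\perp u$, the characteristic cones lie inside the light cone of $g$. This is the causality statement.
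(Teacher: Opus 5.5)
Your proposal is correct and follows the paper's proof essentially step by step. It uses the same adapted basis and the same shear/sound block splitting, and the same factorization $p(\lambda)\propto(\rho\tau_Q\lambda^2-\eta)^2(Ax^3+Bx^2+Cx+D)$. Positivity of the roots comes from the sign pattern: your Descartes argument is the paper's observation that every term of $f(x)$ is positive for $x\le 0$. Causality uses the same shift $x=1+y$, and multiplicity and uniformity use the same simple-root/kernel counting.

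Your adjugate justification of the one-dimensional sound kernels is a small refinement of a point the paper leaves implicit. So is your remark that only the transverse plane needs to vary continuously with $k$, not a particular pair $e_1,e_2$, since such a pair cannot be chosen continuously on the whole sphere.
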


The conditions $0\leq\Delta,$ $0\leq B,$ $C\leq0$ and $0\leq D$ guarantee real
characteristic velocities. However, we impose the strict inequalities
$0<\Delta,$ $0<B,$ $C<0$, $0<D$ since they reduce algebraic degeneracy,
simplifying the analysis of the multiplicity and uniformity conditions when the
definition of strong hyperbolicity is checked. Consequently, to obtain necessary and
sufficient conditions, one should analyze the cases in which $\Delta$, $B$,
$C$ and $D$ may vanish.

\bigskip

\begin{proof}
We begin by showing that the real characteristic velocity condition in the
definition of strong hyperbolicity is satisfied under the assumption
\eqref{eq_con_1}-\eqref{eq_con_5}%
. Next, we show that \eqref{eq_con_6}-\eqref{eq_con_9} ensure the causality condition. Finally, we conclude the proof
of strong hyperbolicity by showing that \eqref{eq_con_1}-\eqref{eq_con_5} also guarantee the multiplicity
and uniformity conditions.

The principal part of the fluid evolution equations is%

\[
\mathfrak{N}^{\alpha\beta}l_{\alpha}l_{\beta}=\left[
\begin{array}
[c]{ccc}%
\lambda_{\epsilon}\Delta^{\alpha\beta} & \tau_{n}u^{\alpha}u^{\beta}%
+\lambda_{n}\Delta^{\alpha\beta} & n(\tau_{n}+\tau_{\mathcal{J}})\delta_{\nu
}^{(\alpha}u^{\beta)}\\
\tau_{\epsilon}u^{\alpha}u^{\beta}+\beta_{\epsilon}\Delta^{\alpha\beta} &
\beta_{n}\Delta^{\alpha\beta} & \rho(\tau_{\epsilon}+\tau_{Q})u^{(\alpha
}\delta_{\nu}^{\beta)}\\
(\tau_{P}+\beta_{\epsilon})u^{(\alpha}\Delta^{\beta)\mu} & \beta_{n}%
u^{(\alpha}\Delta^{\beta)\mu} & C_{\nu}^{\mu\alpha\beta}%
\end{array}
\right]  l_{\alpha}l_{\beta}%
\]
with $l_{\mu}=-\lambda u_{\mu}+k_{\mu}$ \ \ (here $u_\mu k^\mu=0$ and $\left\vert
k\right\vert ^{2}=1$), $\rho \equiv\epsilon+P$, and%
\begin{equation}
C_{\nu}^{\mu\alpha\beta}=\left[  \rho\tau_{P}-\zeta-\frac{1}{3}\eta\right]
\Delta^{\mu(\alpha}\delta_{\nu}^{\beta)}+[\rho\tau_{Q}u^{\alpha}u^{\beta}%
-\eta\Delta^{\alpha\beta}]\delta_{\nu}^{\mu}.
\end{equation}
Taking the determinant we obtain the tenth order polynomial $p\left(
\lambda\right)  $, this is
\begin{equation}
p\left(  \lambda\right)  \equiv\det\left(  \mathfrak{N}^{\mu\nu}l_{\mu}l_{\nu
}\right)  =\left(  \rho\tau_{Q}\right)  ^{2}\left(  g_{shear}^{\mu\nu}l_{\mu
}l_{\nu}\right)  ^{2}f\left(  \lambda\right)  \label{eq_det_1}%
\end{equation}
where the shear "effective" metric is defined as%
\[
g_{shear}^{\mu\nu}\equiv c_{sh}^{2}\Delta^{\mu\nu}-u^{\mu}u^{\nu}%
\]
with
\[
c_{sh}\equiv\sqrt{\frac{\eta}{\rho\tau_{Q}}}.
\]
We notice that from condition \ref{eq_con_1} guarantees $c_{sh}$ real. The remaining
part of the determinant is
\[
f\left(  \lambda\right)  \equiv\rho\tau_{Q}\tau_{\epsilon}\tau_{n}\left(
g_{sound_{1}}^{\mu\nu}l_{\mu}l_{\nu}\right)  \left(  g_{sound_{2}}^{\mu\nu
}l_{\mu}l_{\nu}\right)  \left(  g_{sound_{3}}^{\mu\nu}l_{\mu}l_{\nu}\right)
\]
where the sound ``effective'' metrics are given by
\[
g_{sound_{i}}^{\mu\nu}\equiv c_{s_{i}}^{2}\Delta^{\mu\nu}-u^{\mu}u^{\nu}%
\]
with $i=1,2,3$, and where $c_{s_{1}}^{2},c_{s_{2}}^{2},c_{s_{3}}^{2}$ are the
roots of the cubic polynomial
\begin{equation}
f\left(  x\right)  \equiv Ax^{3}+Bx^{2}+Cx+D, \label{eq_f_1}% 
\end{equation}
with $x=\lambda^{2}$. Using $g_{shear}^{\mu\nu}l_{\mu}l_{\nu}=-\lambda
^{2}+c_{sh}^{2}$, $g_{sound_{i}}^{\mu\nu}l_{\mu}l_{\nu}=-\lambda^{2}+c_{s_{i}%
}^{2}$, and combining these results with Eq. \eqref{eq_det_1}, we conclude that
the ten root of $p\left(  \lambda\right)  $ are $\pm c_{sh},\pm c_{sh},\pm
c_{s_{1}},\pm c_{s_{2}},\pm c_{s_{3}}$. Geometrically, this means that
$\det\left(  \mathfrak{N}^{\mu\nu}l_{\mu}l_{\nu}\right)  =0$ whenever $l_{\mu
}$ is null with respect to the shear effective metric, or to one of the sound
effective metrics.

In order to satisfy the real characteristic velocity condition, we must show
that $f\left(  x\right)  $ has real and positive roots. From \cite{dickson1922first}, we know
that if the discriminant $\Delta$ is positive, as assumed in \ref{eq_con_2},
then all the roots of $f\left(  x\right)  $ are real and distinct. To verify
that they are positive, we note that, under the assumptions \ref{eq_con_3}-\ref{eq_con_5}, one finds $f\left(  x\right)  >0$ for
$x\in(-\infty,0]$ since each term in Eq. \eqref{eq_f_1} is positive. This
implies that the three roots \ $c_{s_{1}}^{2},c_{s_{2}}^{2},c_{s_{3}}^{2}$ of
$f\left(  x\right)  $\ lie in the interval $(0,\infty)$, thereby satisfying
the real characteristic velocity condition.

The explicit expressions for the characteristic velocities of the sound
channel $c_{s_{i}}$ are given by%
\begin{equation}
c_{s_{i}}=\sqrt{R\cos\left(  \frac{\theta+2\pi\left(  k-1\right)  }{3}\right)
-\frac{B}{3A}},\quad k=1,2,3 \label{eq:sound_velocities}
\end{equation}
with%
\begin{equation}
R=2\sqrt{-\frac{p}{3}},\quad\theta=\arccos\left(  \frac{3q}{2p}\sqrt{-\frac
{3}{p}}{}\right),
\end{equation}
and%
\begin{equation}
p=\frac{3AC-B^{2}}{3A^{2}},\quad q=\frac{2B^{3}-9ABC+27A^{2}D}{27A^{3}}.
\end{equation}

Now we analyze the causality condition. The inequality \ref{eq_con_6} ensures
that $c_{sh}<1$; it only remains to show that $c_{s_{i}}<1$ under the
assumptions \ref{eq_con_7}-\ref{eq_con_9}.

We consider the translation $x=y+1$ in the function $f\left(  x\right)  $,
which gives%
\[
f\left(  y\right)  =Ay^{3}+\left(  3A+B\right)  y^{2}+\left(  3A+2B+C\right)
y+\left(  A+B+C+D\right)  .
\]
We know that the roots of this polynomial lie in the interval $\left(
-1,\infty\right)  $. However, assuming \ref{eq_con_7}-\ref{eq_con_9} and proceeding as before, it is straightforward to verify that $f\left(  y\right)  <0$ 
for all $y\in\lbrack0,\infty)$. We therefore conclude that the roots of $f\left(
y\right)  $ lie in $\left(  -1,0\right)  $, which implies that the roots of
$f\left(  x\right)  $ lie in $\left(  0,1\right)  $, thus completing the proof
of causality.

To verify the multiplicity and the uniformity conditions, we examine  the
characteristic equation
\[
\mathfrak{N}^{\mu\nu}\left(  U_{0}\right)  l_{\mu}l_{\nu}\delta U=0
\]
where $\delta U=\left(  \delta\epsilon,\delta n,\delta v^{\mu}\right)  $ and
$U_{0}$ denotes a smooth background solution. We consider the orthonormal
basis $\left\{  u^{\mu},k^{\mu},e_{1}^{\mu},e_{2}^{\mu}\right\}  $ and
decompose the vector $\delta v^{\mu}$ as $\delta v^{\mu}=k^{\mu}\delta
a+e_{1}^{\mu}\delta b+e_{2}^{\mu}\delta c$ (note that there is no component along $u^\mu$, since $u^\mu$ is normalized). In this basis, the principal
symbol can be written as
\begin{equation}
\mathfrak{N}^{\mu\nu}l_{\mu}l_{\nu}=\left[
\begin{array}
[c]{ccc|cc}%
\lambda_{\epsilon} & \tau_{n}\lambda^{2}+\lambda_{n} & n(\tau_{n}%
+\tau_{\mathcal{J}})\lambda & 0 & 0\\
\tau_{\epsilon}\lambda^{2}+\beta_{\epsilon} & \beta_{n} & \rho(\tau_{\epsilon
}+\tau_{Q})\lambda & 0 & 0\\
(\tau_{P}+\beta_{\epsilon})\lambda & \beta_{n}\lambda & \left(  \rho\tau_{P}%
-\zeta-\frac{4}{3}\eta\right)  +\rho\tau_{Q}\lambda^{2} & 0 & 0\\\hline
0 & 0 & 0 & \rho\tau_{Q}\lambda^{2}-\eta & 0\\
0 & 0 & 0 & 0 & \rho\tau_{Q}\lambda^{2}-\eta
\end{array}
\right]  . \label{eq:principal_symbol}
\end{equation}

We notice that this matrix splits into two blocks, the sound and shear blocks,
which are decoupled. It is then straightforward to conclude that, for
$\lambda=c_{sh}$, the kernel of the matrix is given by%
\begin{equation}
\delta U^{(\text{shear})}=\left[
\begin{array}
[c]{c}%
0\\
0\\
e_{1}^{\nu}%
\end{array}
\right]  ,\left[
\begin{array}
[c]{c}%
0\\
0\\
e_{2}^{\nu}%
\end{array}
\right]  \label{vec_1}%
\end{equation}
showing that the multiplicity condition is satisfied for $\lambda=c_{sh}$.
These vectors  are continuous with respect to $k^{\mu}$ and the same vectors
are obtained as the kernel of the principal symbol when $\lambda=-c_{sh}$.

We now  consider the sound block for  $\lambda=c_{s_{i}}$, with $i=1,2,3$. In
this case, for each $c_{s_{i}}$, the associated determinant vanishes, which implies the existence of
at least one vector in the kernel,
\begin{equation}
\delta U^{(\text{sound})}=\left[
\begin{array}
[c]{c}%
\delta\epsilon\\
\delta n\\
k^{\mu}\delta a
\end{array}
\right]  ,\label{vec_2}%
\end{equation}
for some value of $\delta\epsilon$, $\delta n$ and $\delta a$. Moreover, this
vector is continuous with respect to $k^{\mu}$. Finally, since we are assuming
$0<\Delta$, the degeneracy of each $c_{s_{i}}$ is $1$ (or $3$ if $c_{s_{i}%
}=c_{sh}$). Therefore, since  \eqref{vec_2} (and \ref{vec_1} when $c_{s_{i}%
}=c_{sh}$) exists, the multiplicity condition is satisfied for each $c_{s_{i}%
}$. The same conclusion holds for $\lambda=-c_{s_{i}}$. We thus conclude that
the multiplicity condition is satisfied for all characteristic velocities.

The final step in establishing strong hyperbolicity is the uniformity
condition. As mentioned above, the kernels $\delta U^{(\text{shear})}$ and
$\delta U^{(\text{sound})}$ are continuous with respect to $k^{\mu}$ and for
all $k^{\mu}$. Therefore the systems is strongly hyperbolic.
\end{proof}

From a geometrical point of view, the sufficient conditions \ref{eq_con_1}%
-\ref{eq_con_5} for strong hyperbolicity are equivalent to requiring that the
effective metrics $g_{shear}^{\mu\nu}$ and $g_{sound_{i}}^{\mu\nu}$ are real
and lorentzian, with $u_{\mu}$ timelike in all them, and with mutually
non-intersecting null cones of $g_{sound_{i}}^{\mu\nu}$ for different $i$.

As a final remark, we notice that the expression for $g_{sound_{i}}^{\mu\nu}$ 
is exactly the same as the one obtained in the analysis of hyperbolicity for an
ideal fluid, upon replacing $c_{s_{i}}$ with the adiabatic sound speed $c_{s}\equiv\left(\frac{\partial P}{\partial \epsilon}\right)_{n}
+\frac{n}{\rho}\left(\frac{\partial P}{\partial n}\right)_{\epsilon}$. We expect then $c_{s_{i}}$ to represent some perturbations of $c_{s}$.

We conclude this subsection with a  discussion of the case
$D=0$. In this situation the eigenvalues become degenerate, with the appearance of a characteristic mode $c_{s_{1}}=0$ with degeneracy two. This scenario does not break the hyperbolicity and causality of the system, however, it requires a separate analysis to stablish strong hyperbolicity. As we mention,  $\lambda=c_{s_{1}}=0$ has degeneracy two, then one must verify that the corresponding eigenspace ($\mathfrak{N}^{\mu\nu}  l_{\mu}l_{\nu} \delta U=0$) is also two-dimensional in order to retain the strong hyperbolicity.

\begin{lemma} 
Asumming $D=0$ and conditions \ref{eq_con_1}-\ref{eq_con_4}, 
    Eqs. \eqref{eq:evolution1}-\eqref{eq:evolution3} are strongly hyperbolic if and only if $\rho \tau_P = V$ and $\beta_n\lambda_\epsilon - \beta_\epsilon\lambda_n=0$. \label{lemma_Dequa0}
\end{lemma}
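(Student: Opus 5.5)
The plan is to reduce everything to the sound block \eqref{eq:principal_symbol} at $\lambda=0$. When $D=0$, the cubic \eqref{eq_f_1} factorizes as $f(x)=x\,(Ax^2+Bx+C)$. I will first check that the remaining roots are harmless. With $D=0$ the discriminant reduces to $\Delta=C^2(B^2-4AC)$, so \eqref{eq_con_2} gives $B^2-4AC>0$ and $C\neq 0$. Since $A<0$, $B>0$ and $C<0$, the quadratic $Ax^2+Bx+C$ has two distinct real roots. Their product is $C/A>0$ and their sum is $-B/A>0$, so both are strictly positive.

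Hence $x=0$ is a simple root of $f$. In terms of $\lambda$ this means $\lambda=c_{s_1}=0$ is a root of $p(\lambda)$ of degeneracy exactly two. The shear roots $\pm c_{sh}$ are nonzero by \eqref{eq_con_1}, so they do not collide with it. For the nonzero roots $\pm c_{s_2},\pm c_{s_3}$ and $\pm c_{sh}$, the multiplicity and uniformity arguments in the proof of Lemma~\ref{Lemma:hip:cau} apply verbatim. Strong hyperbolicity is therefore equivalent to the kernel of $\mathfrak{N}^{\mu\nu}l_\mu l_\nu$ at $\lambda=0$ being two-dimensional, together with continuity in $k_\mu$ of that kernel.

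At $\lambda=0$ the shear block is $-\eta\,\mathbb{1}$, which is invertible by \eqref{eq_con_1}. The kernel therefore lives entirely in the sound block, which becomes
\[
\left[\begin{array}{ccc}\lambda_\epsilon & \lambda_n & 0\\ \beta_\epsilon & \beta_n & 0\\ 0&0&\rho\tau_P-V\end{array}\right].
\]
Its rank is $\mathrm{rank}(M)+\epsilon_P$. Here $M$ is the upper-left $2\times2$ block, and $\epsilon_P=1$ if $\rho\tau_P\neq V$, while $\epsilon_P=0$ otherwise. A two-dimensional kernel requires total rank one.

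I will rule out $M=0$ using \eqref{eq_con_4}. Inspecting the definition of $C$, every term contains a factor among $\lambda_\epsilon,\lambda_n,\beta_\epsilon,\beta_n$. So $M=0$ would force $C=0$, a contradiction. Hence $\mathrm{rank}(M)\ge1$, and rank one overall holds if and only if $\mathrm{rank}(M)=1$ and $\epsilon_P=0$. This is exactly $\beta_n\lambda_\epsilon-\beta_\epsilon\lambda_n=0$ together with $\rho\tau_P=V$. These conditions are consistent with $D=0$ by \eqref{eq:D}, and this gives both directions of the equivalence.

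It remains to check uniformity for sufficiency. Under the two conditions, the kernel at $\lambda=0$ is spanned by $(\delta\epsilon,\delta n,0)$, where $(\delta\epsilon,\delta n)$ spans $\ker M$ and is independent of $k_\mu$, and by $(0,0,k^\mu)$. Both depend continuously on $k_\mu$, so uniformity holds. The main point needing care is the claim that $C<0$ excludes $M=0$; I expect this to follow from direct inspection of $C$ as described. I also need to confirm that no other root coincides with $0$, which is settled by the strict positivity of the quadratic's roots shown above.
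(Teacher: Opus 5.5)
Your proposal is correct and follows the paper's proof. Both arguments evaluate the symbol at $\lambda=0$, use that the shear block $-\eta\,\mathbb{1}$ is invertible, and require the sound block to have rank one, which forces $\rho\tau_P=V$ and $\beta_n\lambda_\epsilon-\beta_\epsilon\lambda_n=0$.

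You also make explicit several points the paper leaves implicit. You use $\Delta=C^2(B^2-4AC)$ to show the remaining roots are simple and nonzero. You use $C<0$ to exclude $M=0$, which the paper only mentions as a closing remark. And you describe the second kernel vector as spanning $\ker M$, which covers the case $\beta_\epsilon=\beta_n=0$, where the paper's explicit vector $(-\beta_n,\beta_\epsilon,0)$ vanishes.
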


\begin{proof}
    When $D=0$,  the characteristic polynomial Eq. \eqref{eq_f_1}, gives $\lambda^2=0$ as solutions. The associated principal symbol of this characteristic mode is

\begin{equation}
    \mathfrak{N}^{\mu\nu}l_{\mu}l_{\nu}=\left[
\begin{array}
[c]{ccc|cc}%
\lambda_{\epsilon} & \lambda_{n} & 0 & 0 & 0\\
\beta_{\epsilon} & \beta_{n} & 0 & 0 & 0\\
0 & 0 &   \rho\tau_{P}%
-V & 0 & 0\\\hline
0 & 0 & 0 & -\eta & 0\\
0 & 0 & 0 & 0 & -\eta
\end{array}
\right] \label{eq:principal_symbol_degeneracy2}
\end{equation}

In order to have strong hyperbolicity we need the kernel of this matrix to have dimension two. The shear sector of the system (lower right block) is clearly non degenerate since $\eta>0$ from the hyperbolicity conditions. We then focus on the sound sector (upper left block). There is only one way for this block to have a two-dimensional kernel: $\beta_n\lambda_\epsilon-\beta_\epsilon\lambda_n=0$ and $\rho\tau_P=V$, in agreement with the statement of the Lemma.
In this case the kernel vectors are

\begin{equation}
\delta U^{(\text{sound})}_{\lambda=0}=\left[
\begin{array}
[c]{c}%
0\\
0\\
k^{\nu}%
\end{array}
\right]  ,\left[
\begin{array}
[c]{c}%
-\beta_n \\
\beta_\epsilon\\
0%
\end{array}
\right]  \label{vec_1_b}%
\end{equation}
which depend continuously on $k^\nu$, satisfying the uniformity conditions and therefore all the conditions for strong hyperbolicity.

We notice that the case when $\beta_ \epsilon=\beta_n=\lambda
_\epsilon=\lambda_n=0$ should be excluded, since it yields $C=0$, increasing the degeneracy to 4 ($\lambda^4=0$), making the system weakly hyperbolic.

\end{proof}

This degeneracy is also appearing in the model of Bemfica-Disconzi-Noronha \cite{Bemfica:2020zjp} in the version with the charge current elevated to second order by taking an extra time derivative (Eq. \eqref{eq:charge_conservation_Bemfica}), even though this modified system is only used for the causality study. The principal symbol of such model related to the characteristic mode $\lambda^2=0$, is equivalent to Eq. \eqref{eq:principal_symbol_degeneracy2} with $\lambda_n=\lambda_\epsilon=0$. This means  this fully second order model is only weakly hyperbolic, but can be made strongly hyperbolic if the additional frame condition $\rho\tau_P=V$ is imposed.

The analysis above should not be interpreted as a proof that no constrained formulation of the system considered in Ref.~[20] can be well posed. Rather, it shows that the uniformly second-order prolongation of that system fails the all-derivatives strong-hyperbolicity criterion of Ref.~[42]. This distinction is important because Ref.~[20] makes use of several related systems: the original mixed first-/second-order conservation system, its projected version using $u^\mu u_\mu=-1$, a prolonged second-order system obtained by differentiating constraints, and a constraint-adapted first-order reduction. These formulations are not automatically equivalent as unconstrained initial-value problems. Their equivalence requires suitable compatibility conditions on the initial data and propagation of the constraints.
Consequently, a strong-hyperbolicity result for a constraint-adapted first-order reduction does not by itself imply strong hyperbolicity of the uniformly second-order prolonged system in the all-derivatives norm. Conversely, the failure of the all-derivatives criterion for the prolonged system does not exclude the possibility that a smaller constraint-adapted reduction may provide energy estimates on the constraint surface. The mixed first-/second-order formulation of Ref.~[20], with the normalization $u^\mu u_\mu=-1$ imposed so as to remove one degree of freedom, should be analyzed separately. We leave such an analysis for future work.

To understand how the case $D=0$ translates in terms of diffusion parameters, we have to look at the expression of D, plugging the expressions of $\lambda_{n,\epsilon}$ and $\beta_{n,\epsilon}$ from Eqs. \eqref{lambda_1}-\eqref{beta_2} into Eq. \eqref{eq:D}, we obtain:
\begin{equation}
 D = (\rho \tau_P-V) (\sigma \tau_\mathcal{J}+\sigma_0 \tau_Q) s_2,  
\label{eq:D_expanded}
 \end{equation}
where
\begin{equation}
    s_2 \equiv T\left( \left. \frac{\partial (\mu/T)}{\partial n} \right|_\epsilon \left. \frac{\partial P}{\partial \epsilon} \right|_n -  \left. \frac{\partial (\mu/T)}{\partial \epsilon} \right|_n \left. \frac{\partial P}{\partial n} \right|_\epsilon \right). \label{in_s_2}
\end{equation}
 
As explained in Appendix~\ref{appendix:positivity_of_detJ(e,n)}, the strict concavity of the entropy density is equivalent to  the positivity of $s_2$. This strict concavity condition is a standard local thermodynamic stability requirement, it ensures that equilibrium is a local entropy maximum \cite{callen1993thermodynamics, lukavcova2025note}. We therefore impose this condition as restrictions for the equation of state. So, assuming $s_2>0$, condition  
$\beta_n\lambda_\epsilon - \beta_\epsilon\lambda_n=0$ translates into $\sigma \tau_\mathcal{J}+\sigma_0 \tau_Q=0$.
At this point, if one wants to keep the freedom of having $\rho \tau_P\neq V$, the condition $\sigma \tau_\mathcal{J}\neq-\sigma_0\tau_Q$ must be imposed. Alternatively, one can impose $\rho\tau_P=V$ and $\sigma\tau_\mathcal{J}=-\sigma_0\tau_Q$ and preserve the strong hyperbolicty even with $D=0$, as shown in Lemma \ref{lemma_Dequa0}.

In Appendix \ref{appendix:non_normalized_u_hyperbolicity}, we show that the results obtained in this section also hold without imposing the normalization of $u^\mu$.

%%%%%%%%%%%%%%%%%%%%%%%%%%%%%%%%%%%%%%%%%%%%%%%%%%%%%%%%%%%%%%%%%%%%%%%%%%%%%%%%%%%%%%%%%%%%%%%%%%%%%%%%%%%
\subsection{Entropy production}

For our theory to be physically consistent, we must show that the second principle of thermodynamics is satisfied, i.e. the local entropy production is non-negative. This is equivalent to demanding $\nabla_{\mu} S^{\mu} \geq 0$, where $S^{\mu}$ denotes the entropy current. As we work in a gradient expansion scheme, this requirement will be imposed up to $\mathcal{O}(\nabla^{3})$.

We begin by stating a lemma that establishes an identity for the divergence of the entropy current, derived directly from the first law of thermodynamics, without any approximations.

\begin{lemma}
Considering the general definitions of the stress-energy tensor $T^{\mu\nu}$ and the charge current $J^{\mu}$ (Eqs.~\eqref{eq:T_1} and~\eqref{eq:J_1}), without imposing any specific prescription for the first-order corrections, the entropy current
\begin{equation}
TS^{\mu}\equiv Pu^{\mu}-u_{\nu}T^{\mu\nu}-\mu J^{\mu}%
.\label{eq:entropy current}%
\end{equation}
satisfies the following exact identity:
\begin{align}
T\nabla_{\mu}S^{\mu} &  =-\mathcal{A}\,\left(  \frac{u^{\mu}\nabla_{\mu}T}%
{T}\right)  -\Pi\,\left(  \nabla.u\right)  -\mathcal{T}^{\mu\nu}\,\sigma
_{\mu\nu}\nonumber\\
&  -\left[  \frac{\nabla^{\mu}T}{T}+u^{\nu}\nabla_{\nu}u^{\mu}\right]  Q_{\mu
}-T\left(  N\,u^{\mu}+\mathcal{J}^{\mu}\right)  \nabla_{\mu}\left(  \frac{\mu
}{T}\right)  .\label{Eq:div_entropy_lemma}%
\end{align}

\end{lemma}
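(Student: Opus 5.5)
We will compute $\nabla_\mu(TS^\mu)$ directly from the definition \eqref{eq:entropy current}, using only the exact conservation laws \eqref{Eq:D_T}--\eqref{Eq:D_J} and the local equilibrium thermodynamic identities. These identities are the Gibbs--Duhem relation $dP = s\,dT + n\,d\mu$ and the Euler relation $\epsilon + P = Ts + \mu n$, with $s$ the equilibrium entropy density. Then $T\nabla_\mu S^\mu = \nabla_\mu(TS^\mu) - S^\mu\nabla_\mu T$, and we want to show this equals the right-hand side of \eqref{Eq:div_entropy_lemma}.

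The first step is to expand $-u_\nu T^{\mu\nu}$ using \eqref{eq:T_1}. Since $u_\nu u^\nu=-1$ and $Q$, $\mathcal{T}$ are orthogonal to $u$, this gives $-u_\nu T^{\mu\nu} = (\epsilon+\mathcal{A})u^\mu + Q^\mu$. Hence
\[
TS^\mu = (P+\epsilon+\mathcal{A})u^\mu + Q^\mu - \mu J^\mu = Ts\,u^\mu + \mathcal{A}u^\mu + Q^\mu - \mu(N u^\mu + \mathcal{J}^\mu),
\]
where the Euler relation cancels the $\mu n u^\mu$ term.

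The second step is to take the divergence. We write $\nabla_\mu(-u_\nu T^{\mu\nu}) = -u_\nu\nabla_\mu T^{\mu\nu} - T^{\mu\nu}\nabla_\mu u_\nu = -T^{\mu\nu}\nabla_\mu u_\nu$ by \eqref{Eq:D_T}, and $\nabla_\mu(\mu J^\mu) = J^\mu\nabla_\mu\mu$ by \eqref{Eq:D_J}. Together with $\nabla_\mu(Pu^\mu) = P\nabla\!\cdot\! u + u^\mu(s\nabla_\mu T + n\nabla_\mu\mu)$ this gives
\[
\nabla_\mu(TS^\mu) = P\,\nabla\!\cdot\!u + s\,u^\mu\nabla_\mu T + n\,u^\mu\nabla_\mu\mu - T^{\mu\nu}\nabla_\mu u_\nu - J^\mu\nabla_\mu\mu .
\]

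Next we evaluate $T^{\mu\nu}\nabla_\mu u_\nu$ term by term, using $u^\nu\nabla_\mu u_\nu=0$:
\begin{itemize}
\item the $(\epsilon+\mathcal{A})u^\mu u^\nu$ piece vanishes;
\item the $(P+\Pi)\Delta^{\mu\nu}$ piece gives $(P+\Pi)\nabla\!\cdot\!u$;
\item the $u^\mu Q^\nu$ piece gives $Q^\nu a_\nu$ with $a_\nu = u^\mu\nabla_\mu u_\nu$, while the $u^\nu Q^\mu$ piece vanishes;
\item the $\mathcal{T}^{\mu\nu}$ piece gives $\mathcal{T}^{\mu\nu}\sigma_{\mu\nu}$, because $\mathcal{T}$ is assumed symmetric, transverse and traceless as in \eqref{eq:constitutive_relation_4}, so only the shear projection of $\nabla_\mu u_\nu$ survives.
\end{itemize}
The $P\nabla\!\cdot\!u$ terms cancel. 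We then subtract $S^\mu\nabla_\mu T$, with $S^\mu = s u^\mu + (\mathcal{A}u^\mu + Q^\mu - \mu N u^\mu - \mu\mathcal{J}^\mu)/T$. The $s\,u^\mu\nabla_\mu T$ terms cancel, and so do the $n\,u^\mu\nabla_\mu\mu$ terms, since $J^\mu\nabla_\mu\mu$ contains $n u^\mu\nabla_\mu \mu$.

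Finally we regroup the leftover terms:
\begin{itemize}
\item $-\mathcal{A}\,u^\mu\nabla_\mu T/T$;
\item $-\Pi\,\nabla\!\cdot\!u$ and $-\mathcal{T}^{\mu\nu}\sigma_{\mu\nu}$;
\item $-Q^\mu(\nabla_\mu T/T + a_\mu)$;
\item $-(Nu^\mu+\mathcal{J}^\mu)\nabla_\mu\mu + \mu(Nu^\mu+\mathcal{J}^\mu)\nabla_\mu T/T = -T(Nu^\mu+\mathcal{J}^\mu)\nabla_\mu(\mu/T)$.
\end{itemize}
This is exactly \eqref{Eq:div_entropy_lemma}.

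The only delicate point is bookkeeping, in particular the signs of the $Q$ and $\mathcal{T}$ contractions. We must make sure that $\mathcal{T}^{\mu\nu}\nabla_\mu u_\nu$ reduces to $\mathcal{T}^{\mu\nu}\sigma_{\mu\nu}$; this uses only the tracelessness and transversality of $\mathcal{T}$, not its specific form. No approximation enters anywhere, so the identity is exact as claimed.
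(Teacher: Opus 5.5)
Your computation is correct, but it organizes the bookkeeping differently from the paper. The paper splits $T^{\mu\nu}=T^{\mu\nu}_{(0)}+T^{\mu\nu}_{(1)}$ and $J^\mu=J^\mu_{(0)}+J^\mu_{(1)}$, and differentiates $S^\mu$ itself. It then uses the first law $d\epsilon=T\,ds+\mu\,dn$ together with $u_\nu\nabla_\mu T^{\mu\nu}=0$ and $\nabla_\mu J^\mu=0$ to show that $\nabla_\mu(su^\mu)$ is exactly cancelled by the divergences of the dissipative parts. This gives the intermediate form $\nabla_\mu S^\mu=-T^{\mu\nu}_{(1)}\nabla_\mu(u_\nu/T)-J^\mu_{(1)}\nabla_\mu(\mu/T)$, which is only then expanded with the irreducible decomposition of $\nabla_\mu u_\nu$.

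You never split off the ideal part. You differentiate $TS^\mu$ with the full tensors and let the conservation laws remove $u_\nu\nabla_\mu T^{\mu\nu}$ and $\mu\nabla_\mu J^\mu$. You use Gibbs--Duhem on $u^\mu\nabla_\mu P$, so the ideal pieces $P\,\nabla\!\cdot\!u$ and $n\,u^\mu\nabla_\mu\mu$ cancel immediately, and you recover $T\nabla_\mu S^\mu$ at the end by subtracting $S^\mu\nabla_\mu T$.

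Your route is slightly more direct. The paper's route has the advantage of displaying the standard flux--force structure before any projection of $T^{\mu\nu}_{(1)}$, with the dissipative fluxes contracted against $\nabla_\mu(u_\nu/T)$ and $\nabla_\mu(\mu/T)$.

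You are right to say explicitly that $\mathcal{T}^{\mu\nu}\nabla_\mu u_\nu=\mathcal{T}^{\mu\nu}\sigma_{\mu\nu}$ requires $\mathcal{T}^{\mu\nu}$ to be symmetric, transverse and traceless. The paper uses this only implicitly and lists just $\mathcal{T}^{\mu\nu}u_\mu=0$. Since the lemma is meant to be independent of any prescription, take these properties as part of the tensor decomposition of $T^{\mu\nu}$, not from the first-order constitutive relation for $\mathcal{T}^{\mu\nu}$. As in the paper, ``exact'' here means exact on solutions of the conservation laws, since both arguments use them.
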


In the ideal fluid case, the expression of entropy current in \eqref{eq:entropy current} gives the well known thermodynamical entropy density at equilibrium $s$, as the component along $u^\mu$, see Eq. \eqref{eq:id_1}. Constituting the most straightforward covariant generalization of entropy current in terms of $T^{\mu\nu}$ and $J^\mu$ (see \cite{Bhattacharya:2011tra} for more details).

The proofs in this section follow an approach similar to that in Refs.~\cite{Bemfica:2020zjp,Kovtun:2019hdm}, generalized to include the contribution from the charge current.

\begin{proof}
We split the energy--momentum tensor and the charge current into their ideal and dissipative parts,
\[
T^{\mu\nu} = T_{(0)}^{\mu\nu} + T_{(1)}^{\mu\nu}, \qquad 
J^{\mu} = J_{(0)}^{\mu} + J_{(1)}^{\mu},
\]
with $T_{(0)}^{\mu\nu} = \epsilon\, u^{\mu}u^{\nu} + P\,\Delta^{\mu\nu}$ and
$J_{(0)}^{\mu} = n\,u^{\mu}$. 
Since $u_{\nu}T^{\mu\nu} = -\epsilon\,u^{\mu} + u_{\nu}T_{(1)}^{\mu\nu}$ and using the thermodynamic identity for the canonical entropy density at equilibrium $s$
\begin{equation}
Ts = \epsilon + P - \mu n, 
\label{eq:id_1}
\end{equation}
(which follows from the first law of thermodynamics and the Euler relation for extensivity~\cite{callen1993thermodynamics}), Eq.~\eqref{eq:entropy current} can be written as
\[
T S^{\mu} = T s\,u^{\mu} - u_{\nu}T_{(1)}^{\mu\nu} - \mu J_{(1)}^{\mu}.
\]
Dividing by $T$ and taking the divergence, we obtain
\begin{equation}
\nabla_{\mu} S^{\mu} =
\left[ 
\nabla_{\mu}(s u^{\mu}) 
- \frac{u_{\nu}}{T}\nabla_{\mu}T_{(1)}^{\mu\nu}
- \frac{\mu}{T}\nabla_{\mu}J_{(1)}^{\mu}
\right]
- T_{(1)}^{\mu\nu}\nabla_{\mu}\!\left(\frac{u_{\nu}}{T}\right)
- J_{(1)}^{\mu}\nabla_{\mu}\!\left(\frac{\mu}{T}\right),
\label{Eq:div_S}
\end{equation}
where we will show that the term in brackets vanishes.

Using the first law of thermodynamics, $d\epsilon = T\,ds + \mu\,dn$
(again, obtained from the first principle of thermodynamics~\cite{callen1993thermodynamics}),
and projecting along $u^{\mu}$, we find
\[
T\,u^{\mu}\nabla_{\mu}s 
= u^{\mu}\nabla_{\mu}\epsilon - \mu\,u^{\mu}\nabla_{\mu}n,
\]
which gives
\begin{equation}
T\,\nabla_{\mu}(s u^{\mu}) 
= u^{\mu}\nabla_{\mu}\epsilon 
- \mu\,u^{\mu}\nabla_{\mu}n 
+ (\epsilon + P - \mu n)\,(\nabla\!\cdot\!u),
\label{Eq:div_1}
\end{equation}
where Eq.~\eqref{eq:id_1} has been used.

From the energy--momentum and charge current conservation,
$\nabla_{\mu}T^{\mu\nu}=0$ and $\nabla_{\mu}J^{\mu}=0$, we obtain
\begin{align*}
u^{\mu}\nabla_{\mu}\epsilon + (\epsilon+P)(\nabla\!\cdot\!u)
&= u_{\nu}\nabla_{\mu}T_{(1)}^{\mu\nu}, \\
u^{\mu}\nabla_{\mu}n + n(\nabla\!\cdot\!u)
&= -\,\nabla_{\mu}J_{(1)}^{\mu},
\end{align*}
where the first relation follows from contracting $u_{\nu}$ with
$\nabla_{\mu}T^{\mu\nu}=0$.

Substituting these two relations into Eq.~\eqref{Eq:div_1}, we find
\[
T\,\nabla_{\mu}(s u^{\mu})
= u_{\nu}\nabla_{\mu}T_{(1)}^{\mu\nu}
+ \mu\,\nabla_{\mu}J_{(1)}^{\mu}.
\]
This last expression shows that the term inside the brackets in
Eq.~\eqref{Eq:div_S} vanishes, leading to
\begin{align}
\nabla_{\mu}S^{\mu}
&= -\,T_{(1)}^{\mu\nu}\nabla_{\mu}\!\left(\frac{u_{\nu}}{T}\right)
- J_{(1)}^{\mu}\nabla_{\mu}\!\left(\frac{\mu}{T}\right),
\label{Eq:div_S_1}\\
&= \frac{1}{T^{2}}\,u_{\nu}T_{(1)}^{\mu\nu}\nabla_{\mu}T
- \frac{1}{T}\,T_{(1)}^{\mu\nu}(\nabla_{\mu}u_{\nu})
- J_{(1)}^{\mu}\nabla_{\mu}\!\left(\frac{\mu}{T}\right),
\label{Eq:div_S_2}
\end{align}
where in the last line we have expanded the derivative
$\nabla_{\mu}(u_{\nu}/T)$.

Now, we use the expressions 
\begin{align*}
T_{(1)}^{\mu\nu} &  =\mathcal{A}u^{\mu}u^{\nu}+\Pi\Delta^{\mu\nu}+u^{\mu
}Q^{\nu}+u^{\nu}Q^{\mu}+\mathcal{T}^{\mu\nu},\\
J_{(1)}^{\mu} &  =Nu^{\mu}+\mathcal{J}^{\mu},
\end{align*}
from where we obtain 
\begin{align}
u_{\nu}T_{(1)}^{\mu\nu}\nabla_{\mu}T &  =-\mathcal{A}u^{\mu}\nabla_{\mu
}T-Q^{\mu}\nabla_{\mu}T, \\
T_{(1)}^{\mu\nu}\left(  \nabla_{\mu}u_{\nu}\right)   &  =\Pi\left(
\nabla.u\right)  +Q_{\nu} u^\mu \nabla_\mu u ^\nu +\mathcal{T}^{\mu\nu}\sigma_{\mu\nu
}, \label{Eq:T_1_d_u}\\
J_{(1)}^{\mu}\nabla_{\mu}\!\left(  \frac{\mu}{T}\right)   &  =\left(  Nu^{\mu
}+\mathcal{J}^{\mu}\right)  \nabla_{\mu}\left(  \frac{\mu}{T}\right).
\end{align}
These relations lead to Eq.~\eqref{Eq:div_entropy_lemma}.

To show Eq.~\eqref{Eq:T_1_d_u}, it is useful to decompose the velocity gradient
into its irreducible parts as
\[
\nabla_{\mu}u_{\nu}
= \sigma_{\mu\nu} + \omega_{\mu\nu}
+ \tfrac{1}{3}(\nabla\!\cdot\!u)\,\Delta_{\mu\nu}
- u_{\mu}(u^{\lambda}\nabla_{\lambda}u_{\nu}),
\]
where $\sigma_{\mu\nu}$ is the shear tensor defined in
Eq.~\eqref{Eq:sigma_ab} (symmetric, traceless, and orthogonal to $u^{\mu}$),
and $\omega_{\mu\nu}\equiv \Delta_{[\mu}^{\alpha}\Delta_{\nu]}^{\beta}
\nabla_{\alpha}u_{\beta}$ is the vorticity tensor (antisymmetric and orthogonal to $u^{\mu}$)

\end{proof}

We apply now this result to our system.

\begin{lemma} \label{Lemma:entropy}
Given the evolution equations \eqref{eq:evolution1}--\eqref{eq:evolution3},
with the dissipative terms expressed as in \eqref{eq:A}--\eqref{eq:J},
if the following conditions hold:
\begin{equation}
\zeta \geq 0, \qquad \eta \geq 0, \qquad \sigma + \sigma_0\geq 0,
\label{eq:entropy_conditions}
\end{equation}
then the entropy current $S^\mu$, defined in Eq.~\eqref{eq:entropy current},
satisfies
\[
0 \leq \nabla_\mu S^\mu + \mathcal{O}(\nabla^3).
\]
In other words, the entropy production is non--negative up to second order
in the gradient expansion.
\end{lemma}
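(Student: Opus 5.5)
The plan is to start from the exact identity of the previous lemma, Eq.~\eqref{Eq:div_entropy_lemma}, and insert the constitutive relations \eqref{eq:A}--\eqref{eq:J}. The guiding idea is the one stressed after Eqs.~\eqref{eq:IF1}--\eqref{eq:IF3}. Every frame-dependent piece ($\tau_\epsilon$, $\tau_P$, $\tau_Q$, $\tau_n$, $\tau_{\mathcal J}$) multiplies a combination that vanishes on the ideal equations. That combination is therefore itself $\mathcal{O}(\nabla^2)$ on solutions. Since each such factor in \eqref{Eq:div_entropy_lemma} is multiplied by another first-order gradient, these products are $\mathcal{O}(\nabla^3)$ and can be discarded.

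Concretely, I would go term by term.
\begin{itemize}
\item $\mathcal{A}=\tau_\epsilon[\text{IF1}]=\mathcal{O}(\nabla^2)$, so $-\mathcal{A}\,u^\mu\nabla_\mu T/T=\mathcal{O}(\nabla^3)$.
\item $N=\tau_n[\text{IF3}]=\mathcal{O}(\nabla^2)$, so $-TNu^\mu\nabla_\mu(\mu/T)=\mathcal{O}(\nabla^3)$.
\item $\Pi=-\zeta\nabla\!\cdot\! u+\mathcal{O}(\nabla^2)$, which gives $\zeta(\nabla\!\cdot\! u)^2+\mathcal{O}(\nabla^3)$.
\item $\mathcal{T}^{\mu\nu}=-2\eta\sigma^{\mu\nu}$, which gives $2\eta\,\sigma^{\mu\nu}\sigma_{\mu\nu}\ge0$, because $\sigma$ is spatial and $\Delta$ is positive definite.
\end{itemize}
For the vector sector I would use the rewritten forms \eqref{Eq:Q_1} and the corresponding one for $\mathcal{J}^\mu$. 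The $\tau_Q$ and $\tau_{\mathcal J}$ pieces are proportional to [IF2], hence $\mathcal{O}(\nabla^2)$. This leaves
\[
Q^\mu=\sigma T\frac{\rho}{n}\Delta^{\mu\lambda}\nabla_\lambda(\mu/T)+\mathcal{O}(\nabla^2),
\qquad
\mathcal{J}^\mu=-\sigma_0T\Delta^{\mu\lambda}\nabla_\lambda(\mu/T)+\mathcal{O}(\nabla^2).
\]

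The main step is to rewrite the heat-flux factor $\nabla^\mu T/T+u^\nu\nabla_\nu u^\mu$ contracted with $Q_\mu$. I would use the Gibbs--Duhem relation $dP=s\,dT+n\,d\mu$, which gives $\Delta^{\mu\lambda}\nabla_\lambda P=\frac{\rho}{T}\Delta^{\mu\lambda}\nabla_\lambda T+nT\Delta^{\mu\lambda}\nabla_\lambda(\mu/T)$. Combining this with the ideal momentum equation [IF2] yields
\[
\Delta^{\mu\lambda}\frac{\nabla_\lambda T}{T}+u^\lambda\nabla_\lambda u^\mu=-\frac{nT}{\rho}\Delta^{\mu\lambda}\nabla_\lambda(\mu/T)+\mathcal{O}(\nabla^2).
\]
The projection is harmless because $Q_\mu u^\mu=0$. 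Substituting, the $Q$ term becomes $\sigma T^2\,\Delta^{\mu\lambda}\nabla_\mu(\mu/T)\nabla_\lambda(\mu/T)$ and the $\mathcal{J}$ term becomes $\sigma_0T^2$ times the same quadratic form. Up to $\mathcal{O}(\nabla^3)$ this gives
\[
T\nabla_\mu S^\mu=\zeta(\nabla\!\cdot\! u)^2+2\eta\,\sigma^{\mu\nu}\sigma_{\mu\nu}+(\sigma+\sigma_0)T^2\,\Delta^{\mu\lambda}\nabla_\mu(\mu/T)\nabla_\lambda(\mu/T).
\]
Each term is non-negative under \eqref{eq:entropy_conditions}, using $T>0$ and the positivity of $\Delta$ on spatial covectors.

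The obstacle I expect is the bookkeeping of signs and factors in this heat-flux step. In particular, one must check that the $\rho/n$ in \eqref{Eq:Q_1} cancels exactly against the $nT/\rho$ coming from Gibbs--Duhem, so that the $\sigma$ and $\sigma_0$ contributions merge into the single frame-invariant combination $\sigma+\sigma_0$. One must also justify that replacing the ideal equations by $\mathcal{O}(\nabla^2)$ is legitimate on solutions of \eqref{eq:evolution1}--\eqref{eq:evolution3}. This follows from the remark around \eqref{eq:IF1}--\eqref{eq:IF3}, since the full equations differ from the ideal ones only by divergences of first-order terms.
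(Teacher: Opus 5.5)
Your proposal is correct and follows the paper's proof essentially step by step. You reduce the constitutive relations on shell so that $\mathcal{A}$, $N$ and the $\tau$-pieces are $\mathcal{O}(\nabla^2)$, combine Gibbs--Duhem with the ideal momentum equation to rewrite $\Delta^{\mu\lambda}\nabla_\lambda T/T+u^\nu\nabla_\nu u^\mu$, and merge the $Q$ and $\mathcal{J}$ contributions into the single $(\sigma+\sigma_0)T^2$ quadratic form, with the $\rho/n$ and $nT/\rho$ factors cancelling exactly as in the paper.
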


\begin{proof}
Assuming that, for some initial data, the second--order gradient corrections
only induce small deviations from the ideal fluid, we can use
Eqs.~\eqref{eq:evolution1}--\eqref{eq:evolution3} together with the
definitions~\eqref{eq:A}--\eqref{Eq:Q_1} to obtain the relations
\begin{align}
\mathcal{A} &  =0+\mathcal{O}\left(  \nabla^{2}\right)  \label{Eq:A_2}\\
\Pi &  =-\xi\left(  \nabla.u\right)  +\mathcal{O}\left(  \nabla^{2}\right) \\
N &  =0+\mathcal{O}\left(  \nabla^{2}\right)  \\
\mathcal{J}^{\mu} &  = -\sigma_{0}T\Delta^{\mu\lambda}\nabla_{\lambda}\left(
\frac{\mu}{T}\right) + \mathcal{O}(\nabla^2)  \\
Q^{\mu} &  =\sigma T\frac{\left(  \epsilon+P\right)  }{n}\Delta^{\mu\lambda
}\nabla_{\lambda}\left(  \frac{\mu}{T}\right)  +\mathcal{O}\left(  \nabla
^{2}\right) \\
\mathcal{T}^{\mu\nu} &  =-2\eta\sigma^{\mu\nu}+\mathcal{O}\left(  \nabla
^{2}\right)  \label{Eq:T_2}%
\end{align}
In addition, we have the auxiliary relation
\begin{equation}
-\,T\,\frac{n}{\epsilon + P}\,\Delta^{\mu\lambda}\nabla_{\lambda}
\!\left(\frac{\mu}{T}\right)
= \Delta^{\mu\lambda}\frac{\nabla_{\lambda}T}{T}
+ u^{\nu}\nabla_{\nu}u^{\mu}
+ \mathcal{O}(\nabla^2),
\label{Eq:ident_1}
\end{equation}
which is obtained by rewriting the Gibbs--Duhem relation
$dP = n\,d\mu + s\,dT$ (following from the first law of thermodynamics \cite{callen1993thermodynamics}) as
\[
-\,T\,\frac{n}{\epsilon + P}\,d\!\left(\frac{\mu}{T}\right)
= \frac{dT}{T} - \frac{dP}{\epsilon + P},
\]
or equivalently, in its hypersurface projected form
\[
-\,T\,\frac{n}{\epsilon + P}\,\Delta^{\mu\lambda}\nabla_{\lambda}
\!\left(\frac{\mu}{T}\right)
= \Delta^{\mu\lambda}\frac{\nabla_{\lambda}T}{T}
- \frac{\Delta^{\mu\lambda}\nabla_{\lambda}P}{\epsilon + P}.
\]
Then equation~\eqref{Eq:ident_1} follows by using Eq.~\eqref{eq:IF2}
to replace the pressure gradient term.

%%%%%%%%%%%%%%%%%%

Finally, substituting Eqs.~\eqref{Eq:A_2}--\eqref{Eq:T_2} into
Eq.~\eqref{Eq:div_entropy_lemma} yields
\begin{align}
T\,\nabla_{\mu}S^{\mu}
&= \zeta\,(\nabla\!\cdot\!u)^2
+ 2\eta\,\sigma^{\mu\nu}\sigma_{\mu\nu}
- \Bigl[\frac{\nabla_{\mu}T}{T}
+ u^{\nu}\nabla_{\nu}u_{\mu}\Bigr]
\,\sigma\,T\,\frac{\epsilon + P}{n}\,
\Delta^{\mu\lambda}\nabla_{\lambda}\!\left(\frac{\mu}{T}\right)
\nonumber\\[4pt]
&\quad + T^{2}\sigma_{0}\,
\Bigl[\nabla_{\mu}\!\left(\frac{\mu}{T}\right)
\Delta^{\mu\lambda}\nabla_{\lambda}\!\left(\frac{\mu}{T}\right)\Bigr]
+ \mathcal{O}(\nabla^{3}) \nonumber\\[4pt]
&= \zeta\,(\nabla\!\cdot\!u)^2
+ 2\eta\,\sigma^{\mu\nu}\sigma_{\mu\nu}
+ T^{2}(\sigma + \sigma_{0})\,
\Bigl[\nabla_{\mu}\!\left(\frac{\mu}{T}\right)
\Delta^{\mu\lambda}\nabla_{\lambda}\!\left(\frac{\mu}{T}\right)\Bigr]
+ \mathcal{O}(\nabla^{3}),
\end{align}
where in the last equality Eq.~\eqref{Eq:ident_1} has been used.
The positivity of this expression up to second order follows immediately
from the inequalities~\eqref{eq:entropy_conditions}, proving that the entropy
production is non--negative at this order.
\end{proof}

In particular, $\sigma + \sigma_0 = \sigma_E\geq 0$ (the last inequality of \eqref{eq:entropy_conditions}) states that the amount of energy transported by heat (excluding the one brought by particles) should follow an Eckart law up to $\mathcal{O}(\nabla^2)$, with positive heat diffusion coefficient. This is equivalent to the result found in \cite{Bemfica:2020zjp}.

\subsection{Linear Stability}

The last fundamental property we want our model to have is the linear stability around thermal equilibrium uniform solutions. In other words, we want to find conditions ensuring that small perturbations of homogeneous solutions decay in time, so that the system returns to its original configuration. This is not guaranteed by strong hyperbolicity (well-posedness), since a strongly hyperbolic model may still admit perturbations  that grow exponentially. The investigation of linear stability requires indeed to go beyond the principal part, taking also into account for the lower order damping terms. 

Let $\mathbf{q}=(\epsilon,n,u^\mu)$ be an homogeneous equilibrium configuration with fluid at rest $u^\mu=(1,0,0,0)$ on a Minkowski spacetime $g_{\mu\nu}=\textrm{diag}(-1,1,1,1)$, and let us apply a perturbation $\delta \mathbf{q}=(\delta \epsilon, \delta n, \delta u^\mu)$, with $u_\mu \delta u^\mu =0$ (so $\delta u^t=0$). We have $u^\nu \nabla_\nu = \partial_t$ and $\Delta^\nu_i\nabla_\nu = \partial_i $ and the linearized evolution equations become:
\begin{align}
    \nabla_\mu \delta J^\mu &= n \partial_i \delta u^i + \partial_t \delta n +\partial_t \delta N + \partial
    _i \delta \mathcal{J}^i = 0, \label{eq:perturbationevolution1} \\\
   - u_\nu \nabla_\mu \delta T^{\mu\nu} &= \partial_t \delta \epsilon + (\epsilon+P)\partial_i \delta u^i + \partial_t \delta \mathcal{A} + \partial_i \delta Q^i = 0, \label{eq:perturbation_evolution2} \\
    \Delta^i_\nu \nabla_\mu \delta T^{\mu\nu} &= \epsilon\partial_t\delta u^i + \partial^i(\delta P+ \delta \Pi) + P\partial_t \delta u^i    - 2\eta\partial_j\delta\sigma^{ij} + \partial_t \delta Q^i = 0, \label{eq:perturbationevolution3} 
\end{align}
with the perturbations of the dissipative terms being
\begin{align}
    \delta \mathcal{A} &= \tau_\epsilon [\partial_t \delta \epsilon + (\epsilon + P)\partial_i \delta u^i] + \mathcal{O}(\delta\mathbf{q}^2), \\
    \delta \Pi &= - \zeta \partial_i \delta u^i + \tau_P [\partial_t \delta \epsilon + (\epsilon + P)\partial_i \delta u^i] + \mathcal{O}(\delta\mathbf{q}^2), \\
    \delta Q^i &= \tau_Q (\epsilon + P) \partial_t \delta u^i + \beta_\epsilon \partial^i \delta \epsilon + \beta_n \partial^i \delta n + \mathcal{O}(\delta\mathbf{q}^2), \\
    \delta N &= \tau_n (n \partial_i \delta u^i + \partial_t \delta n) + \mathcal{O}(\delta\mathbf{q}^2), \\
    \delta \mathcal{J}^i &= \tau_\mathcal{J} n \partial_t \delta u^i + \lambda_n \partial^i \delta n  + \lambda_\epsilon \partial^i \delta \epsilon +\mathcal{O}(\delta\mathbf{q}^2), \\
    \delta \sigma^{i j} &= \partial^{(i} \delta u^{j)} - \frac{1}{3}\delta^{ij}\partial_l \delta u^l + \mathcal{O}(\delta\mathbf{q}^2).
\end{align}

At this point we assume the perturbation to be a plane-wave (Fourier) mode in space with a exponential time dependence,

\begin{equation}
    \delta \boldsymbol{q} = e^{\Gamma t + i k^i x_i} \delta \boldsymbol{q}_0.
\end{equation}
Here $\boldsymbol{q}_0$ denote a constant amplitude vector, $k^i$ a spacelike wave vector, and $\Gamma$ the (generally complex) growth/decay rate.
Plugging this expressions into the linearized evolution equations and dropping terms of order $\mathcal{O}(\delta\mathbf{q}^2)$, we get a linear system $\mathbf{\mathcal{S}}(\Gamma,k^i)\delta \mathbf{q}=0$ such that
\begin{align}
    \mathbf{\mathcal{S}}(\Gamma,k^i) = \left[ \begin{array}{ccc}
            -\lambda_\epsilon k^2 & \tau_n\Gamma^2 + \Gamma - \lambda_n k^2 & in[1 + (\tau_n +\tau_\mathcal{J})\Gamma] k_j \\
            \tau_\epsilon \Gamma^2+  \Gamma - \beta_\epsilon k^2 & -\beta_n k^2 & i\rho[1 + (\tau_\epsilon+\tau_Q)\Gamma] k_j\\
            i \left[ \left. \frac{\partial P}{\partial\epsilon} \right|_n + (\tau_P+\beta_\epsilon) \Gamma \right] k^i & i\left[ \left. \frac{\partial P}{\partial n} \right|_\epsilon + \beta_n \Gamma \right] k^i & \mathcal{B}^i_j
        \end{array} \right],
\end{align}
where we have defined
\begin{align}
    \mathcal{B}^i_j &= \left[ \zeta + \frac{1}{3}\eta- \tau_P \rho \right] k^i k_j + [\rho \tau_Q \Gamma^2 + \rho \Gamma + \eta k^2] \delta^i_j. 
\end{align}

In order to find non trivial solutions of the perturbed equations, we need the determinant of $\mathbf{\mathcal{S}}(\Gamma,k^i)$ to vanish. Imposing $\textrm{det}[\mathbf{\mathcal{S}}(\Gamma,k^i)]=0$, will then allow to determine all the possible decay rates $\Gamma$. Sufficient condition for stability is that every non trivial solution must have $\Re(\Gamma)\leq0$ for any value of $k^i$. \\
The determinant of the stability matrix can be split into a shear and a sound channel as follows:
\begin{equation}
    \textrm{det} \left[ \mathcal{S}(\Gamma,k^i) \right] = -\left[P_{\textrm{shear}}(\Gamma,k)\right]^2P_{\textrm{sound}}(\Gamma,k),
\end{equation}
with $P_{\rm shear}$ and $P_{\rm sound}$ being two polynomials in $\Gamma$, whose roots will determine the stability of the shear and sound modes, respectively. 
They can be expressed as
\begin{align}
 P_{\textrm{shear}}(\Gamma,k) &= \rho \tau_Q \Gamma^2 + \rho \Gamma + \eta k^2, \label{eq:stability_polynomial_1} \\
 P_{\textrm{sound}}(\Gamma,k) &= a_6 \Gamma^6 + a_5 \Gamma^5 + a_4 \Gamma^4 + a_3 \Gamma^3 + a_2 \Gamma^2 + a_1 \Gamma + a_0, \label{eq:stability_polynomial_2}
\end{align}
where
\begin{align}
    a_6 &= \rho \tau_n \tau_Q \tau_\epsilon, \\
    a_5 &= \rho (\tau_n \tau_Q + \tau_n \tau_\epsilon + \tau_Q \tau_\epsilon), \\
    a_4 &= \rho (\tau_Q + \tau_n + \tau_\epsilon) + \left[ V \tau_n \tau_\epsilon + \tau_n \tau_\epsilon (n \beta_n + \rho \beta_\epsilon) + \rho \tau_Q (\tau_n \tau_P - \lambda_n \tau_\epsilon) + n \tau_\mathcal{J} \tau_\epsilon \beta_n \right] k^2, \\
    a_3 &=\rho + \left[ V(\tau_n+\tau_\epsilon) + \tau_n \tau_\epsilon \rho c_s^2 + n \left( \beta_n (\tau_\epsilon + \tau_n + \tau_\mathcal{J}) + \tau_\mathcal{J} \left. \frac{\partial P}{\partial n}\right|_\epsilon \tau_\epsilon \right) \right. \nonumber \\
    &+ \left. \rho \left( -\lambda_n (\tau_Q+\tau_\epsilon) + \left. \frac{\partial P}{\partial \epsilon} \right|_n \tau_Q \tau_n + \tau_Q \tau_P + \beta_\epsilon \tau_\epsilon \right)\right]k^2, \\
    a_2 &= \left[ V + n \left( \beta_n + \left. \frac{\partial P }{\partial n} \right|_\epsilon (\tau_n + \tau_\epsilon + \tau_\mathcal{J}) \right) + \rho \left( -\lambda_n + \left. \frac{\partial P}{\partial \epsilon}\right|_n (\tau_n + \tau_Q + \tau_\epsilon) \right) \right] k^2 \nonumber \\
    & + \left[ - V (\beta_\epsilon \tau_n + \lambda_n \tau_\epsilon) + n \beta_n ( \tau_\mathcal{J}+\tau_n) \tau_P + \rho \left( (\beta_n \lambda\epsilon - \beta_\epsilon \lambda_n) \tau_\epsilon + (\beta_\epsilon \tau_n-\lambda_n \tau_Q) \tau_P \right) \right] k^4, \\
    a_1 &= \rho c_s^2 k^2 + \left[ -V ( \lambda_n + \beta_\epsilon) + (\rho \beta_\epsilon + n \beta_n) \tau_P \right. \nonumber \\
    &+ \left. n (\tau_n + \tau_J) \left( \beta_n \left. \frac{\partial P}{\partial \epsilon} \right|_n - \beta_\epsilon \left. \frac{\partial P}{\partial n} \right|_\epsilon\right) + \rho (\tau_\epsilon + \tau_Q) \left(\lambda_\epsilon \left. \frac{\partial P}{\partial n} \right|_\epsilon -\lambda_n \left. \frac{\partial P}{\partial \epsilon} \right|_n \right) \right]k^4, \\
    a_0 &= \left[ n \left( \beta_n \left. \frac{\partial P}{\partial \epsilon} \right|_n - \beta_\epsilon \left. \frac{\partial P}{\partial n} \right|_\epsilon \right) - \rho \left( \lambda_n \left. \frac{\partial P}{\partial \epsilon} \right|_n - \lambda_\epsilon \left. \frac{\partial P}{\partial n} \right|_\epsilon \right)\right] k^4 + \left( \rho \tau_P - V \right) \left( \beta_n \lambda_\epsilon - \beta_\epsilon \lambda_n \right) k^6,
\end{align}
and with 
\begin{equation}
    c_s^2 = \left. \frac{\partial P}{\partial \epsilon} \right|_n + \frac{n}{\rho} \left. \frac{\partial P}{\partial n} \right|_\epsilon \label{eq:speed_of_sound}
\end{equation}
being the adiabatic speed of sound of the ideal fluid.
For determining the stability we employ the Routh-Hurwitz theorem \cite{Gantmacher1959v2, ogata2010modern} to the polynomials \eqref{eq:stability_polynomial_1} and \eqref{eq:stability_polynomial_2}, the Routh matrix of polynomial \eqref{eq:stability_polynomial_2} is
\begin{align}
   \boldsymbol{H} = \left[ \begin{array}{cccccc}
            a_5 & a_3 & a_1 & 0 & 0 & 0 \\
            a_6 & a_4 & a_2 & a_0 & 0 & 0 \\
            0 & a_5 & a_3 & a_1 & 0 & 0 \\
            0 & a_6 & a_4 & a_2 & a_0 & 0 \\
            0 & 0 & a_5 & a_3 & a_1 & 0 \\
            0 & 0 & a_6 & a_4 & a_2 & a_0 \\
        \end{array} \right].
\end{align}

According to the Routh-Hurwitz theorem, the stability for the polynomial \eqref{eq:stability_polynomial_2}, is guaranteed if $a_6>0$, and all the upper-left $n\times n$ submatrices have positive determinant, with $n$ going from 1 to 6. We define the matrices 
$\boldsymbol{H_1} = \left[a_5 \right]$, $\boldsymbol{H_2} = \left[ \begin{array}{cc}
            a_5 & a_3  \\
            a_6 & a_4 
        \end{array} \right]$,  $\boldsymbol{H_3} = \left[ \begin{array}{cccccc}
            a_5 & a_3 & a_1 \\
            a_6 & a_4 & a_2  \\
            0 & a_5 & a_3 
        \end{array} \right]$,..., $\boldsymbol{H_6}=\boldsymbol{H}$
and write the stability result as a Lemma (including the condition for polynomial \eqref{eq:stability_polynomial_1}).  

\begin{lemma} \label{stability}
Given the entropy conditions $\sigma+\sigma_0\geq0$ and $\eta\geq0$, the hyperbolicity condition $D\geq0$, the local thermodynamic stability requirement $s_2>0$, the assumptions $\tau_n,\tau_\epsilon,\tau_Q>0$, then the set of evolution equations \eqref{eq:evolution1}-\eqref{eq:evolution3}, with dissipative terms expressed as Eqs. \eqref{eq:A}-\eqref{eq:J}, are linearly stable around a background uniform solution if the following conditions are satisfied 
\begin{equation}
    0\leq\det(\boldsymbol{H_i})  \text{\ for \ }  i=2,3,4,5
\end{equation}
\end{lemma}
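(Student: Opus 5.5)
The proof splits along the factorization $\det\mathcal{S}=-P_{\rm shear}^2P_{\rm sound}$. The strategy is to apply the Routh--Hurwitz criterion to each factor separately. The stated hypotheses supply the conditions on $a_6,a_5,a_0$ and on $\det\boldsymbol{H_1}$ and $\det\boldsymbol{H_6}$ that are not listed explicitly in the statement.

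\textbf{Shear channel.} $P_{\rm shear}=\rho\tau_Q\Gamma^2+\rho\Gamma+\eta k^2$ is quadratic, so Routh--Hurwitz reduces to nonnegativity of its three coefficients. We have $\rho\tau_Q>0$ and $\rho>0$ (with $\rho=\epsilon+P>0$ as implicitly assumed throughout), and $\eta k^2\ge 0$ by the entropy condition $\eta\ge0$. Hence both roots satisfy $\Re\Gamma\le 0$. Equality holds only at $k=0$ or $\eta=0$, where $\Gamma=0$ is a marginal root. This is allowed by the criterion $\Re(\Gamma)\le0$.

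\textbf{Sound channel: the extreme minors.} For $P_{\rm sound}$, first note $a_6=\rho\tau_n\tau_Q\tau_\epsilon>0$ and $\det\boldsymbol{H_1}=a_5>0$, both immediate from the positivity of the $\tau$'s. Next, $\det\boldsymbol{H_6}=a_0\det\boldsymbol{H_5}$ (expand along the last column), so it suffices to show $a_0\ge0$. The $k^6$ coefficient of $a_0$ is exactly $D$ from Eq.~\eqref{eq:D}, which is $\ge0$ by hypothesis.

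\textbf{Sound channel: the $k^4$ term of $a_0$.} This is the main computation. I would substitute the definitions \eqref{lambda_1}--\eqref{beta_2}. The $\tau_Q$ and $\tau_\mathcal{J}$ contributions should cancel, since $n(\beta_n P_\epsilon-\beta_\epsilon P_n)$ and $\rho(\lambda_n P_\epsilon-\lambda_\epsilon P_n)$ are built so that the $\partial P$ pieces drop out of the antisymmetric combinations. What remains is
\begin{equation}
(\sigma+\sigma_0)\rho\,T\left(\left.\frac{\partial(\mu/T)}{\partial n}\right|_\epsilon \left.\frac{\partial P}{\partial\epsilon}\right|_n - \left.\frac{\partial(\mu/T)}{\partial\epsilon}\right|_n \left.\frac{\partial P}{\partial n}\right|_\epsilon\right)=(\sigma+\sigma_0)\rho\, s_2 ,
\end{equation}
in the same way that Eq.~\eqref{eq:D_expanded} was obtained. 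This is nonnegative by $\sigma+\sigma_0\ge0$ and $s_2>0$. Thus $a_0\ge0$, and together with $\det\boldsymbol{H_5}\ge0$ this gives $\det\boldsymbol{H_6}\ge0$. The minors $\det\boldsymbol{H_2},\dots,\det\boldsymbol{H_5}$ are nonnegative by assumption.

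\textbf{Concluding.} The Routh--Hurwitz theorem (in its nonstrict form, or via a limiting argument) then gives all roots in $\Re\Gamma\le0$. The main obstacle is the borderline case in which some minors vanish, for instance at $k=0$ where $a_0=a_1=a_2=0$. There the strict Routh--Hurwitz theorem does not apply directly. I would handle it by continuity: perturb the coefficients so that the minors become strictly positive, note that roots depend continuously on the coefficients, and conclude that the limiting roots lie in the closed left half-plane.
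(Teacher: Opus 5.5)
Your proposal follows the paper's proof step for step. The shear quadratic is handled by non-negativity of its coefficients, and $a_6>0$, $\det\boldsymbol{H_1}=a_5>0$ follow from the positive $\tau$'s. You use $\det\boldsymbol{H_6}=a_0\det\boldsymbol{H_5}$ and show $a_0=(\sigma+\sigma_0)\rho\,s_2k^4+Dk^6\geq0$ after substituting \eqref{lambda_1}--\eqref{beta_2}, with exactly the cancellation of the $\tau_Q$ and $\tau_{\mathcal J}$ terms that you describe.

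\textbf{The gap is in your closing paragraph.} Non-negativity of the Hurwitz minors is not sufficient for $\Re(\Gamma)\leq0$. It also does not let you perturb the coefficients until the minors become strictly positive.
\begin{itemize}
\item Take $\Gamma(\Gamma-1)(\Gamma+1)^4=\Gamma^6+3\Gamma^5+2\Gamma^4-2\Gamma^3-3\Gamma^2-\Gamma$.
\item It has $a_6=1$, $a_5=3$, $a_0=0$, and minors $\det\boldsymbol{H_1},\dots,\det\boldsymbol{H_6}$ equal to $3,8,8,0,0,0$.
\item So it satisfies every hypothesis you use for $P_{\rm sound}$, yet it has the root $\Gamma=1$.
\item By continuity of roots, no nearby polynomial is stable.
\end{itemize}

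The paper's proof has the same weakness, since it simply asserts that $\det\boldsymbol{H_i}\geq0$ suffices. So this is a shared soft spot rather than a departure from the paper's argument.

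\textbf{A rigorous repair.}
\begin{itemize}
\item At $k=0$, argue directly: there $a_0=a_1=a_2=0$ and $P_{\rm sound}=\rho\,\Gamma^3(1+\tau_n\Gamma)(1+\tau_Q\Gamma)(1+\tau_\epsilon\Gamma)$, so the roots are $0$ and $-1/\tau_i$.
\item For $k\neq0$, perturb along $k$ rather than along arbitrary coefficients. Each $\det\boldsymbol{H_i}$ is a polynomial in $k^2$. If none vanishes identically, they are all strictly positive away from finitely many values of $k$.
\item At those generic $k$, the strict Routh--Hurwitz theorem gives $\Re(\Gamma)<0$.
\item Since $a_6$ does not depend on $k$, the roots depend continuously on $k$. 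Hence $\Re(\Gamma)\leq0$ also at the exceptional points.
\end{itemize}
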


Note that $0\leq\det(\boldsymbol{H_1})$ and $0\leq\det(\boldsymbol{H_6})$ are required by the Routh-Hurwitz theorem, but not by the lemma, since they are automatically satisfied if the assumptions of the lemma hold. We explain this in the proof below.

\begin{proof}
    Linear stability of \eqref{eq:evolution1}-\eqref{eq:evolution3} requires all the roots in $\Gamma$ of the polynomials \eqref{eq:stability_polynomial_1} and \eqref{eq:stability_polynomial_2} to have non positive real part.  For \eqref{eq:stability_polynomial_1} this is trivially satisfied by the entropy condition $\eta\geq0$ and the assumption $\tau_Q>0$. This can be seen using the Routh-Hurwitz theorem \cite{Gantmacher1959v2, ogata2010modern}, which in case of a second order polynomial reduces to requiring all the coefficients to be non-negative.
    For \eqref{eq:stability_polynomial_2} the Routh-Hurwitz theorem states that sufficient conditions for stability are $0\leq\textrm{det}(\boldsymbol{H_i})$ for $i=1,2,..,6$, and $a_6>0$.
    Condition $a_6>0$ is trivially satisfied. Moreover, $det(\boldsymbol{H_6})=a_0det(\boldsymbol{H_5})$, implying that if $a_0\geq0$, the last condition for $i=6$ is redundant.
    It remains now to prove that $a_0\geq0$. This can be done substituting $\lambda_n$, $\lambda_\epsilon$, $\beta_n$, $\beta_\epsilon$ from equations  \eqref{lambda_1}-\eqref{beta_2} into the expression of $a_0$ to write
\begin{equation}
    a_0 = (\sigma+\sigma_0)(\epsilon+P) s_2 k^4 + D k^6
\end{equation}
 under the assumptions $\sigma+\sigma_0\geq0$ and $D\geq0$, and $s_2>0$ such expression is non-negative. 
Finally, we can get rid of the condition $det(\boldsymbol{H_1})\geq0$, since it is trivially satisfied by our assumptions $\tau_Q,\tau_n,\tau_\epsilon>0$.
    
\end{proof}

We linearize around a rest-frame homogeneous equilibrium in Minkowski spacetime for simplicity. The stability conditions obtained from this analysis are pointwise, namely, they depend only on the equation of state and on transport coefficients evaluated at the local fluid's configuration. In a general curved spacetime and for a generic background velocity, one can always work in a local inertial, comoving frame at each spacetime point. In that frame, the leading-order dynamics reduces to the Minkowski, rest-frame analysis we performed here.
For this reason, these stability conditions must be checked even when the background is not flat and the fluid is not at rest.

\subsection{Choice of frame} \label{subsection:choice_of_frame}

We want to find a frame that satisfies all the strong hyperbolicity, causality, entropy, and stability conditions for our formulation in the case where we don't have degenerate characteristic modes (i.e. $D\neq0$). We decide to adopt the frame class defined in \cite{Pandya:2022sff} and start by defining the following quantities, which allow us to express the conditions using only dimensionless variables:
\begin{align}
    &\kappa_\epsilon \equiv \frac{\rho^2 T}{n} \left. \frac{\partial(\mu/T)}{\partial \epsilon}\right|_n, \quad \kappa_n \equiv \rho T \left. \frac{\partial(\mu/T)}{\partial n} \right|_\epsilon, \quad \kappa_s \equiv \kappa_n + \kappa_\epsilon,
    \quad \alpha \equiv \left. \frac{1}{c_s^2} \frac{\partial P}{\partial \epsilon}\right|_n, \quad \omega \equiv \frac{\kappa_s}{\kappa_\epsilon}, \label{eq:EOS_parameters}
\end{align}
where $c_s^2$ is the adiabatic speed of sound defined in Eq. \eqref{eq:speed_of_sound}. With these definitions, we can express the transport coefficients defined in Eqs. \eqref{lambda_1}-\eqref{beta_2} as
\begin{align}
    \beta_e &= \tau_Q \alpha c_s^2 + \frac{\sigma}{\rho} \kappa_\epsilon, & \beta_n &= \tau_Q (1-\alpha)\frac{\rho}{n} c_s^2 + \frac{\sigma}{n} \kappa_\epsilon (\omega-1), \\
    \lambda_\epsilon &= \tau_\mathcal{J} \alpha c_s^2 \frac{n}{\rho} -\sigma_0\frac{n}{\rho^2} \kappa_\epsilon, & \lambda_n &= \tau_\mathcal{J} (1-\alpha) c_s^2 - \frac{\sigma_0}{\rho} \kappa_\epsilon (\omega-1).
\end{align}

The constitutive parameters are then factorized as:
\begin{align}
    &\tau_\epsilon = \tau_Q = \tau_n = \tau_\mathcal{J} \equiv \hat{\tau} L \hat{V}, \quad \tau_P \equiv 2 \alpha c_s^2 L \hat{V}, \nonumber \\
    &V \equiv \hat{V} L \rho c_s^2, \quad \eta \equiv\rho c_s^2 L \hat{\eta}, \quad \zeta \equiv \rho c_s^2 L \hat{\zeta}, \quad \hat{V}\equiv \hat{\zeta}+\frac{4}{3}\hat{\eta} \label{eq:frame_factorization} \\
    &\sigma \equiv \hat{\sigma} \hat{V} L \rho c_s^2 /(-\kappa_\epsilon), \quad \sigma_0 \equiv \tilde{\sigma} \hat{V} L \rho c_s^2 /(-\kappa_\epsilon) \nonumber 
\end{align}
with $\hat{\tau},\hat{\eta},\hat{\zeta},\hat{\sigma},\tilde{\sigma}$ being dimensionless parameters (that we are going to constraint), and $L$ a characteristic length. Assuming $\hat{\tau}>0$, and $\hat{\eta},\hat{\zeta},\hat{\sigma}+\tilde{\sigma}>0$, this choice automatically satisfies the causality and stability conditions of the shear channel, as well as the entropy conditions. We can express all the sound causality and stability conditions as functions of $\hat{\tau}$, $\hat{\sigma}$, $\tilde{\sigma}$, and the adimensional EoS variables $\alpha$, $\omega$, and $c_s^2$. We define the set of causality conditions we want to satisfy as:
\begin{align}
    C_1 &\equiv -A>0, & \hat{C}_1 &\equiv C_1/(L^3 \hat{\tau}^3 \hat{V}^3 \rho)\\
    C_2 &\equiv B>0,  & \hat{C}_2 &\equiv C_2 /(c_s^2 L^3 \hat{\tau}^2 \hat{V}^3 \rho)\\
    C_3 &\equiv -C>0, & \hat{C}_3 &\equiv C_3/(c_s^4 L^3 \hat{\tau} \hat{V}^3 \rho)\\
    C_4 &\equiv D>0, & \hat{C}_4 &\equiv C_4 /(c_s^6 L^3 \hat{\tau} \hat{V}^3 \rho)\\
    C_5 &\equiv -A-B-C-D>0, & \hat{C}_5 &\equiv C_5/(L^3 \hat{\tau} \hat{V}^3 \rho)\\
    C_6 &\equiv -3A-2B-C>0, & \hat{C}_6 &\equiv C_6/(L^3 \hat{\tau} \hat{V}^3 \rho)\\
    C_7 &\equiv -3A-B>0, & \hat{C}_7 &\equiv C_7/(c_s^4 L^3 \hat{\tau} \hat{V}^3 \rho)\\
    C_8 &\equiv 18 A B C D + B^2 C^2 - 27 A^2 D^2  - 4 A C^3 - 4 B^3 D>0, & \hat{C}_8 &\equiv C_8 /(c_s^{6} L^{12} \hat{\tau}^6 \hat{V}^{12} \rho^4)
\end{align}
where we additionally have defined the reduced causality conditions $\hat{C}_i>0$ in order to make the conditions dimensionless and remove the dependence on $L$ and $\hat{V}$.

On the other hand, we define the stability conditions as:
\begin{align}
    S_1 &\equiv \textrm{det} (\boldsymbol{H_2}) = S_1^{(0)} + S_1^{(2)} k^2\geq0 \\
        S_2 &\equiv \textrm{det} (\boldsymbol{H_3})  = S_2^{(0)} + S_2^{(2)}k^2 + S_2^{(4)} k^4 \geq 0 \\
        S_3 &\equiv \textrm{det} (\boldsymbol{H_4})  = S_3^{(2)} k^2 + S_3^{(4)} k^4 + S_3^{(6)} k^6 + S_3^{(8)} k^8 \geq0 \\
        S_4 &\equiv \textrm{det} (\boldsymbol{H_5})  = S_4^{(4)} k^4 + S_4^{(6)} k^6 + S_4^{(8)} k^8 + S_4^{(10)} k^{10}+ S_4^{(12)} k^{12} \geq 0 
\end{align}
Similarly to $\hat{C}_i$, and for the same reasons, we define the reduced stability conditions $\hat{S}_i^{(j)}$ by rescaling $S_i^{(j)}$:

\begin{align}
    \hat{S}_1^{(j)} &\equiv S_1^{(j)}/( c_s^jL^{3+j} \hat{\tau}^{3+j/2} \hat{V}^{3+j} \rho^2), & \hat{S}_2^{(j)} &\equiv S_2^{(j)}/(c_s^j L^{3+j}\hat{\tau}^{3+j/2}\hat{V}^{3+j} \rho^3)\\
    \hat{S}_3^{(j)} &\equiv S_3^{(j)}/( c_s^jL^{2+j} \hat{\tau}^{2+j/2} \hat{V}^{2+j} \rho^4), & \hat{S}_4^{(j)} &\equiv S_4^{(j)}/( c_s^jL^{j} \hat{\tau}^{1+j/2} \hat{V}^{j} \rho^5)
\end{align}

We emphasize that, as for $\hat{C}_i$, also $\hat{S}_i^j$ does not depend on $L$ and $\hat{V}$. Causality and stability requires $\hat{C}_i>0$ and $\hat{S}_i^{(j)}\geq0$, respectively, for every $i$ and $j$. The explicit form of the causality and stability conditions in term of the fluid variables and frame coefficients is extremely complicated and we do not report it here, but it can be found in the Mathematica notebooks attached \cite{notebooks_and_scripts}. Once we plug the frame parameters factorization into the expressions of the causality and stability quantities we obtain expressions of the type

\begin{align}
    \hat{C}_i &= \sum_{q=0}^{q_{max}} g_{(i,q)}(\hat{\sigma},\tilde{\sigma},\alpha,\omega,c_s^2)\hat{\tau}^q \\
    \hat{S}_i^{(j)} &= \sum_{q=0}^{q_{max}} h_{(i,j,q)}(\hat{\sigma},\tilde{\sigma},\alpha,\omega,c_s^2)\hat{\tau}^q
\end{align}
with $g_{(i,q)}$ and $h_{(i,j,q)}$ being smooth functions whose explicit expression can be found in the Mathematica notebooks attached, and $q_{max}$ being different for each equation. Following \cite{Pandya:2022sff}, we impose:
\begin{align}
    &\alpha>1, \quad 0<\omega<1/2, \quad 0<\alpha \omega<1/2, \quad 0<c_s^2<1,  \label{eq:eos_constraints}\\
    &0\leq\tilde{\sigma}<1/3, \quad 0 < \hat{\sigma} + \tilde{\sigma} < 1/3, \quad \hat{\tau}>0. \label{eq:frame_constraints}
\end{align}
We can show that $\hat{C}_1$, $\hat{C}_2$, $\hat{C}_4$ are always positive under these assumptions, as well as $\hat{S}_1^{(0)}$, $\hat{S}_1^{(2)}$, $\hat{S}_2^{(0)}$, $\hat{S}_2^{(2)}$, $\hat{S}_3^{(2)}$, $\hat{S}_4^{(4)}$. For the remaining conditions, we can show that the previous assumptions guarantee that

\begin{align}
    g_{(i,q_{max})}(\hat{\sigma},\tilde{\sigma},\alpha,\omega,c_s^2)&>0 \\
    h_{(i,j,q_{max})}(\hat{\sigma},\tilde{\sigma},\alpha,\omega,c_s^2)&>0
\end{align}
hence, every $\hat{C}_i$ and $\hat{S}_i^{(j)}$ will tend to infinity for $\hat{\tau}\rightarrow+\infty$, guaranteeing that there exists a $\hat{\tau}$ large enough to satisfy all the causality and stability conditions for a given set of $\alpha$, $\omega$, $c_s^2$ $\hat{\sigma}$, $\tilde{\sigma}$.

\begin{figure}
    \centering
    \includegraphics[width=0.45\linewidth]{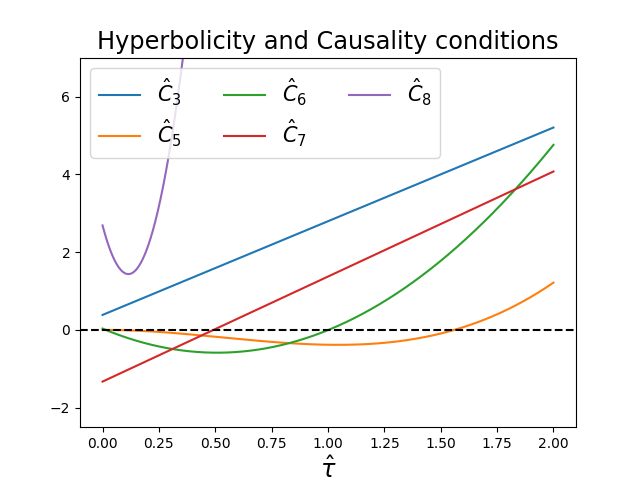}
    \includegraphics[width=0.45\linewidth]{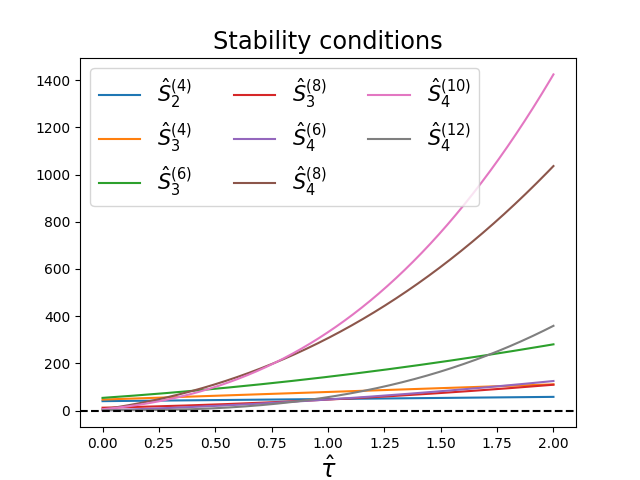}
    \caption{Hyperbolicity, Causality, and stability conditions as function of $\hat{\tau}$ for $\alpha=5/3$, $c_s^2=0.3$, $\omega=0.1$, $\hat{\sigma}=-0.05$, $\tilde{\sigma}=0.15$.}
    \label{fig:coditions}
\end{figure}

There are still two caveats to mention before concluding our analysis. First, it is the fact that
\begin{equation}
    g_{(5,q_{\rm max})} = 1 - c_s^2
\end{equation}
implying that the leading coefficient in $\hat{\tau}$ of $\hat{C}_5$ tends to vanish when $c_s^2\rightarrow1$. This means the $\hat{\tau}$ can become arbitrarily large when $c_s^2$ approaches the speed of light. This issue has been found also in \cite{Pandya:2022sff} and makes this choice of frame not suitable for fluids where the speed of sound can get close to luminal speed. \\
The second caveat comes from the non existence of an upper bound for $\alpha$, this variable can indeed diverge as $\alpha \sim 1/c_s^2$ when $c_s^2\rightarrow0$, requiring arbitrarly large $\hat{\tau}$ as in the previous case. However, this is not a source of concern. Indeed, in the Mathematica notebooks attached, one can see that the leading term of $\tilde{C}_i$ and $\tilde{S}_i^{(j)}$ in $1/c_s^2$ always has positive coefficient under the conditions of Eqs. \eqref{eq:eos_constraints} and \eqref{eq:frame_constraints}, leading the conditions to be always satisfied in the limit of vanishing $c_s^2$. This can be seen writing the causality and stability conditions expressions as series in $1/c_s^2$, and noticing that in this limit $\alpha\sim1/c_s^2$ and, in order to keep $\alpha \omega$ bound, $\omega\sim c_s^2$. This guarantees the existence of a finite $\hat{\tau}$ satisfying all causality and stability conditions in the limit of vanishing $c_s^2$ (or infinite $\alpha$).

In FIG. (\ref{fig:coditions}) we show an example of the ideas just described. Both the causality and stability conditions are plotted as a function of $\hat{\tau}$ for a fixed value of all other parameters. As one can see all quantities grow monotonically (beyond a certain point), becoming eventually  positive. This allows to find a value of $\hat{\tau}$ big enough to satisfy all the conditions.

Finally, we provide the reader with a set of explicit values of $\hat{\sigma}$, $\tilde{\sigma}$, and $\hat{\tau}$ that satisfy all the conditions on a range of hydrodynamics parameters often encountered in simulations. 

\begin{claim} \label{claim}
Given the evolution equations \eqref{eq:evolution1}, \eqref{eq:evolution2}, \eqref{eq:evolution3} with $u^\mu u_\mu=-1$, dissipative terms given by Eqs. \eqref{eq:A}-\eqref{eq:J},
and an equation of state laying in the three-dimensional parameter space
\begin{equation}
    \left\{ \left. \frac{\partial P}{\partial \epsilon} \right|_n, \alpha\omega, c_s^2 \right\} \in [10^{-6},1] \times[0,1/2]\times[10^{-6},0.9] \cap \{\alpha>1\}.
    \label{eq:eos_range}
\end{equation}
Then, we claim that any frame defined by the parametrization of Eq. \eqref{eq:frame_factorization} with dimensionless coefficients satisfying the conditions 
\begin{equation}
    0\leq\tilde{\sigma} < \frac{1}{3}, \quad 0<\tilde{\sigma}+\hat{\sigma}<\frac{1}{3}, \quad \hat{\tau}=\frac{2+\delta}{1-c_s^2}, \quad \delta>0, \quad \hat{\zeta}\geq0, \quad \hat{\eta}>0\label{eq:frame_constraints}
\end{equation}
yields a strongly hyperbolic, causal, and linearly stable theory with positive entropy production at $\mathcal{O}(\nabla^3)$.
\end{claim}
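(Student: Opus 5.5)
The claim follows by combining Lemmas~\ref{Lemma:hip:cau}, \ref{Lemma:entropy} and \ref{stability} with the frame parametrization \eqref{eq:frame_factorization}. The plan is to reduce everything to the sign of finitely many explicit functions of $(\alpha,\omega,c_s^2,\hat\sigma,\tilde\sigma,\hat\tau)$ and then check those signs on the compact box \eqref{eq:eos_range}.

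\emph{Entropy and shear.} Under \eqref{eq:frame_factorization} we have $\zeta=\rho c_s^2L\hat\zeta\geq0$, $\eta=\rho c_s^2 L\hat\eta>0$ and $\sigma+\sigma_0=(\hat\sigma+\tilde\sigma)\hat V L\rho c_s^2/(-\kappa_\epsilon)$. Positivity of the last quantity requires $\kappa_\epsilon<0$. I would derive this from $s_2>0$ (Appendix~\ref{appendix:positivity_of_detJ(e,n)}) together with $\omega>0$, or else add it as part of the EoS hypothesis. Lemma~\ref{Lemma:entropy} then gives the entropy statement. For the shear channel, $c_{sh}^2=\eta/(\rho\tau_Q)=c_s^2\hat\eta/(\hat\tau\hat V)\leq \tfrac34 c_s^2/\hat\tau$. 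With $\hat\tau=(2+\delta)/(1-c_s^2)>2$ this gives \eqref{eq_con_1} and \eqref{eq_con_6}. Stability of $P_{\rm shear}$ is immediate because all its coefficients are positive.

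\emph{Sound channel.} I would substitute \eqref{eq:frame_factorization} into $A,B,C,D,\Delta$ and into the Hurwitz minors, obtaining the reduced quantities $\hat C_i$ and $\hat S_i^{(j)}$. These are independent of $L$ and $\hat V$, and they are polynomials in $\hat\tau$ whose coefficients are rational in the EoS variables. Conditions $\hat C_1,\hat C_2,\hat C_4>0$ and the low-order $\hat S$ coefficients are positive under \eqref{eq:eos_constraints}--\eqref{eq:frame_constraints}, as stated in the text. Here $\hat C_4>0$ should follow from \eqref{eq:D_expanded}: $\rho\tau_P-V=\hat VL\rho c_s^2(2\alpha-1)>0$, $s_2>0$, and the sign of $\sigma\tau_{\mathcal J}+\sigma_0\tau_Q\propto(\hat\sigma+\tilde\sigma)/(-\kappa_\epsilon)$. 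For the remaining $\hat C_i$, $\hat S_i^{(j)}$ I would substitute $\hat\tau=(2+\delta)/(1-c_s^2)$ and multiply by a suitable positive power of $(1-c_s^2)$. The goal is to show that each resulting polynomial in $\delta$ has all coefficients nonnegative, with a positive constant term, on the box. This gives positivity for every $\delta>0$ and not merely for large $\hat\tau$.

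Once $\hat C_1,\dots,\hat C_8>0$ and $\hat S_i^{(j)}\geq0$ are established, Lemma~\ref{Lemma:hip:cau} gives strong hyperbolicity and causality, noting that $D>0$ excludes the degenerate case of Lemma~\ref{lemma_Dequa0}. Lemma~\ref{stability} gives linear stability, since each minor is a polynomial in $k^2$ with nonnegative coefficients.

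\emph{Main obstacle.} The hard step is the uniform sign verification of the high-degree expressions, especially the discriminant $\hat C_8$ and $\hat S_4^{(j)}$, over the whole box. This includes the edges $c_s^2\to10^{-6}$, where $\alpha$ can be of order $1/c_s^2$, and $c_s^2\to0.9$. Analytic factorization is unlikely to work. I would first attempt coefficient-wise positivity in $\delta$ after the substitution. Where that fails, I would fall back on a certified numerical check: interval arithmetic on a subdivision of the compact box, with the $c_s^2\to0$ corner handled by expanding in $1/c_s^2$ with $\alpha\omega$ fixed, as sketched in the preceding text. This is why the statement is presented as a claim rather than a theorem.
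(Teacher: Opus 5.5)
Your reduction is the same as the paper's: Lemmas~\ref{Lemma:hip:cau}, \ref{Lemma:entropy} and \ref{stability} plus the frame substitution, leaving only the signs of $\hat C_i$ and $\hat S_i^{(j)}$. The analytic pieces you add are correct: $c_{sh}^2\le\tfrac34 c_s^2/\hat\tau<1$, and $\hat C_4=(2\alpha-1)(\hat\sigma+\tilde\sigma)(1-\alpha\omega)>0$. The paper does not certify the remaining signs. It shows that the leading coefficients in $\hat\tau$ are positive and then samples a grid in the three EoS variables only.

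The gap is that the step you flag as hard cannot be completed, because $\hat C_5>0$ is false on part of the stated domain. Write $\sigma_+\equiv\hat\sigma+\tilde\sigma$. Substituting the frame into $A,B,C,D$ gives
\begin{align*}
\hat C_5 &= (1-c_s^2)\hat\tau^2-c_s^2\hat\tau\Big[1+c_s^2+2\alpha(1-c_s^2)+\tilde\sigma\omega+\sigma_+\big(1-2\omega-c_s^2(1-\alpha\omega)\big)\Big]\\
&\quad +c_s^4\Big[(2\alpha-1)\tilde\sigma\omega+\sigma_+\big(1+2\alpha-4\alpha\omega-c_s^2(2\alpha-1)(1-\alpha\omega)\big)\Big].
\end{align*}
Put $\hat\tau=(2+\delta)/(1-c_s^2)$ and let $\delta\to0^+$, $\sigma_+\to0^+$, $\alpha c_s^2=\left.\partial P/\partial\epsilon\right|_n\to1$. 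Then
\[
(1-c_s^2)\,\hat C_5\;\to\;2c_s^2\big(1-c_s^2-\tilde\sigma\omega\big)+(1-c_s^2)\,c_s^4\,(2\alpha-1)\,\tilde\sigma\omega .
\]
At $c_s^2=0.9$, $\alpha\omega=1/2$ (so $\alpha=10/9$, $\omega=0.45$) and $\tilde\sigma\to1/3$, this limit is about $-0.075$. The failure persists at interior points. For example, $c_s^2=0.89$, $\left.\partial P/\partial\epsilon\right|_n=0.99$, $\alpha\omega=0.49$, $\tilde\sigma=0.33$, $\hat\sigma=-0.329$, $\delta=10^{-3}$ gives $\hat C_5\approx-0.37$. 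I also confirmed this by evaluating the $3\times3$ sound block of the principal symbol directly at $\lambda=1$. Since $A=-\rho\tau_\epsilon\tau_n\tau_Q<0$, having $f(1)=A+B+C+D>0$ forces a root $c_{s_i}^2>1$. This is genuine acausality, not merely the failure of a sufficient condition.

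Consequently neither of your routes can close the argument. Coefficient-wise positivity in $\delta$ fails because the $\delta^0$ coefficient of $(1-c_s^2)\hat C_5$ is negative there. Interval arithmetic cannot certify a false inequality either. The statement itself must be restricted: either shrink the EoS domain or strengthen the frame condition, for instance with an EoS- and $\tilde\sigma$-dependent lower bound on $\delta$. The mechanism is the $\tilde\sigma\omega$ term competing with $1-c_s^2$ when $\left.\partial P/\partial\epsilon\right|_n\approx1$ and $c_s^2$ is near the top of the box.

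Your fallback also subdivides only the EoS box. The frame coefficients $(\hat\sigma,\tilde\sigma,\delta)$ must be part of the certified domain, including the open edges $\hat\sigma+\tilde\sigma\to0^+$, $\tilde\sigma\to1/3^-$, $\delta\to0^+$, plus a tail argument for $\delta\to\infty$. This is the same blind spot as the paper's three-dimensional grid, and it is exactly where the counterexample lies.

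A smaller slip: $\kappa_\epsilon<0$ follows from $s_2=-\kappa_\epsilon\,(c_s^2/\rho)(1-\alpha\omega)>0$ together with $\alpha\omega\le1/2$, not from $\omega>0$.
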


This statement can be tested numerically using the Python script attached \cite{notebooks_and_scripts}. The test consists in sampling the parameter space \eqref{eq:eos_range} with a three-dimensional $n\times n\times n$ grid removing the points where $\alpha\leq1$, and computing $\hat{C}_i$ and $\hat{S_i}$ in every point. The grid points has been distributed uniformly in $\alpha \omega$ and logarithmically in $\left. \frac{\partial P}{\partial \epsilon} \right|_n$ and $c_s^2$. We find that all the conditions are satisfied on the sampled points and the results do not significantly change increasing the grid resolution up $n=300$. Given the smoothness of $\hat{C}_i$ and $\hat{S}^{(j)}_i$, we claim it is reasonable to assume that these conditions are satisfied on the whole domain of \eqref{eq:eos_range}. Additionally, we compute the squared characteristic velocities $c_{s_i}^2$ of the sound channel using Eq. \eqref{eq:sound_velocities}, and we verify they are always real and causal. 

We highlight that in case of an ideal gas EoS, $\partial P/\partial\epsilon|_n = \Gamma-1$, $\alpha \omega = (mn/\epsilon)(\Gamma-1)/\Gamma$, with $m$ being the particle's rest mass, and $\omega=(mn/\epsilon)P/(\epsilon+P)<1/2$ (since $P<\epsilon$), $c_s^2=\Gamma P/(\epsilon+P)<\Gamma/2$ see  \cite{Pandya:2022sff} for more details. Hence, the range \eqref{eq:eos_range} covers any ideal gas EoS with $\Gamma\in(1,1.8)$. In appendix \ref{appendix:hybrid_eos}, we show that this frame choice can also be applied to piecewise polytropic EoS with thermal contribution in the form of an ideal gas, with non very restrictive conditions on the EoS parameters, and in a range of densities and energies usually encountered in binary neutron star mergers.

\section{Coupling to Einstein field equations}
\label{section:coupling_with_gravity}
Dealing with first-order hydrodynamics in general relativity brings an additional difficulty with respect to the ideal fluid case: the hydrodynamics equations are now second order as the Einstein field equations (EFE). This means we have to study the strong hyperbolicity of the coupled system employing a principal matrix pencil containing both fluid and spacetime evolution equations. The strong hyperbolicity of the mixed EFE-BDNK system has already been proven straightforwardly in \cite{Bemfica:2017wps}, \cite{Bemfica:2019knx}, \cite{Bemfica:2020zjp}. In this section we show that, for our formulation, this can be done easily at second order using the strong hyperbolicity definition of \cite{Abalos:2026kim} as long as the fluid and spacetime eigenvalues are not degenerate. 

We want to solve the previous system together with the EFE:
\begin{align}
    G_{\mu\nu} &= 8\pi T_{\mu\nu}
\end{align}

\begin{lemma} \label{SABDNK-GR}
    Consider the Einstein field equations in generalized harmonic gauge, coupled with the evolution equations \eqref{eq:evolution1}-\eqref{eq:evolution3}, with dissipative terms expressed as Eqs. \eqref{eq:A}-\eqref{eq:J}. If the hyperbolicity and causality conditions of Lemma \ref{Lemma:hip:cau} are satisfied, the resulting system is strongly hyperbolic and causal.
\end{lemma}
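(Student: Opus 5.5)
The plan is to write the coupled Einstein--BDNK system as a single second-order system $\mathfrak{N}^{\mu\nu}\partial_\mu\partial_\nu W+\dots=0$ with $W=(g_{\alpha\beta},\epsilon,n,v^\mu)$ and apply the strong hyperbolicity definition of \cite{Abalos:2026kim} to the full $15\times15$ principal symbol. The first step is to identify the principal parts. In generalized harmonic gauge the Einstein equations read $-\tfrac12 g^{\mu\nu}\partial_\mu\partial_\nu g_{\alpha\beta}+\dots=8\pi T_{\alpha\beta}$. The key observation is that $T_{\alpha\beta}$ contains only first derivatives of the fluid variables, so the fluid enters the Einstein block only through lower-order terms. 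Conversely, the fluid equations \eqref{eq:evolution1}--\eqref{eq:evolution3} contain second derivatives of the metric only through Christoffel symbols differentiated once, i.e. through terms like $\nabla_\mu Q^\mu$ or $\nabla_\mu \mathcal{J}^\mu$. These contain $\partial\Gamma\sim\partial^2 g$ multiplied by the first-order coefficients $\tau$'s, $\beta$'s, $\lambda$'s, and $\eta$. The principal symbol therefore has the block lower-triangular form
\[
S(\lambda)=\left[\begin{array}{cc} -\tfrac12 (g^{\mu\nu}l_\mu l_\nu)\,\mathbb{1}_{10} & 0\\ M(l) & \mathfrak{N}^{\mu\nu}l_\mu l_\nu\end{array}\right],
\]
where $M(l)$ is the coupling block, quadratic in $l$. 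This gives $\det S=c\,(g^{\mu\nu}l_\mu l_\nu)^{10}\,p(\lambda)$, with $p(\lambda)$ as in Eq. \eqref{eq_det_1}.

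The second step treats the real-velocity condition and causality. The gravitational roots are $\lambda=\pm1$, light cone speeds, each with multiplicity $10$. The fluid roots are $\pm c_{sh}$ and $\pm c_{s_i}$, which are real and lie in $(0,1)$ by Lemma \ref{Lemma:hip:cau}. All characteristic speeds are therefore real and at most $1$, so the coupled system is causal.

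The third step establishes the multiplicity and uniformity conditions, and this is where the non-degeneracy matters. Under conditions \eqref{eq_con_6}--\eqref{eq_con_9} the fluid speeds are strictly less than $1$, so fluid and gravitational eigenvalues never coincide.
\begin{itemize}
\item For $\lambda=\pm1$ the fluid block $\mathfrak{N}l l$ is invertible. Every $\delta g$ in the $10$-dimensional kernel of the Einstein block then extends uniquely to a kernel vector $(\delta g,\,-(\mathfrak{N}ll)^{-1}M\delta g)$. This gives a $10$-dimensional kernel matching the multiplicity, and it depends continuously on $k$ because the inverse is continuous away from fluid roots.
\item For a fluid root $\lambda_i$ the Einstein block $-\tfrac12 g^{\mu\nu}l_\mu l_\nu\neq0$ is invertible. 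Triangularity then forces $\delta g=0$, and the kernel reduces to $\ker(\mathfrak{N}ll)$, which by Lemma \ref{Lemma:hip:cau} has the right dimension and continuous dependence on $k$.
\end{itemize}
Together these give strong hyperbolicity.

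The main obstacle I expect is checking carefully that the block really is triangular: that no second metric derivatives in the fluid equations feed back principally into the Einstein block, and, more delicately, that the coupling $M(l)$ introduces no extra constraint on how the gravitational kernel extends. In particular I would verify whether gauge-source functions depending on fluid variables spoil triangularity, and I would choose them to depend only on the metric and coordinates. If $M$ is awkward to compute explicitly, I would argue only with its existence and smoothness in $l$, since the kernel construction above needs nothing more.
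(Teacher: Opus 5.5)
Your proposal is correct and follows essentially the same route as the paper. The paper orders the variables as (fluid, metric), so its coupling block $\boldsymbol{b}^{\alpha\beta}l_\alpha l_\beta$ sits in the upper-right corner; this is your $M(l)$ with the ordering reversed. As in your argument, the determinant factorizes, strict causality keeps the fluid speeds away from $\pm1$, and the kernels are $(\delta\mathbf{f},0)$ for fluid modes and $\delta\mathbf{g}$ extended by $-(\mathfrak{N}^{\alpha\beta}l_\alpha l_\beta)^{-1}\boldsymbol{b}^{\alpha\beta}l_\alpha l_\beta\,\delta\mathbf{g}$ for gravitational modes. The only difference is that the paper writes out the coupling block explicitly, which, as you note, the argument does not need.
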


\begin{proof}
The perturbation vector and principal symbol, using the generalized harmonic formulation, can now be expressed as

\begin{align}
    \delta \boldsymbol{\psi} &\equiv \left[ \begin{array}{c}
           \delta n \\
           \delta \epsilon \\
           \delta u^{\nu} \\ \hline
           \delta g^{\mu\nu}
        \end{array} \right] \equiv \left[ \begin{array}{c}
           \delta \mathbf{f} \\ \hline
           \delta \mathbf{g}
        \end{array} \right], \quad
        \boldsymbol{\mathcal{M}}^{\alpha\beta}l_\alpha l_\beta \equiv \left[ \begin{array}{c|c}
            \mathfrak{N}^{\alpha\beta}l_{\alpha}l_{\beta} & \boldsymbol{b}^{\alpha\beta}l_\alpha l_\beta \\ \hline
            \boldsymbol{O}_{10\times5} & \boldsymbol{I}_{10}g^{\alpha\beta}l_{\alpha}l_{\beta} 
        \end{array} \right]
\end{align}
with $\mathfrak{N}^{\alpha\beta}l_{\alpha}l_{\beta}$ being the principal symbol of the hydrodynamics equations defined in Eq. \eqref{eq:principal_symbol}, $\boldsymbol{O}_{10\times5}$ is a $10\times5$ zero matrix, $\boldsymbol{I}_{10}$ is the identity matrix with dimension 10, and the coupling $\boldsymbol{b}^{\alpha\beta}l_\alpha l_\beta$ between EFE and the hydrodynamics sector is a $10\times5$ matrix whose lines are expressed as

\begin{align}
    \boldsymbol{b}^{\alpha\beta}l_\alpha l_\beta & \equiv \left[ \begin{array}{c}
           B_1^{\lambda\sigma} \\
           B_2^{\lambda\sigma} \\
           B^{\gamma\lambda\sigma}_3\\
        \end{array} \right]
\end{align}

with

\begin{align}
   B_1^{\lambda \sigma} &\equiv - \frac{1}{2}n\tau_n (u\cdot l)^2 g^{\lambda \sigma} - \rho \tau_{\mathcal{J}} \left[ (u\cdot l) u^\lambda l^\sigma - \frac{1}{2}l^2 u^\lambda u^\sigma \right]\\
   B_2^{\lambda \sigma} & \equiv \frac{1}{2} \rho \tau_\epsilon (u \cdot l)^2 g^{\lambda \sigma} + \rho \tau_Q \left[(u\cdot l) u^\lambda l^\sigma-\frac{1}{2} l^2 u^\lambda u^\sigma \right]\\
   B_3^{\gamma\lambda \sigma} & \equiv \frac{1}{2} \left[ \rho \tau_P - \zeta\right] (u\cdot l) \Delta^{\gamma\alpha}l_\alpha g^{\lambda\sigma} - \eta (u\cdot l) \Delta^{\lambda\sigma \gamma \mu} l_\mu \nonumber \\
   & + \rho \tau_Q \left[ (u\cdot l)^2 u^\lambda \Delta^{\gamma \sigma} - \frac{1}{2}(u\cdot l) \Delta^\gamma_{\alpha}l^\alpha u^\lambda u^\sigma \right]
\end{align}

and $\Delta^{\alpha\beta\mu\nu}\equiv \frac{1}{2}(\Delta^{\alpha\mu} \Delta^{\beta\nu} + \Delta^{\mu\beta}\Delta^{\nu\alpha}-\frac{2}{3}\Delta^{\mu\nu}\Delta^{\alpha\beta})$ so that $\sigma^{\mu\nu}=\Delta^{\mu\nu\alpha\beta}\nabla_\alpha u_\beta$.
The determinant of the principal symbol is then unaffected by $\boldsymbol{b}^{\alpha\beta}l_\alpha l_\beta$, leaving the characteristic velocities of both hydrodynamics and spacetime sectors unchanged. Considering now, the same 3+1 foliation as for  the fluid sector, we obtain the same eigenvalues $\pm c_{sh}, \pm c_{s_i}$ for the fluid and $\pm1$ for the EFE. The latter have degeneracy 10. The coupled system is then at least weakly hyperbolic, and causal. It remains now to compute the kernel vectors to check for strong hyperbolicity. The kernel vectors  can be affected by the presence of $\boldsymbol{b}^{\alpha\beta}l_\alpha l_\beta$. 
This implies we will need to take into account the effect of the fluid-spacetime coupling in the principal matrix when writing the element of the kernel for the full system. We have to compute the elements of the kernel for each characteristic mode requiring for:

\begin{equation}
    \boldsymbol{\mathcal{M}}^{\alpha\beta}l_\alpha l_\beta \delta \boldsymbol{\psi} = \left[ \begin{array}{c}
         \mathfrak{N}^{\alpha\beta}l_{\alpha}l_{\beta}  \delta \mathbf{f} + \boldsymbol{b}^{\alpha\beta}l_\alpha l_\beta\delta \mathbf{g} \\ \hline
        \boldsymbol{I}_{10}g^{\alpha\beta} l_\alpha l_\beta \delta \mathbf{g}
    \end{array} \right] = 0 \label{eq:kernel_eq_full}
\end{equation}

We already know that the fluid and EFE sectors are strongly hyperbolic when taken separately. So, let us assume that $\delta \boldsymbol{f}$ belongs to the kernel of the fluid sector, and $\delta \boldsymbol{g}$ belongs to the EFE kernel.
It is straightforward to verify that the kernel vectors of the coupled system are:

\begin{equation}
    \delta \boldsymbol{\psi} = \left[ \begin{array}{c}
        \delta \mathbf{f} \\ \hline
        0
    \end{array} \right] \label{eq:kernel_eq_full} \quad \textrm{for hydrodynamics modes,}
\end{equation}

\begin{equation}
    \delta \boldsymbol{\psi} = \left[ \begin{array}{c}
        -\left( \mathfrak{N}^{\alpha\beta}l_{\alpha}l_{\beta} \right)^{-1} \left( \boldsymbol{b}^{\alpha\beta}l_\alpha l_\beta \right) \delta \mathbf{g} \\ \hline
        \delta \mathbf{g}
    \end{array} \right] \label{eq:kernel_eq_full} \quad \textrm{for EFE modes.}
\end{equation}

So every element of the hydrodynamics and EFE kernels, can be mapped into one element of the full system's kernel as long as $\mathfrak{N}^{\alpha\beta}l_{\alpha}l_{\beta}$ is invertible for EFE characteristic modes, i.e. as long as the hydrodynamics characteristic modes are not degenerate with the EFE ones, this requirement is guaranteed by the causality conditions. Moreover, the map depends smoothly on $k^\mu$. Hence the strong hyperbolicity of the system is preserved as long as the hydrodynamics characteristic mode are not degenerate with the EFE ones.
From a geometrical point of view this is equivalent to say that the null cones of the effective metric in the fluid sector, do not touch the null cones of spacetime metric.
This is imposed asking the inequalities coming from the causality conditions to hold in strong form. \\
\end{proof}

\section{Summary}

We extend the first-order general-relativistic viscous hydrodynamics model of \cite{Bemfica:2020zjp} to include a fully first-order charge current in the same spirit used there for the stress--energy tensor: namely, by adding out-of-equilibrium contributions proportional to the ideal-fluid equations of motion. Concretely, our model is summarized by Eqs.~\eqref{eq:A}-\eqref{eq:J} (together with the transport-coefficient identifications in Eqs.~\eqref{lambda_1}-\eqref{beta_2}, yielding a closed set of five second-order evolution equations for $(\epsilon,n,u^\mu)$ that can be written in flux-conservative form at the level of the fundamental conservation laws. In particular, the inclusion of first-derivative terms in the charge current removes the need to promote the charge conservation law to second order by taking an additional derivative (as in \cite{Bemfica:2020zjp}), and it avoids introducing spurious degenerate zero characteristic modes associated with that procedure. In our formulation, this degeneracy is lifted, so that the corresponding sector becomes genuinely wave propagating in the non-degenerate regime ($D\neq 0$).

Using the matrix-pencil criterion for second-order systems introduced in \cite{Abalos:2026kim} \footnote{We comment about the criterion presented in~\cite{Abalos:2026kim}, the failure of them does not exclude the possibility that another,
non-standard, reduction could be strongly hyperbolic (there is an example about this at
the end of section 6.1). However, this is not sufficient by itself to establish well-posedness of
the original system. One must still show that the energy estimate obtained for the reduced variables
can be translated back into the expected Sobolev estimate for the original variables, without
loss of derivatives. This additional step is non-trivial. In particular, a strongly hyperbolic
reduced system may give an estimate in variables that are not uniformly equivalent to the
natural variables of the original PDE system.}, we derive sufficient conditions for strong hyperbolicity and causality (Lemma~\ref{Lemma:hip:cau}) directly at second order, without performing an explicit first-order reduction. However, we emphasize that the criteria of~\cite{Abalos:2026kim} are fully equivalent to the standard first-order strong-hyperbolicity criteria, while avoiding the need to perform an explicit first-order reduction. With the same framework, we show that the formulation \cite{Bemfica:2020zjp}, 
modified with the use of equation \eqref{eq:charge_conservation_Bemfica} as the evolution equation for $n$, 
is only weakly hyperbolic in the charged case unless an additional frame restriction, $\rho\tau_P=V$, is imposed; in contrast, our modified charge current breaks the problematic degeneracy and yields a strongly hyperbolic system under suitable inequalities on the constitutive parameters. We also establish that the canonical entropy current \eqref{eq:entropy current} produces non-negative entropy generation up to $\mathcal{O}(\nabla^3)$ provided $\zeta\ge 0$, $\eta\ge 0$ and $\sigma+\sigma_0\ge0$ (Lemma~\ref{Lemma:entropy}), and we obtain pointwise sufficient conditions for linear stability around homogeneous equilibrium via a Routh--Hurwitz analysis of the sound-channel polynomial (Lemma~\ref{stability}).

Because the full set of causality and stability restrictions is algebraically cumbersome, we adopt a physically motivated class of frames (following \cite{Pandya:2022sff}) and exhibit a broad parameter range in which all requirements are simultaneously satisfied. In particular, Claim~\ref{claim} provides an explicit family of frame parameters (Eq.~\eqref{eq:frame_factorization} with the restrictions \eqref{eq:frame_constraints}) that yields a strongly hyperbolic, causal, and linearly stable theory with positive entropy production for a wide region of equations of state characterized by the parameter domain \eqref{eq:eos_range} and the additional restrictions \eqref{eq:eos_constraints}. This identifies a practical region of parameter space suitable for numerical applications with realistic equations of state beyond the simplified uncharged setups. We supplement this section with two Mathematica notebooks based on the one introduced in \cite{Pandya:2022sff}, and a Python script. Both can be found in \cite{notebooks_and_scripts}.

Finally, we study the coupling of our charged BDNK system to the Einstein field equations in the harmonic gauge. We show that, under the fluid-sector hyperbolicity and causality conditions, the coupled second-order principal symbol remains strongly hyperbolic and causal (Lemma~\ref{SABDNK-GR}), provided the fluid characteristic speeds do not coincide with the gravitational ones. Equivalently, the strong-hyperbolicity property is preserved as long as the null cones of the effective metrics governing the fluid characteristic modes remain strictly inside the spacetime light cone. Under some extra conditions, we can establish that our charged formulation can be consistently embedded in fully dynamical spacetimes for different gauge fixing for numerical relativity implementations.

As a physics-motivated application, Appendix~\ref{appendix:hybrid_eos} shows that a piecewise-polytropic equation of state with an ideal-gas thermal contribution satisfies the assumptions of Lemma~\ref{stability} and can realize the hypotheses of Claim~\ref{claim} in regimes typical of binary neutron star mergers.

Overall, the results presented here define a well-posed, causal, and thermodynamically consistent charged first-order dissipative theory in the BDNK framework, suitable for direct numerical implementation. Natural next steps include: (i) implementing the formulation in a flux-conservative first-order reduction (as in~\cite{Shum:2025jnl,Clarisse:2025lli}) to robustly handle discontinuities, (ii) exploring optimized frame selections adapted to specific EoS, and (iii) performing systematic tests in relativistic-astrophysics scenarios where charge and diffusion effects are expected to be relevant.

\section{Acknowledgments}

We thank Harry Shum for carefully reading the manuscript and for providing helpful comments and suggestions.
We thank Carlos Palenzuela for helpful discussions and comments. We thank Giorgio Torrieri for helpful comments on the manuscript.
This work was partially supported by the project PID2022-138963NB-I00, funded by the Spanish Ministry of
Science, Innovation and Universities (MCIN/AEI/10.13039/501100011033). 
We acknowledge the use of an AI-assisted writing tool for copy-editing and improving the presentation of the manuscript; all scientific results, derivations, and conclusions were developed by the authors.

\bibliography{ref}

\appendix
\section{Hyperbolicity without imposing 4-velocity normalization}
\label{appendix:non_normalized_u_hyperbolicity}

Lets now consider the principal matrix of Eq. \eqref{eq:principal_symbol} without imposing $u^\mu \delta u_\mu = 0$, as presented in \cite{Bemfica:2020zjp}. The contraction of Eq. \eqref{eq:evolution2} with $u^\mu$ now does not identically vanish anymore, and the perturbation vector can be expressed as:

\begin{equation}
    \delta u^\nu = a k^\nu + b u^{\nu} + c e_1^\nu + d e_2^\nu
\end{equation}

with $e_i^\nu u_\nu = e_i^\nu k_\nu = 0$ and $e_i^\nu e_{j,\nu}=\delta_{ij}$. The principal matrix of the perturbation expressed in this base is now:

\begin{equation}
    \textrm{det} \left( \mathfrak{N}^{\mu\nu}l_{\mu}l_{\nu} \right) = -(\rho \tau_Q \lambda^2 - \eta)^3 (Ax^3 + Bx^2 + C x + D)
\end{equation} 

with respect to the previous case we have now a degeneracy 3 for the shear mode $\lambda^2=\eta/\rho \tau_Q$. \\
The principal matrix in this basis is:

\begin{align}
   \mathfrak{N}^{\mu\nu}l_{\mu}l_{\nu} = \left[ \begin{array}{ccc|ccc}
             \lambda_\epsilon& \tau_n \lambda^2 + \lambda_n  & n(\tau_n + \tau_\mathcal{J}) \lambda & \tau_n n \lambda^2 & 0 & 0\\
            \tau_\epsilon \lambda^2 + \beta_{\epsilon} & \beta_n & \rho(\tau_\epsilon+\tau_Q) \lambda  & 0 & 0 & 0 \\
            (\tau_P+\beta_\epsilon) \lambda & \beta_n \lambda & \left( \rho \tau_P - \zeta - \frac{4}{3} \eta \right) + \rho \tau_Q \lambda^2 & \rho \tau_P - \zeta - \frac{4}{3} \eta  & 0 & 0 \\ \hline
            0 & 0 & 0 & -(\rho \tau_Q \lambda^2 - \eta) & 0 & 0 \\
            0 & 0 & 0 & 0 & \rho \tau_Q \lambda^2 - \eta & 0 \\
            0 & 0 & 0 & 0 & 0 &  \rho \tau_Q \lambda^2 - \eta  \\
        \end{array} \right]
\end{align}

It is easy to verify that the previous kernel vectors $\boldsymbol{r}^{(\textrm{sound})}$ remain unchanged, as the 4-th line always implies $b=0$. \\
Concerning $\boldsymbol{r^{(\textrm{shear})}}$, as $\rho \tau_Q \lambda^2 - \eta$=0, we have now 3 independent kernel vectors, the previous 2, plus the following one:

\begin{equation}
    \boldsymbol{r}^{(i)} = \left[ \begin{array}{c}
            F \\
            G \\
            u^\nu + a v^\nu
        \end{array} \right]
\end{equation}

with $a$, $F$, $G$ given by:

\begin{equation}
    \left[ \begin{array}{c}
            F \\
            G \\
            a
        \end{array} \right] = \left[ \begin{array}{ccc}
             \lambda_\epsilon v^2& \tau_n \lambda^2 + \lambda_n v^2 & n(\tau_n + \tau_\mathcal{J}) \lambda v^2 \\
            \tau_\epsilon \lambda^2 + \beta_{\epsilon} v^2 & \beta_n v^2 & \rho(\tau_\epsilon+\tau_Q) \lambda v^2 \\
            (\tau_P+\beta_\epsilon)\lambda v^2 & \beta_n \lambda v^2 & \left( \rho \tau_P - \zeta - \frac{4}{3} \eta \right) v^2 + \rho \tau_Q \lambda^2 v^2 
        \end{array} \right] ^{-1} \left[ \begin{array}{c}
            \tau_n n \lambda^2 \\
            0 \\
            \rho \tau_P - \zeta - \frac{4}{3} \eta
        \end{array} \right]
\end{equation}

which only admits one solution for $F$, $G$ and $a$ as the determinant of the upper left block of the principal matrix does not vanish, which is the case for the shear modes.
This means the system remains strongly hyperbolic also without imposing the normalization of $u^\mu$, the only consequence is the appearance of a new (unphysical) mode.

\section{The case $\beta_n=0$}
\label{appendix:betaniszero}
In this appendix we are going to spend some words on the case $\beta_n=0$. Such case, despite not being of particular physical interest, has the appealing mathematical property of making charge and energy-momentum weekly coupled in the principal part. The principal symbol can indeed be written as:

\begin{equation}
\mathfrak{N}^{\alpha\beta}l_{\alpha}l_{\beta}=\left[
\begin{array}
[c]{ccc}%
\lambda_{\epsilon}\Delta^{\alpha\beta} & \tau_{n}u^{\alpha}u^{\beta}%
+\lambda_{n}\Delta^{\alpha\beta} & n(\tau_{n}+\tau_{\mathcal{J}})\delta_{\nu
}^{(\alpha}u^{\beta)}\\
\tau_{\epsilon}u^{\alpha}u^{\beta}+\beta_{\epsilon}\Delta^{\alpha\beta} &
0 & \rho(\tau_{\epsilon}+\tau_{Q})u^{(\alpha
}\delta_{\nu}^{\beta)}\\
(\tau_{P}+\beta_{\epsilon})u^{(\alpha}\Delta^{\beta)\mu} & 0 & C_{\nu}^{\mu\alpha\beta}%
\end{array}
\right]  l_{\alpha}l_{\beta},%
\end{equation}
and the eigenvalue problem becomes the product of three independent eigenvalue problems
\begin{equation}
       \textrm{det}\left( \mathfrak{N}^{\mu\nu}l_{\mu}l_{\nu} \right) = \left( \rho \tau_Q \lambda ^2 - \eta \right)^2 (\tau_n \lambda^2 + \lambda_n) (A x^2 + B x + C) = 0
\end{equation}
with $x=\lambda^2$ and

\begin{align}
    A &= \rho \tau_\epsilon \tau_Q \\
    B &= - \tau_\epsilon \left( \beta_\epsilon \rho  + \zeta+\frac{4}{3}\eta \right) - \rho \tau_P \tau_Q \\
    C &= \beta_\epsilon \left(\rho \tau_P - \zeta - \frac{4}{3}\eta\right)
\end{align}

Notice the $A$, $B$ and $C$ do not depend in any way from the charge transport coefficients, and they reduce to those defined in \cite{Bemfica:2020zjp} for $\beta_n=0$. In addition to the sound channel velocities of \cite{Bemfica:2020zjp} with $\beta_n=0$, we have the usual shear mode whose velocity remains unchanged, and a mode associated with charge diffusion with velocity $\lambda= \pm \sqrt{-\lambda_n/ \tau_n}$. This particular choice constitutes a useful testbed for testing numerical code aimed at solving the evolution equations \eqref{eq:evolution1}-\eqref{eq:evolution3}, as it keeps the characteristics of the well tested model of \cite{Bemfica:2020zjp} unchanged.

It remains now the question whether it is possible to set $\beta_n=0$ without breaking any of the hyperbolicity, causality, entropy, and stability conditions. Using the definition of $\beta_n$ in Eq. \eqref{beta_2}, and the parameters defined in Eq. \eqref{eq:EOS_parameters}, the condition $\beta_n=0$ can be expressed as
\begin{equation}
    \hat{\sigma} (1-\omega) = \hat{\tau} (\alpha-1)
\end{equation}
substituting the definitions of $\alpha$ and $c_s^2$, and choosing $\hat{\tau}=(2+\delta)/(1-c_s^2)$ with $\delta>0$ to satisfy the frame condition of Eq. \eqref{eq:frame_constraints}, one can isolate $\hat{\sigma}$ and express it as
\begin{equation}
    \hat{\sigma} = \frac{1}{1-\omega} \frac{2+\delta}{1-p'_\epsilon + \nu} \frac{\nu}{p'_\epsilon - \nu} \leq \frac{1}{3} \label{eq:condition_for_sigma}
\end{equation}
where we have defined
\begin{equation}
    p'_\epsilon = \left. \frac{\partial P}{\partial \epsilon} \right|_n, \quad \nu = - \frac{n}{\rho} \left. \frac{\partial P}{\partial n} \right|\epsilon,
\end{equation}
and we imposed the last inequality in order to satisfy the frame conditions \eqref{eq:frame_constraints} for $\hat{\sigma}$. It is clear that, Eq. \eqref{eq:condition_for_sigma} admits a solution only when $\nu$ is small enough. We can give clear physical meaning to this result in the case of the ideal gas EoS. In this case we have \cite{Pandya:2022sff}
\begin{equation}
    \omega = \frac{mnP}{\epsilon \rho}, \quad p'_\epsilon = \Gamma-1, \quad \nu = \frac{mn}{\rho} (\Gamma-1)
\end{equation}
and the condition $\nu \ll 1$ translates into $\rho=\epsilon+P \gg mn$, i.e. the gas should be in a relativistic regime. This is reasonable as in this regime the rest-mass energy contribution brought by the charge density becomes negligible in the energy budget, hence the system's evolution becomes independent on the charge density.

\section{positivity of $s_2$}
\label{appendix:positivity_of_detJ(e,n)}

In chapter 8 of \cite{callen1993thermodynamics}  the local thermodynamic stability is motivated by requiring that the entropy be strictly concave, so that equilibrium is a local maximum of entropy. In our formulation the state is parametrized by $(\epsilon,n)$, and the same concavity requirement becomes strict concavity of the entropy density $s(\epsilon,n)$. In this appendix we show that this concavity condition is equivalent to $s_2>0$.

For this demonstration we will make use of the first principle of thermodynamics in differential form
\begin{equation}
    d\epsilon = T ds + \mu dn, \label{eq:first_principle}
\end{equation}
of the Gibbs-Duhem relation for the pressure
\begin{equation}
    dP = s dT + n d\mu, \label{eq:Gidd_Duhem}
\end{equation}
and  
\begin{equation}
    sT = \epsilon +P-\mu n,
\end{equation}
obtained by the combination of the previous two equations.

Starting from the first principle of thermodynamics Eq. \eqref{eq:first_principle}, we can write
\begin{equation}
    s_\epsilon = \frac{1}{T}, \quad s_n = - \frac{\mu}{T} \label{eq:dsde_dsdn}
\end{equation}
where the subscripts mean partial derivatives with respect to a certain variable. In particular, the second expression of Eq. \eqref{eq:dsde_dsdn}, implies
\begin{equation}    
\left. \frac{\partial(\mu/T)}{\partial \epsilon} \right|_n = - s_{n\epsilon}, \quad \left. \frac{\partial(\mu/T)}{\partial n} \right|_\epsilon = - s_{nn}, \label{eq:dmude_dmudn}
\end{equation}
Now, using the Gibbs-Duhem relation \eqref{eq:Gidd_Duhem} we obtain
\begin{equation}
    \left. \frac{\partial P}{\partial n} \right|_\epsilon = s T_n + n \mu_n, \quad \left. \frac{\partial P}{\partial \epsilon} \right|_n = s T_\epsilon + n \mu_\epsilon \label{eq:dPdn_dPde}
\end{equation}
Using Eqs. \eqref{eq:dsde_dsdn} and \eqref{eq:dmude_dmudn}, one can write
\begin{align}
    T_\epsilon = - T^2s_{\epsilon\epsilon}, \quad \mu_\epsilon = -T s_{\epsilon n} - \mu T s_{\epsilon\epsilon} \\
        T_n = - T^2s_{\epsilon n}, \quad \mu_n = -T s_{n n} - \mu T s_{\epsilon n} 
\end{align}
which substituted into Eq. \eqref{eq:dPdn_dPde}, gives
\begin{equation}
    \left. \frac{\partial P}{\partial n} \right|_\epsilon = -T^2 (s-ns_n) s_{\epsilon n} - n Ts_{n n}, \quad  \left. \frac{\partial P}{\partial \epsilon} \right|_n = -T^2 (s-ns_n) s_{\epsilon\epsilon} - n Ts_{\epsilon n}. \label{eq:dPdn_dPde_in_s}
\end{equation}

Finally, combining Eqs. \eqref{eq:dPdn_dPde_in_s} and \eqref{eq:dmude_dmudn}, we obtain
\begin{equation}
    T\left(\left. \frac{\partial (\mu/T)}{\partial n} \right|_\epsilon \left. \frac{\partial P}{\partial \epsilon} \right|_n - \left. \frac{\partial (\mu/T)}{\partial \epsilon} \right|_n \left. \frac{\partial P}{\partial n} \right|_\epsilon \right) = \frac{T(s-ns_n)(s_{nn}s_{\epsilon\epsilon}-s_{n\epsilon}^2)}{s_{\epsilon}^2} > 0,
\end{equation}
since
\begin{equation}
    T(s-ns_n) = T s + \mu n = \epsilon+P > 0,
\end{equation}
and the concavity of the entropy implies
\begin{equation}
    s_{nn}s_{\epsilon\epsilon}-s_{n\epsilon}^2 = \textrm{det}\left(\frac{\partial^2 s}{\partial^2 (\epsilon, n )} \right) >0. 
\end{equation}

Concavity of the entropy is  a property that must be satisfied by every stable thermodynamical system. For this reason we assume the EoS we are employing to satisfy this property as well.

\section{Piecewise polytropic EoS with thermal contribution}
\label{appendix:hybrid_eos}
In this appendix, we show that the piecewise polytropic EoS with ideal-gas thermal contribution, as introduced in \cite{Shibata:2005ss}, can be employed in the formulation we investigated in this article, since it satisfies all the conditions of Lemma \ref{stability} (under some simple assumptions on the EoS). We show this by computing the explicit expression of $\left. \partial P / \partial\epsilon \right|_n$, $\alpha$ and $\omega$ and verifying that they can satisfy the hypothesis of Claim \ref{claim} for a reasonable choice of EoS parameters and in a range of $n$ and $\epsilon$ usually encountered in binary neutron star merger simulations. This type of EoS, indeed, is still widely used in neutron star simulations, and it was the first EoS to introduce thermal effect in neutron star mergers simulations. Despite not being the current state of the art for simulations involving detailed microphysics, it remains an important milestone towards the simulation of realistic neutron star mergers in BDNK theory.

We start by defining the pressure as a cold contribution depending on the charge density only (in this case the baryon charge) and a thermal contribution depending on the energy density as well: 
\begin{equation}
    P(n,\epsilon) = P_{cold}(n) + P_{th}(n,\epsilon).
\end{equation}

For the cold contribution we use a piecewise-polytropic fit, so that
\begin{equation}
    P_{\rm cold}(n) = K_i\, \rho_b^{\Gamma_i}.
\end{equation}
Here the baryon mass density is defined as $\rho_b \equiv m_b n$, where $n$ is the baryon number density and $m_b$ is a reference baryon mass. The index $i$ labels the density interval $\rho_b \in (\rho_{b,i}, \rho_{b,i+1})$; the constants $K_i$ and $\Gamma_i$ are constant within each interval, but take different values in different intervals.
$\Gamma_i$ is obtained from a fit of nuclear cold EoS, while $K_i$ is set to impose continuity (with the exception of $K_1$), see \cite{Read:2008iy} for more detail.
The thermal contribution is modeled with an ideal-gas EoS,
\begin{equation}
    P_{th}(n,\epsilon) = (\Gamma_{th}-1)m_b n e_{th}(n,\epsilon) \equiv n T \label{eq:ideal_gas_pressure}
\end{equation}
where $\Gamma_{\rm th}$ is the thermal adiabatic constant and $e_{\rm th}(n,\epsilon)$ is the thermal internal energy per baryon in units of the baryon rest mass.

To compute $e_{\rm th}$, we split the total energy density into cold and thermal parts,
\begin{equation}
    \epsilon = \rho_b \bigl[e_{\rm cold}(n) + e_{\rm th}(n,\epsilon)\bigr],
\end{equation}
with $\rho_b = m_b n$. The cold contribution $e_{\rm cold}(n)$ includes both the rest-mass energy and the fermion degeneracy energy, and is constructed using the first law of thermodynamics at zero temperature,
\begin{equation}
    d e_{\rm cold}(n) = - P_{\rm cold}(n)\, d\!\left(\frac{1}{\rho_b}\right). \label{eq:e_cold}
\end{equation}

For the piecewise-polytropic fit in the interval $\rho_b \in (\rho_{b,i},\rho_{b,i+1})$, this yields
\begin{equation}
    e_{\rm cold}(n)
    = a_i + \frac{K_i}{\Gamma_i-1}\,\rho_b^{\Gamma_i-1},
\end{equation}
where the constants $a_i$ are chosen to enforce continuity across intervals. In this convention, $a_i$ also contains the rest-mass contribution, so that $a_i\gtrsim1$.

Finally, the thermal energy per baryon is obtained from the above decomposition as
\begin{equation}
    e_{\rm th}(n,\epsilon) = \frac{\epsilon}{\rho_b} - e_{\rm cold}(n).
\end{equation}

\begin{lemma}
    Consider a piecewise-polytropic EoS parametrized in the i-th density interval by the constants $\Gamma_i$, $K_i$ and $a_i$, together with a thermal contribution modeled as an ideal gas, as given in Eq. \eqref{eq:ideal_gas_pressure} and parametrized by $\Gamma_{\rm th}$. Assume that the standard ideal fluid causality condition $P < \epsilon$ holds, and that $c_s^2$ is limited by Eq. \eqref{eq:eos_range}, with an upper bound $c_{s,\rm max}^2$. Then, the system of equations \eqref{eq:evolution1}-\eqref{eq:evolution3} with the constitutive relations \eqref{eq:A}-\eqref{eq:J} is strongly hyperbolic, causal, and exhibits non-negative entropy production up to $\mathcal{O}(\nabla^3)$, provided that the following conditions are satisfied:
\begin{equation}
    e_{th}(n,\epsilon)<\frac{a_i}{K_i\rho_b^{\Gamma_i}}+ \frac{\Gamma_i}{\Gamma_i-1} \frac{1}{\rho_b}. %\implies
     %\epsilon < \frac{a_i}{K_i\rho_b^{\Gamma_i-1}} + \frac{\Gamma_i}{\Gamma_i-1} + \rho_b a_i + \frac{K_i}{\Gamma_i-1} \rho_b^{\Gamma_i}
     \label{eq:constraint_omegabigger0}
\end{equation}
\begin{equation}
    2c_{s,\rm max}^2\geq\Gamma_{th}> 1 + \frac{K_i\Gamma_i\rho_b^{\Gamma_i-1}}{a_i + K_i\frac{\Gamma_i}{\Gamma_i-1}\rho_b^{\Gamma_i-1}}
    \label{eq:constraint_alphabigger1}
\end{equation}

\end{lemma}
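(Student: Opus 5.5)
The plan is to reduce the statement to Claim~\ref{claim}: the Claim's frame already gives strong hyperbolicity, causality, linear stability and non-negative entropy production for every EoS in the domain \eqref{eq:eos_range} with $\alpha>1$, and entropy production holds by Lemma~\ref{Lemma:entropy} once $\zeta,\eta,\sigma+\sigma_0\ge0$. So it suffices to compute the three EoS quantities $p'_\epsilon\equiv\partial P/\partial\epsilon|_n$, $\alpha\omega$ and $c_s^2$ for the hybrid EoS, and show they fall in \eqref{eq:eos_range}, with $\alpha>1$ and $0<\omega$. The two displayed conditions should be exactly $\omega>0$ and $\alpha>1$.

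\textbf{Step 1: thermodynamic derivatives.} In the $i$-th interval, write $P=K_i\rho_b^{\Gamma_i}+(\Gamma_{th}-1)(\epsilon-\rho_b e_{cold}(n))$. This gives
\[
p'_\epsilon=\Gamma_{th}-1,
\qquad
\left.\frac{\partial P}{\partial n}\right|_\epsilon=m_b\Bigl[K_i\Gamma_i\rho_b^{\Gamma_i-1}-(\Gamma_{th}-1)\bigl(a_i+\tfrac{K_i\Gamma_i}{\Gamma_i-1}\rho_b^{\Gamma_i-1}\bigr)\Bigr],
\]
using $d(\rho_b e_{cold})/d\rho_b=e_{cold}+P_{cold}/\rho_b$. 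From these, $c_s^2=p'_\epsilon+(n/\rho)\,\partial_nP|_\epsilon$ and $\alpha=p'_\epsilon/c_s^2$. Then $\alpha>1$ is equivalent to $\partial_nP|_\epsilon<0$, which rearranges exactly to the right inequality of \eqref{eq:constraint_alphabigger1}.

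\textbf{Step 2: $\omega$.} Compute $\kappa_\epsilon,\kappa_n$ from $\mu/T$. Since $P_{th}=nT$, we have $T=(\Gamma_{th}-1)m_b e_{th}$. Obtain $\mu$ from $Ts=\epsilon+P-\mu n$ together with the first law, or more directly from Appendix~\ref{appendix:positivity_of_detJ(e,n)}, using $s_\epsilon=1/T$ and $s_n=-\mu/T$. The sign of $\omega=\kappa_s/\kappa_\epsilon$ should reduce to a linear inequality in $e_{th}$, which I expect to be \eqref{eq:constraint_omegabigger0}.

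\textbf{Step 3: remaining bounds and closing.} Show the remaining bounds $\omega<1/2$ and $\alpha\omega\le 1/2$ from $P<\epsilon$, mimicking the ideal-gas argument of \cite{Pandya:2022sff}. Show that $p'_\epsilon=\Gamma_{th}-1\in[10^{-6},1]$ follows from the left inequality of \eqref{eq:constraint_alphabigger1}, namely $\Gamma_{th}\le 2c_{s,\max}^2\le1.8$. The bound on $c_s^2$ is assumed. With these, invoke Claim~\ref{claim}, choosing $\hat\tau=(2+\delta)/(1-c_s^2)$ and $\hat\sigma,\tilde\sigma$ in the admissible range; Lemma~\ref{Lemma:hip:cau} and Lemma~\ref{Lemma:entropy} then give the conclusion.

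The main obstacle is Step 2. The chemical potential of this EoS is awkward, because the entropy per baryon depends on $e_{th}$ and $n$ through the ideal-gas part. Obtaining $\partial(\mu/T)$ in closed form, and proving $\omega<1/2$ and $\alpha\omega\le1/2$ rather than only $\omega>0$, may require extra use of $P<\epsilon$. I would first write $s$ explicitly as $s=n\ln\bigl(e_{th}^{1/(\Gamma_{th}-1)}/n\bigr)+\text{const}\cdot n$ and differentiate directly.
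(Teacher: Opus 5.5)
Your route is the paper's own. You compute $\partial P/\partial\epsilon|_n$ and $\partial P/\partial n|_\epsilon$ (your Step~1 agrees with the paper line by line), obtain $s$ and $\mu/T$, form $\kappa_\epsilon,\kappa_n,\omega$, and check that $(\partial P/\partial\epsilon|_n,\alpha\omega,c_s^2)$ lies in \eqref{eq:eos_range} with $\alpha>1$. You then invoke Claim~\ref{claim} and Lemma~\ref{Lemma:entropy}. The argument closes, but two of the intermediate results you anticipate are not what the computation actually gives.

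\textbf{Step~2 will not reproduce \eqref{eq:constraint_omegabigger0}.} Write $h_i\equiv a_i+\frac{K_i\Gamma_i}{\Gamma_i-1}\rho_b^{\Gamma_i-1}$.
\begin{itemize}
\item One gets $\kappa_\epsilon=-\frac{\rho^2}{n^2}\bigl(1+h_i/e_{th}\bigr)<0$.
\item $\kappa_n$ contains $(\Gamma_i+1)P_{cold}$, where the paper writes $2P_{cold}$.
\item Altogether $\kappa_s=\frac{\rho}{n}\,\partial P/\partial n|_\epsilon$. A quick check: $nT\,d(\mu/T)=dP-\rho\,dT/T$, and $(n\partial_n+\rho\partial_\epsilon)\ln T=\Gamma_{th}-1$ for $T=(\Gamma_{th}-1)m_be_{th}$.
\end{itemize}
Hence
\[
\omega=\frac{e_{th}\left[(\Gamma_{th}-1)\rho_b h_i-\Gamma_i P_{cold}\right]}{\rho\,(h_i+e_{th})}.
\]
So $\omega>0$ is the same condition as $\partial P/\partial n|_\epsilon<0$, i.e.\ as $\alpha>1$, i.e.\ the right inequality of \eqref{eq:constraint_alphabigger1}.

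The paper reads \eqref{eq:constraint_omegabigger0} off an expression for $\omega$ whose numerator, $h_i-e_{th}P_{cold}$, is not dimensionally homogeneous. Hypothesis \eqref{eq:constraint_omegabigger0} is therefore never needed, and you should not try to force your computation to produce it. The bound $\omega<P_{th}/\rho<1/2$ does follow from $P<\epsilon$, as you planned.

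\textbf{$\alpha\omega<1/2$ does not follow from $P<\epsilon$.} For the pure ideal gas, $\alpha\omega=\frac{\Gamma_{th}-1}{\Gamma_{th}(1+e_{th})}$. At $\Gamma_{th}=3$, $e_{th}=0.1$ this exceeds $1/2$, although $P<\epsilon$ holds there. What you need is $\Gamma_{th}\le 2$, which the left inequality of \eqref{eq:constraint_alphabigger1} supplies; the paper's ``$P_{th}<\epsilon_{th}$'' is the same fact.

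Explicitly, with $\chi\equiv\Gamma_{th}-1$ and $p_c\equiv\Gamma_iP_{cold}/\rho_b$,
\[
\alpha\omega=\frac{\chi e_{th}(\chi h_i-p_c)}{(\chi\Gamma_{th}e_{th}+p_c)(h_i+e_{th})}.
\]
Then $\frac12-\alpha\omega$ has the positive denominator $2(\chi\Gamma_{th}e_{th}+p_c)(h_i+e_{th})$. Its numerator is $\chi(2-\Gamma_{th})e_{th}h_i+\chi\Gamma_{th}e_{th}^2+p_ch_i+(2\Gamma_{th}-1)p_ce_{th}$, which is positive for $1<\Gamma_{th}\le 2$.

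\textbf{Lower edge of the domain.} As in the paper, the bound $\partial P/\partial\epsilon|_n\ge 10^{-6}$ in \eqref{eq:eos_range} is not implied by the hypotheses, which give only $\Gamma_{th}>1$. It has to be assumed.
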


The combination of these three constraints can be easily satisfied in the regimes encountered in neutron star mergers. In natural units, indeed, the nuclear saturation density is of the order of $\rho_b\sim10^{-3}$, and the EoS parameters lie in typical ranges of $K_i\lesssim10^2$, $1<\Gamma_i \lesssim 3$ (and $a_i\gtrsim1$ to include the baryon rest mass energy). This makes the left hand side of Eq. \eqref{eq:constraint_alphabigger1} of the order $\sim 10^{-1}$ at maximum for most of the EoS in the literature (see \cite{Read:2008iy} for a summary), making the constraint on $\Gamma_{th}$ not very restrictive. Finally, the condition \eqref{eq:constraint_omegabigger0} is also not very restrictive, as the right hand side can reach minima of $\sim 10^{3}$ in the high density regime, allowing for an extremely high amount of thermal energy. Such high energy regime is never reached in any binary neutron star merger simulation. 

The assumption that the speed of sound remains within the range of Eq.~\eqref{eq:eos_range} may break down at extremely high densities due to the contribution of the cold component. This typically occurs when the fitting formula for $P_{\rm cold}$ is extrapolated beyond its range of validity. This issue can be remedied by introducing an additional density interval in which $\Gamma_i$ is calibrated so as to enforce causality.

\begin{proof}
We start by computing the derivatives of pressure with respect to $n$ and $\epsilon$, we notice that
\begin{equation} \frac{\partial e_{th}}{\partial \epsilon} = \frac{1}{m_b n}, \quad \frac{\partial
    e_{th}}{\partial n} = - \frac{\epsilon}{m_b n^2} - m_b\frac{d e_{cold}}{d \rho_b}
\end{equation}
from which we can easily get the pressure derivatives as
\begin{align}
    \left. \frac{\partial P}{\partial \epsilon} \right|_n &= \Gamma_{th}-1 \label{eq:dPdepsilon_pwp}\\
    \left. \frac{\partial P}{\partial n} \right|_\epsilon &= m_b K_i \Gamma_i \rho_b^{\Gamma_i-1} - (\Gamma_{th}-1)m_b \left[ a_i + K_i\frac{\Gamma_i}{\Gamma_i-1} \rho_b ^{\Gamma_i-1}\right]  \label{eq:dPdn_pwp}
\end{align}

These results match those of \cite{Pandya:2022sff} when the EoS reduces to an ideal gas, i.e. when $a_i=1$ and $K_i=0$. Following the conditions found \eqref{eq:eos_range}, we obtain that $\Gamma_{th}\in(1,2]$, as for the ideal gas case.

%%%%%%%%%%%%%%%%%%5
Using Eqs. \eqref{eq:dPdepsilon_pwp} and \eqref{eq:dPdn_pwp}, the sound speed can be exressed as
\begin{equation}
c_s^2=\frac{\Gamma_i P_{\rm cold}+\Gamma_{\rm th}P_{\rm th}}{\epsilon+P}\, .
\end{equation}
In the ultra-relativistic regime, where the thermal contribution dominates so that $P\simeq P_{\rm th}$, this reduces to
\begin{equation}
c_s^2 \simeq \Gamma_{\rm th}\,\frac{P}{\epsilon+P}\, .
\end{equation}
For physically admissible states one has $P <\epsilon $, implying
$\frac{P}{\epsilon+P}\le \frac{1}{2}$.
Hence $c_s^2$ is bounded by $c_s^2\le \Gamma_{\rm th}/2$. A sufficient condition to keep $c_s^2$ within the range of Eq.~\eqref{eq:eos_range} is therefore
\begin{equation}
1<\Gamma_{\rm th}<2\,c_{s,\max}^2\, .
\end{equation}

For computing the entropy and the baryon chemical potential, we follow \cite{Pandya:2022sff}, employing the first principle of thermodynamics written in the following form
\begin{equation}
    de = T d \left( \frac{s}{\rho_b} \right) - P d \left( \frac{1}{\rho_b} \right), \quad de = de_{cold} + de_{th}.
\end{equation}
Using $de_{cold}$ from Eq. \eqref{eq:e_cold}, we can get rid of the cold component and rewrite this as 
\begin{equation}
    de_{th} = T d \overline{s} - P_{th} d \left( \frac{1}{\rho_b} \right)
\end{equation}
with $\overline{s}=s/\rho_b$ being the entropy per baryon mass. Using the expression of $P_{th}(n,\epsilon)$, and expanding the differential $d \left( \frac{1}{\rho_b} \right)$, one can obtain the following differential relation
\begin{equation}
    m_bd\overline{s} = \frac{1}{\Gamma_{th}-1} \frac{de_{th}}{e_{th}} - \frac{dn}{n}
\end{equation}
that can be integrated to obtain

\begin{equation}
    s = n \left[ \frac{1}{\Gamma_{th}-1} \ln \left( \frac{e_{th}(n,\epsilon)}{n^{\Gamma_{th}-1}} \right) + const. \right].
\end{equation}
This expression is the same as in \cite{Pandya:2022sff}, since the entropy is only determined by the thermal part of the EoS, which is the same in both cases.

Once we have computed the entropy, we can use the identity

\begin{equation}
    n \mu = \epsilon + P - Ts
\end{equation}

to obtain the baryon chemical potential. Plugging the expressions for $T$, $s$ and $P$ we find

\begin{equation}
    \frac{\mu}{T} = \frac{1}{(\Gamma_{th}-1)e_{th}(n,\epsilon)} \left( a_i + K_i \frac{\Gamma_i}{\Gamma_i-1} \rho_b^{\Gamma_i-1} \right) + \frac{1}{\Gamma_{th}-1}\left[ \Gamma_{th} - \ln \left( \frac{e_{th}(n,\epsilon)}{n^{\Gamma_{th}-1}} \right) + const. \right] \label{eq:muoverT}
\end{equation}

We can now take derivatives of Eq. \eqref{eq:muoverT} and obtain the parameters

\begin{align}
    \kappa_\epsilon &= -\frac{\rho^2}{n^2} \left[ \frac{1}{e_{th}(n,\epsilon)} \left( a_i + K_i \frac{\Gamma_i}{\Gamma_i-1} \rho_b^{\Gamma_i-1}\right) + 1 \right],\\
    \kappa_n &= \frac{\rho}{n^2}\frac{\epsilon + P_{cold}(n)}{e_{th}(n,\epsilon)} \left( a_i + K_i\frac{\Gamma_i}{\Gamma_i-1} \rho_b^{\Gamma_i-1} \right) + \frac{2 \rho}{n^2}P_{cold}(n) + \frac{\rho \epsilon}{n^2} + \rho \frac{P_{th}(n,\epsilon)}{n^2},
\end{align}
where again we defined $\rho=\epsilon+P$, and their sum
\begin{equation}
     \kappa_s = \kappa_\epsilon + \kappa_n =-\frac{\rho}{n^2} (\Gamma_{th}-1)\rho_b\left( a_i + K_i\frac{\Gamma_i}{\Gamma_i-1} \rho_b^{\Gamma_i-1} \right) + \frac{\rho}{n^2}P_{cold}(n).
\end{equation}

The quantities $\kappa_n$, $\kappa_\epsilon$, $\kappa_s$ match those of \cite{Pandya:2022sff} for $K_i=0$ and $a_i=1$, i.e. when the cold contribution vanishes and we get back to the ideal gas EoS.
 Finally we can write the condition

\begin{equation}
    0<\omega = \frac{\kappa_s}{\kappa_\epsilon} = \frac{P_{th}(n,\epsilon)}{\rho} \times \frac{(a_i+K_i\frac{\Gamma_i}{\Gamma_i-1}\rho_b^{\Gamma_i-1})-e_{th}(n,\epsilon)P_{cold}(n)}{(a_i+K_i\frac{\Gamma_i}{\Gamma_i-1}\rho_b^{\Gamma_i-1}) + e_{th}(n,\epsilon)} < \frac{1}{2} 
\end{equation}
the condition $\omega<1/2$ is automatically satisfied by the causality condition $P<\epsilon$. We only need to impose $\omega>0$, which translates into
\begin{equation}
    e_{th}(n,\epsilon)<\frac{a_i}{K_i\rho_b^{\Gamma_i}}+ \frac{\Gamma_i}{\Gamma_i-1} \frac{1}{\rho_b}. 
\end{equation}

Then we need to impose the condition
\begin{align}
    &\alpha=\frac{1}{c_s^2}\left.\frac{\partial P}{\partial \epsilon}\right|_n = \frac{\left. \frac{\partial P}{ \partial \epsilon} \right|_n}{\left. \frac{\partial P}{ \partial \epsilon} \right|_n + \frac{n}{\rho}\left. \frac{\partial P}{ \partial n} \right|_\epsilon} > 1,
\end{align}
this is equivalent to imposing $\left. \partial P / \partial n \right|_\epsilon<0$, which, using Eq. \eqref{eq:dPdn_pwp}, can be expressed as

\begin{equation}
    \Gamma_{th}-1 >  \frac{K_i\Gamma_i\rho_b^{\Gamma_i-1}}{a_i + K_i\frac{\Gamma_i}{\Gamma_i-1}\rho_b^{\Gamma_i-1}}
    \label{eq:Gamma_th_bound}
\end{equation}

The last condition we need to impose is

\begin{align}
    \alpha \omega = \frac{P_{th}(n,\epsilon)}{\rho + \frac{n}{\Gamma_{th}-1}\left. \frac{\partial P}{\partial n}\right|_\epsilon} \times \frac{(a_i+K_i\frac{\Gamma_i}{\Gamma_i-1}\rho_b^{\Gamma_i-1})-e_{th}(n,\epsilon)P_{cold}(n)}{(a_i+K_i\frac{\Gamma_i}{\Gamma_i-1}\rho_b^{\Gamma_i-1}) + e_{th}(n,\epsilon)} < \frac{1}{2}
\end{align}

the second factor of this expression is already positive and smaller than 1, so we only need to impose the first factor to be smaller than $1/2$, which translates into

\begin{align}
    P_{th}(n,\epsilon) < \epsilon_{th} + \frac{K_i\Gamma_i\rho_b^{\Gamma_i}}{\Gamma_{th}-1}
    \label{eq:constraint_alphaomega_smallehalf}
\end{align}

which is always satisfied since $P_{th}<\epsilon_{th}$.  

Finally we can confirm that 

\begin{equation}
   s_2 = T\left( \left. \frac{\partial P}{\partial \epsilon} \right|_n  \left. \frac{\partial (\mu/T)}{\partial n} \right|_\epsilon - \left. \frac{\partial P}{\partial n} \right|_\epsilon  \left. \frac{\partial (\mu/T)}{\partial \epsilon} \right|_n \right) = -\kappa_\epsilon \frac{c_s^2}{\rho} \left( 1 - \alpha \omega\right) > 0 \quad \textrm{(since $\kappa_\epsilon<0$)},
\end{equation}
as required by hyperbolicity and thermodynamic stability conditions, and assumed throughout the article.

In summary, the only three conditions that need to be imposed for having strong hyperbolicity, causality and stability are the upper bound on $e_{th}(n,\epsilon)$ given by Eq. \eqref{eq:constraint_omegabigger0}, the lower bound on $\Gamma_{th}$ given by Eq. \eqref{eq:Gamma_th_bound}, and the requirement $\Gamma_{th}\in(1,2c_{s,\rm max}^2)$ coming from Eq. \eqref{eq:dPdepsilon_pwp} and Eq. \eqref{eq:eos_range}. The last two together giving Eq. \eqref{eq:constraint_alphabigger1}
\end{proof}

\end{document}